% This is samplepaper.tex, a sample chapter demonstrating the
% LLNCS macro package for Springer Computer Science proceedings;
% Version 2.21 of 2022/01/12
%
\documentclass[runningheads]{llncs}
\usepackage[T1]{fontenc}
% T1 fonts will be used to generate the final print and online PDFs,
% so please use T1 fonts in your manuscript whenever possible.
% Other font encondings may result in incorrect characters.
%
\usepackage{cite}
\usepackage{amsmath,amssymb,amsfonts}
\usepackage{algorithmic}
\usepackage{graphicx}
\usepackage{textcomp}
\usepackage{xcolor}
\usepackage{multirow}
\usepackage{mathpartir}

\newcommand{\blind}[1]{}

%%%%%%%%%%%%%%%%%%%%%%%%%%%%%%%%%%%%%%%%%%%%%%%%%%%%%%%%%%%%%%%%%%%%%%
% Greek symbols
\renewcommand{\phi}{\varphi}
\renewcommand{\epsilon}{\varepsilon}

% letters
\newcommand{\maji}[1]{\ensuremath{\mathbb{#1}}}
\newcommand{\bigfun}[1]{\ensuremath{\mathcal{#1}}}

\newcommand{\R}{{\maji{R}}}
\newcommand{\Rnonneg}{{\maji{R}}_{\geq 0}}
\newcommand{\Rpos}{{\maji{R}}_{> 0}}
\newcommand{\Z}{{\maji{Z}}} 
 
\newcommand{\Znonneg}{\maji{Z}_{\geq 0}}
\newcommand{\N}{\Znonneg}

\newcommand{\OOO}{\maji{O}}
\newcommand{\AAA}{\mathcal{A}}
\newcommand{\BBB}{\mathcal{B}}

\newcommand{\FFF}{\mathcal{F}}
\newcommand{\GGG}{\bigfun{G}}
\newcommand{\LLL}{\mathcal{L}}
\newcommand{\SSS}{{\bigfun{S}}}
\newcommand{\PPP}{\mathcal{P}}

% arrows

\newcommand{\xto}[1]{\xrightarrow{#1}}

% sets
% \newcommand{\powerset}{{\ensuremath{\wp}}}

% signals

% models

% runs and trajectories and plays

\newcommand{\Traj}{\mathit{Traj}}

% control

\newcommand{\ini}{\textnormal{in}}

% logic
\DeclareMathOperator{\AP}{AP}

\newcommand{\Coloneqq}{\mathrel{\mathop{::}=}}
\newcommand{\gramor}{\,|\,}

\DeclareMathOperator{\Next}{\bigcirc}
\DeclareMathOperator{\Until}{U}
\DeclareMathOperator{\Release}{R}
\DeclareMathOperator{\Finally}{\Diamond}
\DeclareMathOperator{\Globally}{\Box}

\newcommand{\chopping}[2]{\ensuremath{[{#1}]_{#2}}}
\newcommand{\actualchopping}[2]{\ensuremath{[{#1}]_{#2}}} 
\DeclareMathOperator{\sub}{sub}

% discreization

% misc

% Games

%\DeclareMathOperator{\FinP}{FinP}
%\DeclareMathOperator{\FinPi}{\FinP_1}
%\DeclareMathOperator{\FinPii}{\FinP_2}
%\DeclareMathOperator{\InfP}{InfP}

% propositions

%%%%%%%%%%%%%%%%%%%%%%%%%%%%%%%%%%%%%%%%%%%%%%%%%%%%%%%%%%%%%%%%%%%%%%

\usepackage{amsmath}

\usepackage{amsthm}
\usepackage{amssymb} 
\usepackage{amsfonts}
\usepackage{stmaryrd}

\newtheorem{asm}{Assumption}
% \newtheorem{ex}{Example}
% \newtheorem{remark}{Remark}
% \newtheorem{problem}{Problem} 

%%%%%%%%%%%%%%%%%%%%%%%%%%%%%%%%%%%%%%%%%%%%%%%%%%%%%%%%%%%%%%%%%%%%%%
% \newcommand{\switchversion}[2]{#1}
\newcommand{\switchversion}[2]{#2}

% math
\newcommand{\set}[1]{\left\{{#1}\right\}}
\newcommand{\setcomp}[2]{\left\{{#1}\,\middle|\,{#2}\right\}}

% GBA construction

\DeclareMathOperator{\cs}{cs}

% newly added

\usepackage{todonotes}
\usepackage{xcolor, soul}
\usetikzlibrary{patterns,shapes,snakes}
\usepackage{tikz}
\tikzstyle{block} = [draw, fill=white, rectangle, 
    minimum height=3em, minimum width=6em]
\tikzstyle{sum} = [draw, fill=white, circle, node distance=1cm]
\tikzstyle{input} = [coordinate]
\tikzstyle{output} = [coordinate]
\tikzstyle{pinstyle} = [pin edge={to-,thin,black}]
\usetikzlibrary{decorations.pathmorphing}
\usetikzlibrary{automata, positioning}
\tikzset{snake it/.style={decorate, decoration=snake}}
\usetikzlibrary{backgrounds}

\usepackage[linesnumbered,ruled,vlined]{algorithm2e}
\usepackage{hyperref}
\usepackage{cleveref}
\crefname{algorithm}{Alg.}{Algs.}
\newcommand{\abs}{\gamma}
\newcommand{\ABZESN}{\text{$\mathit{AP}$-observation}}

\usepackage{mathabx}
\usepackage{array}
\usepackage{orcidlink}

\newcommand{\infinitynorm}[1]{\|{#1}\|_\infty }
\newcommand{\absolutevalue}[1]{\lVert{#1}\rVert}

\newcommand{\apgreen}{\ensuremath{\mathit{g}}}
\newcommand{\apred}{\ensuremath{\mathit{r}}}
\newcommand{\apcyan}{\ensuremath{\mathit{c}}}
\newcommand{\apblue}{\ensuremath{\mathit{b}}}
\newcommand{\appink}{\ensuremath{\mathit{p}}}

\newcommand{\strat}{\pi}
\newcommand{\GameF}{F_{g}}

% Used for displaying a sample figure. If possible, figure files should
% be included in EPS format.
%
% If you use the hyperref package, please uncomment the following two lines
% to display URLs in blue roman font according to Springer's eBook style:
%\usepackage{color}
%\renewcommand\UrlFont{\color{blue}\rmfamily}
%\urlstyle{rm}
%

% \usepackage{hyperref}
% \let\svthefootnote\thefootnote
% \newcommand\freefootnote[1]{%
%   \let\thefootnote\relax%
%   \footnotetext{#1}%
%   \let\thefootnote\svthefootnote%
% }

\begin{document}
\title{\switchversion{\ABZESN{} Automata for Abstraction-based Verification of Continuous-time Systems}{\ABZESN{} Automata for Abstraction-based Verification of Continuous-time Systems (Extended Version)}%
%\blind{
\thanks{\switchversion{S. Pruekprasert is supported by JSPS KAKENHI Grant
Numbers JP21K14191 and JP22KK0155 and C. Eberhart is supported
by JSPS KAKENHI Grant Number JP25H00446.
This work was partly done while S. Pruekprasert was %affiliated  
with the National Institute of Advanced Industrial Science and Technology, Tokyo, Japan, and C. Eberhart was %affiliated
with the National Institute of Informatics, Tokyo, Japan.
\\
Both authors have contributed equally.}{S. P. is supported by JSPS KAKENHI Grant
Numbers JP21K14191 and JP22KK0155 and C. E. %is supported
by JSPS KAKENHI Grant Number JP25H00446.
This work was partly done while S. P. was %affiliated  
with the National Institute of Advanced Industrial Science and Technology, Tokyo, Japan, and C. E. %was %affiliated
with the National Institute of Informatics, Tokyo, Japan.
% \\
Both authors have contributed equally.}
}}
%}
%
%\titlerunning{Abbreviated paper title}
% If the paper title is too long for the running head, you can set
% an abbreviated paper title here
%

\titlerunning{\ABZESN{} Automata for Abstraction-based Verification}

\author{
% Anonymous Authors
Sasinee Pruekprasert\inst{1}\orcidlink{0000-0002-5929-9014}
\and
Clovis Eberhart\inst{2}\orcidlink{0000-0003-3009-6747}
\institute{The University of Tokyo, Tokyo, Japan
\email{spruekprasert@g.ecc.u-tokyo.ac.jp}
\and
Tohoku University, Sendai, Japan
\email{eberhart.clovis.d1@tohoku.ac.jp}}
}
% \authorrunning{Anonymous Authors}
\authorrunning{S. Pruekprasert and C. Eberhart}

\maketitle              % typeset the header of the contribution
\begin{abstract}
A key challenge in abstraction-based verification and control under complex specifications such as Linear Temporal Logic (LTL) is that abstract models retain significantly less information than their original systems.
This issue is especially true for continuous-time systems, where the system state trajectories are split into intervals of discrete actions, and satisfaction of atomic propositions is abstracted to a whole time interval.
%\ssc{
% To tackle this challenge, this work introduces a novel translation of LTL specifications to a particular type of Büchi automata, which we call \ABZESN{} automata, specifically designed for abstraction-based verification under LTL specifications.
To tackle this challenge, this work introduces a novel translation from LTL specifications to \ABZESN{} automata, 
a particular type of Büchi automata specifically designed for abstraction-based verification.
% To tackle this challenge, this work introduces \ABZESN{} automata, a type of Büchi automata 
% specifically designed for abstraction-based verification %and symbolic control 
% under LTL specifications.
% ABZESN Automata capture the abstract properties of atomic propositions along system trajectories, providing an innovative approach to handling complex specifications.
% 6-LTL, a logic that abstracts the satisfaction of LTL formulas on a time interval to a set of six possible patterns.
% In order to verify LTL formulas on concrete systems, we give a reduction of 6-LTL to ABZESN automata, a novel construction that mirrors the translation from LTL to Büchi automata.
Based on this automaton, we present a game-based verification algorithm played
between the system and the environment,
% between two players, representing angelic and demonic types of nondeterminism,
and an illustrative example for abstraction-based system verification under several LTL specifications. 
% We also present a verification algorithm and an illustrative example for abstraction-based system verification under LTL specification utilizing the proposed structure.
\keywords{Linear temporal logic \and Verification \and Automata \and  Abstraction
\and Continuous-time system \and Symbolic control
}
%}
\end{abstract}

% \freefootnote{${}^*$ Equal contribution.}

% \todo[inline]{change the abstract}

\section{Introduction}

The growing complexity of engineered physical systems has increased the need for formal methods that can specify and verify their desired behaviors.
Many such properties can only be specified in temporal logics, %which has proved to be 
a powerful framework for formalizing complex specifications of timed systems.
In particular, \emph{Linear Temporal Logic} (LTL)~\cite{pnueli1977temporal} strikes a good balance between expressivity and complexity of verification.
% Indeed, many temporal logics such as CTL and CTL$^*$ can be translated to LTL~\cite{}\todo{I thought CTL* was strictly more expressive? (not sure)},
Indeed, LTL is widely used  
for describing temporal specifications in many fields, such as 
%program verification~\cite{pnueli2005applications}, 
verification~\cite{
kern1999formal, pnueli2005applications, rozier2011linear
}
% concurrency~\cite{lynch1996distributed, chai2014online}, 
and control theory~\cite{belta2007symbolic, belta2019formal,  banerjee2025challenges}, thanks to its expressivity.
% At the same time‚ 
Verification of LTL properties can be reduced to language emptiness of Büchi automata~\cite{vardi1986automata,gerth1995simple,gastin2001fast}, which gives rise to efficient verification algorithms. 
However, these techniques are developed for discrete-time, discrete-state systems, while physical systems evolve in continuous space and time.

% On the other hand, In control theory, numerous techniques have been developed to analyze and design controllers for continuous-time, continuous-state systems~\cite{ogata2009modern, khalil2002nonlinear}. 
On the other hand, traditional control theory provides a wealth of methods for analyzing and designing controllers for continuous-time, continuous-state systems~\cite{ogata2009modern, khalil2002nonlinear}.
However, they focus primarily on specifications such as stability~\cite{gu2003survey}, robustness~\cite{sastry2011adaptive}, 
and safety constraints~\cite{ames2016control}. 
% However, they focus primarily on specifications such as stability~\cite{gu2003survey}
% , robustness~\cite{sastry2011adaptive}, disturbance rejection~\cite{guo2014anti}, and tracking~\cite{li2000survey}.
% Lyapunov-based methods are widely used to ensure stability~\cite{lyapunov1992general}, while control barrier functions have been introduced to enforce safety constraints~\cite{ames2016control}.
% Using these two techniques, it is possible to verify so-called \emph{reach-avoid} specifications.
% However, 
Meanwhile, modern applications, such as autonomous systems, require temporal and logic-based properties~\cite{doherty2009temporal, jha2018safe, arechiga2019specifying}, which conventional control methods are not designed for.
% Although other temporal logics exist for continuous-time systems (e.g., MTL~\cite{MTL}, MITL~\cite{MITL}, 
% STL~\cite{STL}), LTL is a natural candidate as a specification language for continuous systems, using a \emph{signal semantics} that operates in continuous time.

\emph{Abstraction-based control}, or \emph{symbolic control}~\cite{
%egerstedt2006special,
tabuada2007approximate, 
pola2008approximately, 
tabuada2008approximate
}, offers a framework to handle complex specifications by constructing a discrete abstraction, called a \emph{symbolic model}, of the original continuous system.  
This approach allows us to leverage automata-theoretic techniques to prove properties of continuous systems.
% Although various abstraction-based frameworks have been explored,
Recent studies on abstraction-based frameworks~\cite{%hsu2018lazy,macoveiciuc2019memory,  
meyer2019hierarchical,
%bai2019incremental,
pruekprasert2021fast, macoveiciuc2022fly, ren2024zonotope} have primarily addressed computational complexity and adaptability, key challenges in this domain.
This work focuses on another fundamental issue in abstraction-based approaches:  the substantial loss of information in abstract models relative to their concrete counterparts. 
This is especially true for continuous-time systems, where trajectories are partitioned into discrete intervals and atomic proposition satisfaction is abstracted over these intervals, complicating verification under complex specifications like LTL.
%Similarly, timed and hybrid automata~\cite{alur1994theory, alur1991hybrid} abstract continuous-time systems into discrete-state machines amenable to verification.

% Other frameworks are actively pursued for verification of continuous-time systems, in particular timed 
% automata~\cite{alur1994theory} and hybrid automata~\cite{alur1991hybrid} for cyber-physical systems (CPSs).
% Recent works on timed automata include parametric extensions of timed automata~\cite{andre2019s} or security of timed systems~\cite{arcile2022timed}, while recent work on hybrid automata focuses on applications to CPSs~\cite{huang2018path, bak2019hybrid}.

% This is especially true for continuous-time systems, where trajectories are partitioned into intervals of discrete actions and the satisfaction of atomic propositions is abstracted over these intervals. This loss complicates the verification of systems under complex specifications like LTL.

% This is particularly true for continuous-time systems, where system state trajectories are divided into intervals of discrete actions, and the satisfaction of atomic propositions is abstracted to entire time intervals. This loss of information complicates the verification and control of systems under complex specifications like LTL.

\subsubsection{Contribution.}
%In this work, 
We introduce a novel approach for abstraction-based system verification for LTL specifications called \emph{\ABZESN{} automata}. 
They encode the abstract properties of \emph{atomic propositions (APs)} along system trajectories through their transition labels, classifying them into four values. 
We define a new construction called \ABZESN{} automaton, which is a translation from an LTL specification for continuous-time systems to a generalized Büchi automaton.
% This construction is inspired by the original one for discrete-time systems~\cite{vardi1986automata}.

% Building on this structure, our verification framework soundly approximates the satisfiability of atomic propositions along system trajectories.
Building on this structure, we propose a verification framework that soundly approximates the satisfiability of atomic propositions along system trajectories.
It relies on a game played by the system and the environment, represented by angelic and demonic nondeterminism, respectively. 
% between two players, representing angelic and demonic types of nondeterminism,
% and an illustrative example for abstraction-based system verification under several LTL specifications. 
% We demonstrate that this construction is sound and applicable for verifying continuous-time systems. 
% We show that this construction is sound and that it can be used for monitoring and verification of continuous-time systems.
Our approach is highly general, supporting nondeterministic, continuous-state, continuous-time systems without global stability assumptions.
To the best of our knowledge, no existing technique provides formal verification for this broad class of systems under LTL specifications. Prior work has instead focused on discrete-time systems~\cite{belta2007symbolic, belta2019formal, banerjee2025challenges}, imposed more restrictive dynamics or assumptions~\cite{tabuada2007approximate, tabuada2008approximate, meyer2019hierarchical, ren2024zonotope}, or addressed smaller classes of specifications~\cite{
%hsu2018lazy, macoveiciuc2019memory, 
pruekprasert2020symbolic, pruekprasert2021fast, Zamani2012, hashimoto20}.
%Timed and 
Hybrid automata~\cite{%alur1994theory,
alur1991hybrid}
%abstract continuous-time systems into discrete-state machines amenable to verification,
can also be used for verification~\cite{bak2019hybrid},
but most of their problems remain undecidable in general.
To achieve this level of generality, we impose a constraint on the satisfaction zones of atomic propositions, a condition met by many systems in practice.

\subsubsection{%Plan
Outline.}
The rest of the paper  
proceeds as follows.
Section~\ref{section:dynamical-systems} introduces  
systems, specifications, and
the verification problem. 
Section~\ref{section:system-abstraction} 
% presents the abstraction of dynamical systems into symbolic models.
shows how to soundly abstract a dynamical system into a finite symbolic model. 
Section~\ref{section:ABZESN} presents \ABZESN{} automata as an abstraction of the specification.
Section~\ref{section:verification} proposes the verification algorithm. 
Section~\ref{section:example} provides an illustrative application example. 
%Finally,
% Section~\ref{section:conclusions} presents the conclusions.
\switchversion{The omitted proofs can be found in 
the appendix~\cite{pruekprasert2025}.}{}

\subsubsection{Notations.}
We write $\R$, $\Rpos$, $\Rnonneg$, $\Z$, and $\Znonneg$ for the sets of real, positive real, nonnegative real, integer, and nonnegative integer numbers. %, respectively.
The infinity norm  is $\infinitynorm{x} = \max_{i=0}^{n-1} \absolutevalue{x_i}$ for $x \in \R^n$. 
Let $2 = \set{\top, \bot}$ be the set of booleans, and $Y^{X}$ the space of functions from $X$ to $Y$. 
We use $X^*$ (\emph{resp.} $X^\omega$) for the set of finite (\emph{resp.} infinite) sequences of elements of $X$.
We use ``iff'' for ``if and only if''.

\section{%Symbolic Controller Synthesis Problem
%\ssc{
Dynamical Systems and LTL Specifications
%}
}
\label{section:dynamical-systems} 
 In this section, we formally
introduce dynamical systems, LTL specifications, and the verification problem for the system.

\subsection{Nondeterministic Dynamical Systems}
We consider dynamical systems $\Sigma = (X, \xi, x_{\ini})$ where
$X \subseteq \R^n$ is the set of considered $n$-dimensional system states, 
$\xi: (2^{X}  \setminus \{\emptyset \}) \times \Rnonneg \to 2^{X} \setminus \{\emptyset \}$ is the system evolution function, and $x_{\ini} \in X$ is the initial state.
For a set $\mathfrak{x} \in 2^{X}\setminus \{\emptyset \}$ of states and a time instant $t \in \Rnonneg$,
the set $\xi(\mathfrak{x}, t) \in  2^{X}  \setminus \{\emptyset \}$ contains all possible states reachable from some state in the set $\mathfrak{x}$ at time 
$t$. 
We require that $\xi$ satisfy the following properties: $\xi(\mathfrak{x},0) = \mathfrak{x}$ and $\xi(\mathfrak{x},t_1 + t_2) = \xi(\xi(\mathfrak{x},t_1), t_2)$.
Note that, unlike symbolic control approaches such as~\cite{pruekprasert2020symbolic, pruekprasert2021fast}, we do not take control signals as inputs to the system’s evolution function. Nevertheless, the results presented in this paper are applicable to controlled systems with fixed control strategies, as these systems can be modeled as dynamical systems $\Sigma$ given above.
% We consider dynamical systems $\Sigma = (X, \UUU, \xi, x_{\ini})$ where 
% $X \subseteq \R^n$ is the set of considered $n$-dimensional system states, $\UUU$ is the set of control signals, %of length $\tau$, 
% $\xi: (2^{\R^n}  \setminus \{\emptyset \}) \times \UUU \to 2^{\R^n} \setminus \{\emptyset \}$ is the system evolution function, and $x_{\ini} \in X$ is the initial state. 
% Let $U$ be the set of all possible control inputs, then a control signal $u \in \UUU$  
% is a function  
% $u: [0, \tau_u] \to U$ 
% that maps each time instant in  
% $[0, \tau_u]$ 
% to a control input in $U$.
% For a set $\mathfrak{x} \in 2^{\R^n}\setminus \{\emptyset \}$ of states and a control signal $u \in \UUU$,
% the set $\xi(\mathfrak{x}, u) \in  2^{\R^n}  \setminus \{\emptyset \}$ contains all possible states reachable from some state in the set $\mathfrak{x}$ at time 
% $\tau_u$. 
% We require that $\xi$ satisfy the following properties: $\xi(\mathfrak{x},\epsilon) = \mathfrak{x}$ and $\xi(\mathfrak{x},u_1 u_2) = \xi(\xi(\mathfrak{x},u_1), u_2)$, where $u_1 u_2(t)= u_1(t)$ if $t <\tau$ and $u_1 u_2(\tau + t)= u_2(t)$ otherwise.
% The system $\Sigma$ is forward-complete \cite{Angeli1999} as
% $\xi(\mathfrak{x}, t)$ cannot be empty.
% Notice also that we define the dynamics on the real coordinate $n$-space $\R^n$, even though we only consider states in the bounded subspace $X$. 
% \ssc{This means that the system state may leave the  
% region $X$ unless controlled under a suitable controller.}\todo{remove?}
For $x \in X$, by abuse of notation, we write $\xi(x, t)$ for $\xi(\{x\}, t)$. 
A trajectory (\emph{resp.} finite trajectory) of $\Sigma$ from $x$ is a function $\sigma \colon \Rnonneg \to X$ 
(\emph{resp.} $\sigma \colon [0, T] \to X$, where $T \in \Rnonneg$)
such that $\sigma(0) = x$ and $\sigma(t) \in \xi(\sigma(0), t)$ for all time $t \in \Rnonneg$ (\emph{resp.} $t \in [0, T]$). Let $\Traj(\Sigma)$ denote all possible (infinite) trajectories of $\Sigma$ starting from $x_\ini$.

% A controller controls the system by issuing control signals based on observed system states. As depicted in Fig.~\ref{fig:controledsystem}, 
% we consider controllers that issue signal 
% $u_k$ at each step $k\in \Zpos$ based on the trajectory 
% $\sigma_1 \ldots \sigma_k$ of previously visited system states, where $\sigma_i$ is the trajectory followed by the system under the signal $u_{i-1}$.
% Formally, a trajectory of $\Sigma$ from $x_0$ along $u$ is a function $\sigma \colon [0, \tau_u) \to X$ such that $\sigma(0) = x_0$ and $\sigma(t) \in \xi(\sigma(0), \restr{u}{t})$, where $\restr{u}{t}$ is the restriction of $u$ to time instants $[0, t)$.
% A trajectory $\sigma_1 \ldots \sigma_k \ldots$ is induced by a strategy $\pi$ if $\sigma_1(0) = x_{\ini}$ and, for all $k \in \Zpos$, $\sigma_k$ is a trajectory of $\Sigma$ from $\sigma_{k-1}(\tau)$ along $\pi(\sigma_1 \ldots \sigma_{k-1})$.
% We denote by $\Traj(\pi)$ the set of trajectories induced by $\pi$.
% %\ssc{
% For ease of presentation, we assume that the controller only issues signals of length 
% $\tau$, resulting in a periodic control scheme. However, the presented results can be extended to  other control schemes.

\subsection{Atomic Propositions and Assumptions on Trajectories}
\label{section:AP}
%\ssc{
Atomic propositions (AP), statements about a state of the system, are the basic building blocks of temporal logic formulas for specifications in this paper.
%}
Examples of atomic propositions include properties such as whether the system is colliding with an obstacle or whether its position is in a desirable region. %, or whether the battery's level is critical.
Let $\AP$ denote the finite set of considered atomic propositions, and $P: X \to 2^{\AP}$ represent the set of atomic propositions that hold at each system state:
if $a \in \AP$ represents the property that the system is safe, 
then $P(x)(a) = \top$ means that the system is safe at state $x \in X$.  In other words, $\setcomp{x \in X}{P(x)(p) = \top}$ is the region of states that satisfies the atomic proposition $p \in \AP$.

\subsection{Classic LTL with Signal Semantics}\label{section:classicLTL}
% \todo[inline]{usual discrete-time semantics -  our previous work tries to consider cont.-time semantics -  that was why we consider subclasses of LTL -  Here comes this work!}
% \todo[inline]{type signals (especially the space)$\rightarrow$ Section2 ?}

% \begin{defn}%[LTL formulas]
%     An \emph{LTL formula} on a set $\AP$ of atomic propositions is an expression generated by the following grammar:
%     \[
%         \phi \Coloneqq \top \gramor p \gramor \neg \phi \gramor \phi \lor \phi \gramor \Next \phi \gramor \phi \Until \phi,
%     \]  
%     where $\top$ denotes \emph{true} and $p \in \AP$ is an atomic proposition.
%     %(see Section~\ref{section:AP}).
% \end{defn}

Linear Temporal Logic  (LTL) formulas are 
generated by the following grammar:
\[
    \phi \Coloneqq \top \gramor p \gramor \neg \phi \gramor \phi \lor \phi \gramor \Next \phi \gramor \phi \Until \phi,
\]  
where $\top$ is syntax for \emph{truth} (not to be confused with the semantic boolean $\top \in 2$) and $p \in \AP$ is an atomic proposition.

Conventionally, the semantics of LTL is defined on words, i.e., in discrete time.
For example, a classic LTL formula may contain $\Next \phi$ (\emph{next} $\phi$), which holds for $x_i x_{i+1} \ldots$ if
$\phi$ holds at the \emph{next} discrete step $x_{i+1} x_{i+2} \ldots$ (see \cite{pnueli1977temporal} for a formal definition).
% Conventionally, semantics of an LTL formula is defined for a sequence $p_0 p_1 \ldots \in (2^{\AP})^\omega$, and is therefore suitable to represent a property of a discrete run such as $x_0 u_0  x_1 \ldots \in X (\UUU  X)^\omega$ where $P(x_i) = p_i$ for all $i \in \N$. 
However, we are interested in the property of a system trajectory $\sigma: \Rnonneg \to X$ defined on the continuous timeline. 
We consider \emph{AP-signal} $\varsigma \colon \Rnonneg \to 2^{\AP}$ where $\varsigma(t) = P(\sigma(t))$, i.e., $\varsigma$ indicates the atomic propositions that hold along $\sigma$. 
Note that trajectories $\sigma \colon \Rnonneg \to X$ and AP-signals $\varsigma \colon \Rnonneg \to 2^{\AP}$ have slightly different types.
We say that a formula is \emph{continuous-time} if it contains no subformulas of the form $\Next \phi$. 
Then, 
the \emph{signal semantics} of continuous-time LTL is the relation $\vDash$ defined on $\varsigma$ as follows:

\begin{itemize}
\setlength{\multicolsep}{0pt}
\begin{multicols}{2}
    \item $\varsigma, t \vDash \top$ always,
    \item $\varsigma, t \vDash p$ iff 
    $\varsigma(t)(p) = \top$,
    \item $\varsigma, t \vDash \neg \phi$ iff $\varsigma, t \vDash \phi$ does not hold,
    \item $\varsigma, t \vDash \phi \lor \psi$ iff $\varsigma, t \vDash \phi$ or $\varsigma, t \vDash \psi$,
\end{multicols}
    \item $\varsigma, t \vDash \phi \Until \psi$ iff
    %there exists
    $\exists t' \geq t$ such that $\varsigma, t' \vDash \psi$ and for all $t'' \in [t,t')$, $\varsigma, t'' \vDash \phi$.
    % \item $\varsigma, t \vDash \phi \Until \psi$ iff
    % %there exists $t' \geq t$ such that $\varsigma, t' \vDash \psi$ and for all $t'' \in [t,t')$, $\varsigma, t'' \vDash \phi$.
\end{itemize}

% \blind{
% \begin{itemize}
%     \item $\varsigma, t \vDash \top$ always,
%     \item $\varsigma, t \vDash p$ iff 
%     $\varsigma(t)(p) = \top$,
%     \item $\varsigma, t \vDash \neg \phi$ iff $\varsigma, t \vDash \phi$ does not hold,
%     \item $\varsigma, t \vDash \phi \lor \psi$ iff $\varsigma, t \vDash \phi$ or $\varsigma, t \vDash \psi$,
%     \item $\varsigma, t \vDash \phi \Until \psi$ iff there exists $t' \geq t$ such that $\varsigma, t' \vDash \psi$ and for all $t'' \in [t,t')$, $\varsigma, t'' \vDash \phi$.
% \end{itemize}}

% \begin{defn}%[Signal semantics]
%     The \emph{signal semantics} of continuous-time LTL is a relation $\vDash$ between signals $\sigma \colon \Rnonneg \to \R^n$, time instants $t \in \Rnonneg$ and continuous-time LTL formulas $\phi$ defined
%     %by induction on $\phi$
%     as follows:
%     \begin{itemize}
%         \item $\sigma, t \vDash \top$ always,
%         \item $\sigma, t \vDash p$ iff $p(\sigma(t))$ holds,
%         \item $\sigma, t \vDash \neg \phi$ iff $\sigma, t \vDash \phi$ does not hold,
%         \item $\sigma, t \vDash \phi \lor \psi$ iff $\sigma, t \vDash \phi$ or $\sigma, t \vDash \psi$,
%         \item $\sigma, t \vDash \phi \Until \psi$ iff there exists $t' \geq t$ such that $\sigma, t' \vDash \psi$ and for all $t'' \in [t,t')$, $\sigma, t'' \vDash \phi$.
%     \end{itemize}
% \end{defn}

% \todo[inline]{Introduce release}
% \begin{align*}
%     \Finally \phi &= \top \Until \phi \\
%     \Globally \phi &= \neg \Finally \neg \phi \\
%     \phi \Release \psi &= \neg (\neg \phi \Until \neg\psi).
% \end{align*} 
The formula
$\phi \Until \psi$ ($\phi$ \emph{until} $\psi$) means that $\phi$ must remain true until $\psi$ becomes true, and $\psi$ must become true at some point.
We also use the usual shorthands: 
$\phi \land \psi = \neg (\neg \phi \lor \neg \psi)$,
% $\phi \to \psi = \neg \phi \lor \psi$,
$\phi \Release \psi = \neg (\neg \phi \Until \neg\psi)$,
$\Finally \phi = \top \Until \phi$, and $\Globally \phi = \neg \Finally \neg \phi$. 
%On the other hand,
The formula
$\phi \Release \psi$ ($\phi$ \emph{release} $\psi$) means that $\psi$  must remain true until $\phi$ becomes true, and $\psi$ must remain true forever if $\phi$ never becomes true.
%We also introduce the following shorthand:
%\ssc{
The formula $\Finally \phi$ (\emph{eventually} $\phi$) means that $\phi$ will hold at some point, while $\Globally \phi$ (\emph{globally} $\phi$) means that $\phi$ holds all the time.
%
%This makes LTL a very expressive logic.
%\ssc{
LTL is a very expressive logic.
%}
For example, a reach-avoid specification can be represented as
$
    \Globally \neg \text{a} \land \Finally \text{r},
$
where $\text{a}$ is an atomic proposition that holds on the zone to avoid and $\text{r}$ is the one that holds on the zone to reach.
% For this reason, LTL is a formalism widely used 
% %for describing control and verification specifications
% for describing specifications~\cite{meyer2019hierarchical, pruekprasert2021fast, ren2024zonotope, pruekprasert2020symbolic, belta2019formal, belta2007symbolic, pnueli2005applications, chai2014online}.

\subsection{
%\ssc{
System Verification for LTL Specifications
%}
}\label{section:System Verification for LTL specifications}
%Problem Formulation}

% We are interested in verification.
%Monitoring is the fact of checking whether an execution of the system satisfies a given specification, while verification checks whether a control strategy only produces executions that satisfy that specification.

%\ssc{ 
This work considers system verification under LTL specifications, i.e., checking whether the system only produces trajectories that satisfy a given specification.
Formally, we consider the following problem.
%}

\begin{problem}\label{problem:continuousVerification}
Given system $\Sigma$, $P \colon X \to 2^{\AP}$, %, %a control strategy $\pi$, 
and a continuous-time LTL specification $\phi$, our goal is to verify  
whether $P \circ \sigma, 0 \vDash \phi$ for all $\sigma \in \Traj(\Sigma)$.
%  whether for all $\sigma \in \Traj(\Sigma)$, %\Traj(\pi)$, 
% $P \circ \sigma, 0 \vDash \phi$.
% % , where $\varsigma(t) = P(\sigma(t))$.
\end{problem}

%\ssc{ 
A standard approach to verification is to construct
% the product of the controlled system's automaton with
a Büchi automaton corresponding to the LTL formula, as
%} 
it is well-known~\cite{vardi1986automata, gastin2001fast} that LTL formulas can be translated to Büchi automata in the following sense.

% We are interested in verification, i.e., checking whether a control strategy only produces executions that satisfy a given specification.
% Formally, given a system $\Sigma$, $P \colon X \to 2^{\AP}$, a control strategy $\pi$, and a specification $\phi$, verification is an algorithm that checks whether for all $\sigma \in \Traj(\pi)$, $\sigma, 0 \vDash \phi$.
 % \clovis{Why are we interested in this? -> it solves verification}
% It is well-known\todo{citation} that LTL formulas can be translated to Büchi automata in the following sense:
\begin{proposition}[{\cite[Theorem~2.1]{vardi1986automata}}]\label{prop:LTLtoBuchi}
    For all LTL formulas $\phi$, there exists a Büchi automaton $\BBB$ such that for all words $w \colon \Znonneg \to 2^{\AP}$, $w,0 \vDash \phi$ iff $w \in \LLL(\BBB)$.
\end{proposition}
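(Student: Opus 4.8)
The plan is to use the classical tableau (closure) construction of Vardi and Wolper, adapted to our grammar. First I would define the \emph{closure} $\cl(\phi)$ as the smallest set containing $\phi$ that is closed under subformulas and under single negation (identifying $\neg\neg\psi$ with $\psi$); this set has size linear in the length of $\phi$. The states of $\BBB$ are the \emph{atoms}: the maximal locally consistent subsets $B \subseteq \cl(\phi)$, i.e. those that contain $\top$, contain exactly one of $\psi$ and $\neg\psi$ for each $\psi \in \cl(\phi)$, and respect disjunction ($\psi_1 \lor \psi_2 \in B$ iff $\psi_1 \in B$ or $\psi_2 \in B$). I would declare an atom $B$ initial iff $\phi \in B$, and require any transition leaving $B$ to be labelled by a letter $a \in 2^{\AP}$ that agrees with $B$ on atomic propositions, i.e. $p \in a$ iff $p \in B$ for each $p \in \AP \cap \cl(\phi)$.

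Next I would fix the transition relation so that it enforces the temporal operators one step at a time. There is a transition $B \to B'$ whenever, for each $\Next\psi \in \cl(\phi)$, we have $\Next\psi \in B$ iff $\psi \in B'$, and for each until subformula, $\psi_1 \Until \psi_2 \in B$ iff either $\psi_2 \in B$ or both $\psi_1 \in B$ and $\psi_1 \Until \psi_2 \in B'$. These constraints encode exactly the fixpoint identity $\psi_1 \Until \psi_2 \equiv \psi_2 \lor (\psi_1 \land \Next(\psi_1 \Until \psi_2))$, but by themselves they admit spurious runs that postpone $\psi_2$ forever. To exclude those I would impose a generalized B\"uchi acceptance condition with one accepting set $F_{\psi_1 \Until \psi_2} = \setcomp{B}{\psi_1 \Until \psi_2 \notin B \text{ or } \psi_2 \in B}$ per until subformula, and then convert this generalized condition to a standard B\"uchi condition by the usual counter construction.

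For correctness I would prove the two directions separately, against the word semantics $w,i \vDash \psi$. For completeness, given $w$ with $w,0 \vDash \phi$, I set $B_i = \setcomp{\psi \in \cl(\phi)}{w,i \vDash \psi}$ and verify directly that each $B_i$ is an atom, that $\phi \in B_0$, that each $B_i \to B_{i+1}$ labelled $w(i)$ is a valid transition, and that each $F_{\psi_1 \Until \psi_2}$ is visited infinitely often (because whenever $\psi_1 \Until \psi_2$ holds at position $i$, there is a later position where $\psi_2$ holds, landing in the accepting set). For soundness, given an accepting run $B_0 B_1 \cdots$ over $w$, I would show by induction on the structure of $\psi \in \cl(\phi)$ that $\psi \in B_i$ iff $w,i \vDash \psi$; taking $\psi = \phi$ and $i = 0$ then gives $w,0 \vDash \phi$.

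The main obstacle is the until case of this last induction, specifically the direction $\psi_1 \Until \psi_2 \in B_i \Rightarrow w,i \vDash \psi_1 \Until \psi_2$. The transition constraints alone only guarantee the fixpoint unfolding, which the greatest fixpoint also satisfies, so a priori one could have $\psi_1 \Until \psi_2 \in B_j$ for all $j \geq i$ with $\psi_2$ never holding --- exactly the behaviour the acceptance condition must forbid. The key step is to use that the run visits $F_{\psi_1 \Until \psi_2}$ infinitely often: this forces some $k \geq i$ with $\psi_2 \in B_k$, and the chain of transition constraints propagates $\psi_1$ through $B_i, \dots, B_{k-1}$, so the induction hypothesis yields $w,i \vDash \psi_1 \Until \psi_2$. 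Everything else reduces to routine checks against the semantics.
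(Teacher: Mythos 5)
Your proposal is the classical Vardi--Wolper tableau construction, which is exactly the argument the paper relies on: it does not prove this proposition itself but cites \cite[Theorem~2.1]{vardi1986automata} and sketches the same ideas (states as consistent valuations of subformulas, transitions enforcing the fixpoint unfolding $\psi_1 \Until \psi_2 \equiv \psi_2 \lor (\psi_1 \land \Next(\psi_1 \Until \psi_2))$, and one accepting set per $\Until$-subformula to rule out runs that postpone $\psi_2$ forever). Your sketch is correct and matches that approach, including the key point that the acceptance condition is what closes the soundness direction of the $\Until$ case.
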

% \clovis{Problem: we have not introduced LTL word semantics\todo{(Sasinee) probably OK, NO?}}   
% \ssc{We invite interested readers to refer to}~\cite{vardi1986automata, gastin2001fast} \ssc{for the translation algorithm of Proposition}~\ref{prop:LTLtoBuchi}. \ssc{However, we briefly explain the key concepts here.}
We refer interested readers to~\cite{vardi1986automata, gastin2001fast} for the translation algorithm of Proposition~\ref{prop:LTLtoBuchi}. However, we briefly explain the key concepts here.
The translation
heavily relies on the fact that $\phi \Until \psi$ is equivalent to $\psi \lor (\phi \land \Next (\phi \Until \psi))$, and similarly $\phi \Release \psi$ is equivalent to $\psi \land (\phi \lor \Next (\phi \Release \psi))$.
% Using this fact, one can build from $\phi$ a generalized Büchi automaton whose action labels are atomic propositions and states are functions that map each subformula of $\phi$ to true or false, and uses the equivalences above to decide the value of formulas of the form Until or Release.
% For instance, let us consider the formula $p \Until p'$, where its subformulas are $p\in \AP$, $p'\in \AP$ and $p \Until p'$.
% If $p$ is mapped to false, and $p'$ and $p \Until p'$ to true in a state $x_i$, then this state can only transition to a state $x_{i+1}$ where $p \Until p'$ is mapped to true.
% This transition can be interpreted as follows: if $p \Until p'$ is true but $p$ is false for $x_i x_{i+1} \ldots$, then $p \Until p'$ must be true for the next-step sequence $x_{i+1} x_{i+2} \ldots$. 
% The initial state is a dummy state with transitions to all states that maps the main formula ($p \Until p'$ in this case) to true.
% \todo[inline]{This is wrong. rewrite. or do we have to explain this much?}
For example, if $p$ and $p \Until p'$ hold at time $k$, but $p'$ does not, then necessarily $p \Until p'$ must hold at time $k+1$.
Using this fact, it is possible to build a generalized Büchi automaton for $\phi$ whose action labels are valuations of atomic propositions and whose states are valuations of subformulas of $\phi$.
The Büchi automaton's transitions reflect behaviors as described above: in a state where $p$ and $p \Until p'$ hold but $p'$ does not, it can only transition to a state where $p \Until p'$ holds.
Its accepting sets ensure that if $\phi_1 \Until \phi_2$ holds at some point, then $\phi_2$ must hold at some later point.
The accepting states are the states that capture a property of the operator $\Until$ that cannot be verified by comparing two consecutive states in a run. In this case, in the accepting states, either $p'$ holds or $p \Until p'$ does not, due to the fact that if $p \Until p'$ holds in some state $x_i$, then eventually $p'$ later holds at some state $x_j$ where $j\geq i$.
% \todo[inline]{Something about accepting states}
%For more details, see... \todo{citation}

\section{System Abstraction and Information Loss}
\label{section:system-abstraction}

%\todo[inline]{abstraction: time, space $\to$ finite state space}

% \ssc{
% A dynamical system as described in the previous section is a continuous-state, continuous-time system.
% In order to verify a system under an LTL specifciation using the product-automaton approach  in} Section~\ref{section:System Verification for LTL specifications}, 
% \ssc{we need to abstract the system into a finite-state discrete-time \emph{symbolic model} that approximates the behavior of the dynamical system.
% %retains significantly
% %less information than their original systems. 
% }
A dynamical system, as described in the previous section, is a continuous-state, continuous-time system.
In order to verify a system under an LTL specification by checking the system with the corresponding Büchi automaton, we need to abstract the system into a finite-state, discrete-time \emph{symbolic model} that approximates the behavior of the dynamical system.
%retains significantly
%less information than their original systems. 

\subsection{Time-abstraction and
Signal Chopping
\label{section:time-abstraction}
}
%LTL Signal Semantics}

When abstracting a system, one of the most important losses of information comes from discretizing time. 
Indeed, since we are interested in complex 
temporal specifications, where the order in which atomic propositions are satisfied matters, discretizing time loses information about whether an LTL formula holds between two time instants.
This information can be arbitrarily complex, and any abstraction into a finite number of patterns necessarily induces a loss of precision.

% We discretize time by chopping an AP-signal $\varsigma \colon \Rnonneg \to 2^{\AP}$ into slices of a fixed length  $\tau \in \Rpos$ (see Section~\ref{section:ABZESN} for a formal definition). 
% Here, we focus on $\widebar{\varsigma}  \colon [0, \tau] \to 2^{\AP}$ that is a slice of length
% $\tau$ taken from $\varsigma$.
% We abstract the satisfaction of each $p \in \AP$ along 
% $\widebar{\varsigma}$ into one of four possible patterns $\OOO = \{A, Z, E, N\}$, referred to as an \emph{observation}. 
% We define a \emph{AP-valuation} function $o_{\widebar{\varsigma}} \colon \AP \rightharpoonup \OOO$, which assigns observations  along $\widebar{\varsigma}$. For each $p \in \AP$,
% \begin{enumerate}
%     \item $o_{\widebar{\varsigma}} (p) = A$ if  
%     $\widebar{\varsigma}, t \vDash p$ holds for all $t \in [0, \tau)$.
%     \item $o_{\widebar{\varsigma}}(p) = Z$ if there exists $t^* \in [0, \tau)$ such that $\widebar{\varsigma}, t \vDash p$ holds for all $t \leq t^*$ and $\widebar{\varsigma}, t \vDash p$ does not hold for all $t > t^*$.
%     \item $o_{\widebar{\varsigma}}(p) = E$ if there exists $t^* \in [0, \tau)$ such that $\widebar{\varsigma}, t \vDash p$ does not hold for all $t \leq t^*$ and $\widebar{\varsigma}, t \vDash p$ holds for all $t > t^*$.
%     \item $o_{\widebar{\varsigma}}(p) = N$ if $\widebar{\varsigma}, t \vDash \psi$ does not hold for all $t \in [0, \tau)$.
%     \item $o_{\widebar{\varsigma}}(p)$ is undefined if $o_{\widebar{\varsigma}}(p) \notin \{A, Z, E, N\}$.
% \end{enumerate}

We discretize time by chopping an AP-signal $\varsigma \colon \Rnonneg \to 2^{\AP}$ into slices of a fixed length  $\tau \in \Rpos$. 
We abstract the satisfaction of each $p \in \AP$ within each of these slices into one of four possible patterns $\OOO = \{A, Z, E, N\}$, referred to as an \emph{observation}. 
Conceptually, $A$ means that the $p$ holds at \textbf{A}ll time throughout the interval, $Z$ means that $p$ holds only at the beginning of the interval (time \textbf{Z}ero), $E$ means that $p$ holds only at the \textbf{E}nd of the interval, and $N$ means that $p$ holds at \textbf{N}one of the interval time points.
Formally, we define the \emph{signal chopping} of $\varsigma$ along $\tau$, denoted $\chopping{\varsigma}{\tau} \colon \N \to \OOO^{\AP}$, as follows: for all $n \in \N$,
\begin{itemize}
    \item $\chopping{\varsigma}{\tau}(n)(p) = A$ if for all $t \in [n\tau,(n+1)\tau]$, $\varsigma(t)(p) = \top$, 
    \item $\chopping{\varsigma}{\tau}(n)(p) = Z$ if there exists $t' \in [n \tau, (n+1)\tau)$ such that $\varsigma(t)(p) = \top$ for all $t \leq t'$ and $\varsigma(t)(p) = \bot$ for all $t > t'$.
    \item $\chopping{\varsigma}{\tau}(n)(p) = E$ if there exists $t' \in [n \tau, (n+1)\tau)$ such that $\varsigma(t)(p) = \bot$ for all $t \leq t'$ and $\varsigma(t)(p) = \top$ for all $t > t'$. 
    \item $\chopping{\varsigma}{\tau}(n)(p) = N$  if for all $t \in [n\tau,(n+1)\tau]$, $\varsigma(t)(p) = \bot$. 
    % \item $\chopping{\varsigma}{\tau}(n)(p)$ is undefined if $\chopping{\varsigma}{\tau}(n)(p) \notin \{A, Z, E, N\}$.
\end{itemize}

The slice $\chopping{\varsigma}{\tau}(n)(p)$ is undefined if $\chopping{\varsigma}{\tau}(n)(p) \notin \{A, Z, E, N\}$.
To ensure that $\chopping{\varsigma}{\tau}(n)$ is well-defined for all $p \in \AP$, we impose the following assumptions.

\begin{asm}\label{asm: restriction BSZE}
    For all trajectories $\sigma$, all $p \in \AP$, all $t \in \Rnonneg$, and all $t' \in [0, \tau]$,
    \begin{align}
  \label{eq: no B,S}
  \begin{split}
    P(\sigma(t))(p) = P(\sigma(t + t'))(p) \Rightarrow
    \forall t'' \in [0, t'],\ 
    P(\sigma(t))(p) = P(\sigma(t + t''))(p) .
  \end{split}
  \\
  \label{eq: Z E}
  \begin{split}
    P(\sigma(t))(p) \neq P(\sigma(t + \tau))(p) \text{ and } 
    P(\sigma(t))(p') \neq P(\sigma(t + \tau))(p') 
    \Rightarrow p = p'.
  \end{split}
\end{align}
\end{asm}

The property in \eqref{eq: no B,S} restricts that, within time $\tau$, a system trajectory cannot cross the border of each AP region twice.
It is possible to ensure that the system trajectories have this property by appropriately designing or selecting a Lyapunov-like barrier function (see, e.g., \cite{lygeros2003dynamical, panagou2015distributed}) to enforce that any deviation of $\sigma(t)$ from the initial AP region results in a monotonic decrease in a certificate function over $[0, \tau]$, thus preventing the system from returning to its initial AP region within the time horizon.
The property in \eqref{eq: Z E} implies that a system trajectory can cross at most one AP region boundary within a time interval of length $\tau$.
To enforce this property, one may take $\tau$ small enough so that the minimum distance between the boundaries of any two AP regions is greater than the distance the system can travel in time $\tau$.
This is made formal by the following lemma, whose proof is provided in~\switchversion{\cite[Appendix~A.1]{pruekprasert2025}}{Appendix~\ref{proof:lem:bounded speed}}.

%~\cite[Appendix~\switchversion{A.1}{\ref{proof:lem:bounded speed}}]{pruekprasert2025}.

% i.e., an upper limit of the distance that the system can travel in time $\tau$.
\begin{lemma}\label{lem:bounded speed}
    System $\Sigma$ has \emph{bounded speed} %, i.e., 
    if there exists $\Delta \colon \Rpos \to \Rpos$ such that for all $t \in \Rpos$, $x \in X$, $t' \leq t$, and $y \in \xi(x, t')$, $\infinitynorm{y-x} \leq \Delta(t)$.
    It is \emph{AP-separated} if for all $p \neq p' \in \AP$, there exists $d_{p,p'} > 0$ such that the distance from the (topological) boundary of $\setcomp{x \in X}{p \in P(x)}$ and $\setcomp{x \in X}{p' \in P(x)}$ is at least $d_{p,p'}$.
    If $\Sigma$ is speed-bounded and AP-separated, then for any choice of $\tau \leq \inf_{p \in \AP} \inf_{p' \in \AP, p' \neq p} \inf \Delta^{-1}(d_{p,p'})$,~\eqref{eq: Z E} holds.
\end{lemma}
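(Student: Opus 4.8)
The plan is to prove the contrapositive: fix a trajectory $\sigma$, a time $t$, and two \emph{distinct} atomic propositions $p \neq p'$, assume that \emph{both} flip their truth value over the window $[t, t+\tau]$, and derive a contradiction from the choice of $\tau$. Throughout, I would write $R_q = \setcomp{x \in X}{q \in P(x)}$ for the satisfaction region of an atomic proposition $q$, so that $P(\sigma(s))(q) = \top$ holds exactly when $\sigma(s) \in R_q$, and let $\partial R_q$ denote its topological boundary. Since the displacement bound of bounded speed is stated in the infinity norm, I would read the AP-separation distance in the same norm.

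First I would show that each flipping proposition forces the trajectory to touch the corresponding boundary. Because $P(\sigma(t))(p) \neq P(\sigma(t+\tau))(p)$, one of the endpoints $\sigma(t), \sigma(t+\tau)$ lies in $R_p$ and the other in its complement. Setting $s_1 = \inf \setcomp{s \in [t, t+\tau]}{\sigma(s) \notin R_p}$ (or its mirror, depending on the direction of the flip) and using continuity of $\sigma$, the point $\sigma(s_1)$ is simultaneously a limit of points of $R_p$ and of points of $X \setminus R_p$, hence $\sigma(s_1) \in \overline{R_p} \cap \overline{X \setminus R_p} = \partial R_p$. The same construction yields $s_2 \in [t, t+\tau]$ with $\sigma(s_2) \in \partial R_{p'}$. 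Here I rely on trajectories being continuous; this is the natural reading for a dynamical system and is in fact forced by bounded speed, since any point reachable in a vanishing time lies within a vanishing distance.

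The two ingredients then collide. By AP-separation, $\infinitynorm{\sigma(s_1) - \sigma(s_2)} \geq \mathrm{dist}(\partial R_p, \partial R_{p'}) \geq d_{p,p'}$; in particular the two boundaries are disjoint, so $s_1 \neq s_2$. On the other hand, as $s_1, s_2$ lie in an interval of length $\tau$, say $s_1 \leq s_2$, the flow property of trajectories gives $\sigma(s_2) \in \xi(\sigma(s_1), s_2 - s_1)$ with $s_2 - s_1 \leq \tau$, so bounded speed applied with $t := \tau$ yields $\infinitynorm{\sigma(s_2) - \sigma(s_1)} \leq \Delta(\tau)$. The choice $\tau \leq \inf \Delta^{-1}(d_{p,p'})$ is exactly what makes $\Delta(\tau) < d_{p,p'}$, and the chain $d_{p,p'} \leq \infinitynorm{\sigma(s_1) - \sigma(s_2)} \leq \Delta(\tau) < d_{p,p'}$ is contradictory. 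Hence no pair $p \neq p'$ can both flip over $[t, t+\tau]$, which is precisely \eqref{eq: Z E}.

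The main obstacle will be the boundary-crossing step: it hinges on continuity of $\sigma$ and on the topological fact that a connected path joining $R_p$ to its complement must meet $\partial R_p$. I would make this rigorous through the infimum construction above rather than a bare intermediate-value appeal, since $X \subseteq \R^n$ is multidimensional. The other point to handle with care is that the bounded-speed bound must be applied to two times \emph{on the same trajectory}, via the semigroup/flow property $\sigma(s_2) \in \xi(\sigma(s_1), s_2 - s_1)$, which is where consistency of the trajectory is used; the comparison of norms and the unpacking of $\inf \Delta^{-1}(d_{p,p'})$ into the strict inequality $\Delta(\tau) < d_{p,p'}$ are then routine.
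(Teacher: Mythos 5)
Your proposal is correct and follows essentially the same route as the paper's own proof: continuity of trajectories (derived from bounded speed) forces each flipping proposition to put the trajectory on the corresponding region boundary at some time in the window, after which the flow property and the bound $\Delta(\tau)\leq d_{p,p'}$ contradict AP-separation. The only cosmetic difference is that you locate the boundary-crossing time via an infimum construction where the paper applies an intermediate-value argument to the (continuous) distance-to-boundary function; both arguments also share the same minor looseness in passing from $\xi(\sigma(0),s_2)=\xi(\xi(\sigma(0),s_1),s_2-s_1)$ to $\sigma(s_2)\in\xi(\sigma(s_1),s_2-s_1)$ and in treating $\Delta(\inf\Delta^{-1}(d_{p,p'}))=d_{p,p'}$.
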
 
The two properties in Assumption~\ref{asm: restriction BSZE} are necessary because we want to prevent different subformulas from changing truth value at different times in the same time interval of length $\tau$ and also to prevent two subformulas from having $Z$ and $E$ as observations during the same time interval of length $\tau$. 
Otherwise, we need to introduce new observations: $B$ (if a formula holds at \textbf{B}oth ends of the interval, but not on the whole interval) and $S$ (if it holds \textbf{S}omewhere but not at the ends).
In this paper, we show that it is possible to deduce the observation of all subformulas from those of atomic propositions (see Lemma~\ref{lem:ap-word-extension-accepting-run}).
If we allow these new observations, the result no longer holds, and it is unclear how to construct a sound translation.
From Assumption~\ref{asm: restriction BSZE}, we get the following lemma.
\begin{lemma}\label{lem: BSZN}
    Assume both properties in Assumption~\ref{asm: restriction BSZE}. 
    Given a trajectory $\sigma$, let
    $\chopping{\varsigma}{\tau} \colon \N \to {\OOO}^{\AP}$ be the chopped AP-signal of $\varsigma = P\circ \sigma$.
    For all $n \in \N$,
     \begin{enumerate}
        \item For all $p \in \AP$, we have its observation $\chopping{\varsigma}{\tau}(n)(p) \in \{A, Z, E, N\}$. 
        \item For all $p, p' \in \AP$, if $\chopping{\varsigma}{\tau}(n)(p) \in \{Z, E\}$ and $\chopping{\varsigma}{\tau}(n)(p') \in \{Z,E\}$, then $p = p'$.
    \end{enumerate} 
    % $\widebar{\varsigma}  \colon [0, \tau] \to 2^{\AP}$ be a slice of length $\tau$ taken from the AP-signal $\varsigma$ of $\sigma$. 
    % \begin{enumerate}
    %     \item For all $p \in \AP$, we have its observation $o_{\widebar{\varsigma}}(p) \in \{A, Z, E, N\}$. 
    %     \item For all $p, p' \in \AP$, $o_{\widebar{\varsigma}}(p') \in \{Z, E\}$ and $o_{\widebar{\varsigma}}(p') \in \{Z,E\}$ implies $p = p'$.
    % \end{enumerate} 
\end{lemma}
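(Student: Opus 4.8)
The plan is to derive each of the two conclusions of Lemma~\ref{lem: BSZN} directly from the two corresponding hypotheses of Assumption~\ref{asm: restriction BSZE}, applied to the single time interval $[n\tau, (n+1)\tau]$. Throughout I fix $n \in \N$ and write $t_0 = n\tau$, so that the interval under consideration is $[t_0, t_0 + \tau]$. The key observation is that the membership condition $P(\sigma(t))(p) = \top$ is equivalent to $\varsigma(t)(p) = \top$ by definition of $\varsigma = P \circ \sigma$, so the assumptions on $P$ translate directly into statements about $\varsigma$.

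\medskip

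For the first item, I would argue by contradiction: suppose that for some $p \in \AP$, the observation $\chopping{\varsigma}{\tau}(n)(p)$ is none of $A, Z, E, N$. By inspecting the four defining clauses of signal chopping, the only way this can happen is if the truth value of $\varsigma(\cdot)(p)$ changes at least twice on $[t_0, t_0+\tau]$; that is, there exist times $t_0 \leq s_1 < s_2 \leq t_0 + \tau$ at which $\varsigma(s_1)(p) = \varsigma(t_0+\tau)(p)$ agrees with the value at the right endpoint but $\varsigma(\cdot)(p)$ is not monotone in the sense required by $A$, $Z$, $E$, or $N$. The cleanest way to phrase this is: any function $[t_0,t_0+\tau] \to 2$ that crosses its threshold at most once falls into exactly one of the four patterns, so failure to be in $\OOO$ forces a value that equals an earlier value after an intermediate change. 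I then apply property~\eqref{eq: no B,S} with a suitable choice of $t$ and $t'$ (taking $t$ and $t+t'$ to be two points with equal $p$-value that straddle a point of different $p$-value) to conclude that the value must in fact be constant on the whole sub-interval $[t, t+t']$, contradicting the existence of the intermediate crossing. This establishes $\chopping{\varsigma}{\tau}(n)(p) \in \{A,Z,E,N\}$.

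\medskip

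For the second item, I would again reason by contradiction. Suppose $p, p' \in \AP$ both have observation in $\{Z, E\}$ on the interval $[t_0, t_0+\tau]$ but $p \neq p'$. An observation in $\{Z, E\}$ means precisely that the truth value of the corresponding atomic proposition differs between the two endpoints of the interval: for $Z$ it goes from $\top$ at $t_0$ to $\bot$ at $t_0+\tau$, and for $E$ from $\bot$ to $\top$. Hence $P(\sigma(t_0))(p) \neq P(\sigma(t_0+\tau))(p)$ and likewise $P(\sigma(t_0))(p') \neq P(\sigma(t_0+\tau))(p')$. Applying property~\eqref{eq: Z E} with $t = t_0$ (noting $t_0 + \tau = t + \tau$) then yields $p = p'$, contradicting our assumption. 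This proves the second item.

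\medskip

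I expect the main obstacle to be the first item rather than the second: item~2 is almost an immediate restatement of~\eqref{eq: Z E} once one unpacks what $Z$ and $E$ mean at the endpoints, whereas item~1 requires a careful case analysis showing that any Boolean-valued function on $[t_0,t_0+\tau]$ that is \emph{not} one of the four patterns must exhibit a repeated value with an intervening change, so that~\eqref{eq: no B,S} applies. The subtlety lies in handling the half-open versus closed endpoint conventions in the definitions of $Z$ and $E$ (where the switching time $t'$ ranges over $[n\tau,(n+1)\tau)$) and in verifying that a single threshold crossing is genuinely forced; I would treat the possible orderings of the three relevant values (at $t_0$, at an intermediate time, and at $t_0+\tau$) explicitly but briefly, since each case reduces to one invocation of~\eqref{eq: no B,S}.
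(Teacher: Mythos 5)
Your proposal matches the paper's own treatment: the paper states Lemma~\ref{lem: BSZN} without an explicit proof, presenting it as an immediate consequence of Assumption~\ref{asm: restriction BSZE}, and your two reductions --- item~1 from \eqref{eq: no B,S} via ``at most one crossing of the $p$-boundary within a window of length $\tau$'', and item~2 from \eqref{eq: Z E} applied at $t = n\tau$ after observing that an observation in $\{Z,E\}$ forces differing values at the two endpoints --- are exactly the intended argument. Item~2 is airtight as you state it. For item~1, the endpoint subtlety you flag is genuine and is not actually dischargeable from \eqref{eq: no B,S} alone: the signal that is $\top$ on $[n\tau, t^*)$ and $\bot$ on $[t^*, (n+1)\tau]$ satisfies \eqref{eq: no B,S} (any two times with equal value have the signal constant between them), yet it is not $A$, $N$, $E$, nor $Z$, since $Z$ requires the value $\top$ \emph{at} the switch instant $t'$. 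So the step ``not in $\{A,Z,E,N\}$ implies a repeated value straddling an intervening change'' fails for exactly this one configuration. This is a convention mismatch in the paper's definitions of $Z$ and $E$ (which fix the value at the switch instant) rather than a flaw in your strategy; under the tacit assumption that the switch instant carries the value prescribed by those definitions, your case analysis closes, and this is evidently what the authors intend.
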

% \begin{proof}
%     The property 2 holds by \eqref{eq: Z E}. The property 1 holds by \eqref{eq: no B,S} and \eqref{eq: Z E}.
  
% \end{proof}

Lemma~\ref{lem: BSZN} indicates that at any time step $n \in \N$: $\mathit{1.}$
$\chopping{\varsigma}{\tau}(n) \in \OOO$ for all $p \in \AP$, and $\mathit{2.}$ the system trajectory crosses at most one AP-region border during the time interval $[n \tau, (n+1) \tau]$.
By this lemma and the definition of signal slicing, we also have the following corollary. 
\begin{corollary}
    \label{cor:signal-chopping-is-ap-word}
    Assume both properties in Assumption~\ref{asm: restriction BSZE}. 
    Given a %system 
    trajectory $\sigma$, 
    let
    $\chopping{\varsigma}{\tau} %\colon \N \to \OOO^{\AP}
    $ be the chopped AP-signal of $\varsigma = P\circ \sigma %\colon \Rnonneg \to 2^{\AP}
    $.
    For all $n \in \N$ and $p \in \AP$,
    \begin{equation*} 
    \chopping{\varsigma}{\tau}(n)(p) \in \{A, E\} \iff \chopping{\varsigma}{\tau}(n+1)(p) \in \{A, Z\}.
\end{equation*}
\end{corollary}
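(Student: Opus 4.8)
The plan is to collapse both sides of the equivalence onto a single condition concerning the value of the signal at the shared boundary point $(n+1)\tau$, which is simultaneously the right endpoint of slice $n$ and the left endpoint of slice $n+1$.

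First I would invoke Lemma~\ref{lem: BSZN} to ensure that every observation is well-defined, so that $\chopping{\varsigma}{\tau}(n)(p)$ and $\chopping{\varsigma}{\tau}(n+1)(p)$ each take one of the four values $A, Z, E, N$; this is what keeps the ensuing case analysis finite. Next I would unfold the definition of signal chopping at the endpoints. Slice $n$ lives on the closed interval $[n\tau,(n+1)\tau]$, and inspecting the four cases shows that $\varsigma((n+1)\tau)(p) = \top$ precisely when the observation is $A$ or $E$: under $A$ the signal is $\top$ on the whole closed interval, and under $E$ the switch point $t'$ lies in $[n\tau,(n+1)\tau)$, hence strictly below the right endpoint, so $\top$ holds there; under $Z$ the same strict inequality forces $\bot$ at $(n+1)\tau$, and under $N$ the signal is $\bot$ throughout. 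Symmetrically, slice $n+1$ lives on $[(n+1)\tau,(n+2)\tau]$, and the same unfolding shows that $\varsigma((n+1)\tau)(p) = \top$ precisely when that observation is $A$ or $Z$, these being exactly the observations that hold at the left endpoint.

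Putting the two readings together yields
$$\chopping{\varsigma}{\tau}(n)(p) \in \{A,E\} \iff \varsigma((n+1)\tau)(p) = \top \iff \chopping{\varsigma}{\tau}(n+1)(p) \in \{A,Z\},$$
which is exactly the stated equivalence. I expect the only delicate point to be the open/closed conventions in the definitions of $Z$ and $E$: one must use that the switch point $t'$ belongs to the half-open interval $[n\tau,(n+1)\tau)$ so that the value at the right endpoint $(n+1)\tau$ is pinned down unambiguously. This is a bookkeeping matter rather than a genuine obstacle, since the substantive work of guaranteeing that the observations range over only $\{A,Z,E,N\}$ has already been carried out in Lemma~\ref{lem: BSZN}.
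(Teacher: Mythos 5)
Your proof is correct and follows exactly the paper's intended argument: the paper justifies this corollary by noting that, because the slices are closed intervals sharing the endpoint $(n+1)\tau$, the proposition holds at the end of slice $n$ (observation $A$ or $E$) iff it holds at the start of slice $n+1$ (observation $A$ or $Z$). You have simply made the endpoint case analysis explicit, including the correct handling of the half-open range of the switch point $t'$, so there is nothing to add.
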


Corollary~\ref{cor:signal-chopping-is-ap-word} follows from Lemma~\ref{lem: BSZN}.
Indeed, because we use closed intervals, an atomic proposition holds at the end of an interval ($A$ or $E$) iff it holds at the beginning of the next one ($A$ or $Z$).

% Corollary~\ref{cor:accepting-word-signal} follows from Lemma~\ref{lem: BSZN}, and because partitioning the signals into slices over closed intervals guarantees that if an atomic proposition holds at the end of one interval ($A$ or $E$), it also holds at the start of the next ($A$ or $Z$).
% \todo[inline]{ remark that it works because we consider closed intervals.}

\subsection{Symbolic Models}\label{section:Symbolic Models and Control Strategies}

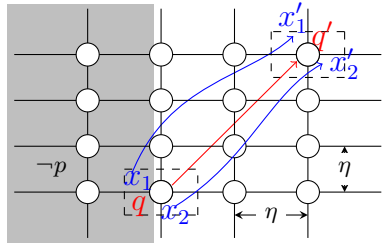
\begin{figure}[t]
% \resizebox{0.35\textwidth}{0.35\textwidth}
% \resizebox{0.4\textwidth}{0.2\textwidth}
{
\begin{tikzpicture}[x=0.08\textwidth, y=0.05\textwidth]
\tikzset{dot/.style={fill=white, draw=black,circle}}
\node[dot,fill=red] at (2,1){}; 
\draw[dashed] (1.5,0.5) -- (2.5,0.5);
\draw[dashed] (1.5,1.5) -- (2.5,1.5);
\draw[dashed] (1.5,0.5) -- (1.5,1.5);
\draw[dashed] (2.5,0.5) -- (2.5,1.5);
\node[dot,fill=red] at (4,4){};
\draw[dashed] (4.5,4.5) -- (3.5,4.5);
\draw[dashed] (4.5,3.5) -- (3.5,3.5);
\draw[dashed] (4.5,4.5) -- (4.5,3.5);
\draw[dashed] (3.5,4.5) -- (3.5,3.5);
% \path [->,draw=blue, snake it]
%     (1.7,1.4) -- (1.7,3.4) -- (2.0,3.5) -- (3, 3.2)--(3.5, 4.0)--(3.8, 4.4);
\path[->, draw=blue]
  (1.6,1.4) .. controls (2.1,3.4) and (3.2,3.5) .. (3.8, 4.4); 
% \node [blue, font = \large] at (1.4,2.2) {$u$}; 
% \node [blue, font = \large] at (3.5,1.5) {$u$}; 

\draw[->, red] plot[smooth,tension=0.3] coordinates
    {(2.15,1.15)   (3.85,3.85)};
\node [red, font = \large] at (1.7,0.7) {$q$}; 
% \node [red, font = \large] at (2.7,2.5) {$u$}; 

\draw [stealth-stealth](4.5,1) -- (4.5,2);
\node [fill=white] at (4.5, 1.5) 
    {$\eta$};

\draw [stealth-stealth](3,0.5) -- (4,0.5);
\node [fill=white] at (3.5, 0.5) 
    {$\eta$};

\begin{scope}[on background layer]   
          % \filldraw[lightgray,line width=50pt] (1,1) --  (1,2) -- cycle;
        \fill[lightgray] (-0.1,-0.1) rectangle (1.9,5.1); 
        \end{scope}
\node [fill=lightgray] at (0.5, 1.5) 
    {$\neg p$};
    
% changed by clovis
\node [red, font = \large] at (4.2,4.4) {$q'$}; 
\foreach\l[count=\y] in {A,...,D}
{
    \draw (0.0,\y) -- (5,\y);
    % \node at (-0.5,\y){\l};
}
\foreach \x in {1,2,...,4}{
    \draw (\x,0) -- (\x,5);
    % \node at (\x,-0.5){\x};
        \foreach \y in {1,...,4}
            \node[dot] at (\x,\y){};
}
% \node[dot,pattern color=teal, pattern={north east lines}] at (3,1){};
% \node[rectangle,fill=teal, inner sep=1.7pt,minimum size=2pt] at (2.5,0.7){};
% \node [teal, font = \large] at (3.3,0.7) {$q''$};
% changed by clovis
\node [blue, font = \large] at (1.7,1.25) {$x_1$}; 
\node [blue, font = \large] at (3.8, 4.8) {$x'_1$ };
\node [blue, font = \large] at (2.2,0.5) {$x_2$}; 
\node [blue, font = \large] at (4.5, 3.8) {$x'_2$ }; 
% added by clovis
% \path [->,draw=blue, snake it]
%     (2.2,0.7) -- (3.2, 1.7) -- (3.5, 3.2) -- (4.2, 3.8);
\path[->, draw=blue]
  (2.2,0.7) .. controls (3.2, 1.7) and (3.5, 3.2) .. (4.2, 3.8);
% added by clovis
\end{tikzpicture}
}
\centering
\caption{A quantized state space using a quantization parameter $\eta \in \Rpos$. 
The circles symbolize discrete states.
Each discrete state represents (e.g., $q$ and $q'$) a corresponding state region of the size
$\eta \times \eta$ (e.g., the dashed boxes around $q$ and $q'$ circles, respectively). The atomic proposition $p$ holds 
 at all states, except those in the left gray half space.
\label{fig:quantized}
}
\end{figure}

We consider a symbolic model that serves as an abstraction of the dynamical system $\Sigma$, not only with respect to the time interval $\tau$ discussed in the previous section, but also by abstracting the continuous state space into a discrete set of states.
A symbolic model is a labeled transition system~\cite{baier2008principles} $\SSS = (Q, \delta, q_\ini)$, where $Q$ is a discrete set of states, $\delta \subseteq Q \times \OOO^{\AP} \times Q$ is
a transition relation, and $q_\ini \in Q$ is the initial state.
% \ssc{Notice that our symbolic model slightly differs from the usual one as we also consider the $\mathit{AP}$-observations along the transition.}
% Notice that our symbolic model slightly differs from the usual one as transitions are labeled by $\mathit{AP}$-observations, not control signals. 
If a transition $(q, o, q') \in \delta$ exists,
this means that the system may move from state $q$ to $q'$ in exactly time $\tau$, provided that the observations of atomic propositions are those given by the function $o \colon \AP \to \OOO$.
%satisfies $p \in o(p)$. 
The transition system is nondeterministic in the sense that there may exist two transitions $(q, o, q'), (q, o, q'') \in \delta$ with $q' \neq q''$, meaning that the system may transition from $q$ to $q'$ or $q''$. Moreover, there may exist two transitions $(q, o, q'), (q, o', q') \in \delta$ with $o \neq o'$, meaning that the observation of an atomic proposition $p\in \AP$ may be $o(p)$ or $o'(p)$.

% The notions of 
% infinite and finite runs carry directly to the symbolic model.
Symbolic models have discrete executions defined in terms of runs, whereas dynamical systems have continuous executions defined in terms of trajectories.
Namely, an \emph{infinite (\textnormal{resp.\ } finite) run} % (\emph{resp. finite run})
of the symbolic model $\SSS$ is a sequence 
$r_s = q_0  o_0 q_1 \ldots \in Q (\OOO^{\AP} Q)^\omega$
(\emph{resp.} $r_s = q_0 o_0 q_1 \ldots o_{n-1} q_n \in Q (\OOO^{\AP} Q)^*$ with $n \in \N$)
such that %there exists  
$(q_i, o_i, q_{i+1}) \in \delta$ for all $i \in \Znonneg$ (\emph{resp.} $i \in \{0, \ldots, n-1\}$). In what follows, we refer to infinite runs simply as \emph{runs}, and specify \emph{finite runs} explicitly when needed.
%$O_i \in \PPP(\OOO)^{\AP}$ such that $o_i \in O_i$ and
% $(q_i, O_i, q_{i+1}) \in \delta$ for all $i \in \Rnonneg$ (\emph{resp.} $i \in \{0, \ldots, n-1\}$).\todo{why not just $o_i \in \OOO^{AP}$? Should we use $o_i$ or $\nu_i$?}

% Intuitively, a symbolic controller controls the symbolic model $\SSS$ in the same way as a controller controls $\Sigma$, but taking runs as inputs instead of strategies.
% A \emph{symbolic control strategy} is a function $\pi_s: Q \times (\VVV \times \OOO^{\AP} \times Q)^* \to \VVV$.

% A strategy is \emph{memory-finite} if it can be expressed as a function $\pi' \colon Q \times \OOO^{\AP} \times N \to \VVV \times N$, where $N$ is a finite set that represents the memory of the strategy (we can recover a strategy in the usual sense from $\pi'$ by iterating it on the elements of the sequence to update the memory).
% If $\pi'$ is a memory-finite strategy, then the symbolic model $\SSS$ controlled by $\pi'$ can be written as an automaton whose set of states is $Q \times N$ and transitions are given by $\pi'$.

% \subsection{Soundness of Symbolic Models}

A symbolic state $q \in Q$ 
%represents several or 
may represent an infinite number of actual system states.
We are interested in when a symbolic model soundly represents a dynamical system, i.e., when all behaviors of the dynamical system are modeled by those of the symbolic model.
In Fig.~\ref{fig:quantized}, state $q$ represents all states in the dashed square centered around it, including $x_1$ and $x_2$.
In this example, both trajectories from $x_1$ to $x_1'$ and $x_2$ to $x_2'$ are abstracted to the transition from $q$ to $q'$.
However, the atomic proposition $p$ holds throughout the entire trajectory from $x_2$, but not at the beginning of the trajectory from $x_1$.
Therefore, there must exist two transitions, $(q, o, q')$ and $(q, o', q')$, where $o(p) = E$ and $o'(p) = A$, reflecting the fact that the atomic proposition $p$ may hold either only at the end or throughout the entire trajectory.

We consider symbolic models constructed by any method as long as they provide the following information.
\begin{enumerate}
    \item An abstraction map $\abs: X \to Q$ that maps each system state to its corresponding symbolic state.
    %, \ssc{with $\abs^{-1}: Q \to 2^\mathit{X} \setminus \emptyset$}\todo{this is to ensure that each $q \in Q$ has at least one corresponding state in $X$; therefore, a run corresponds to at least one trajectory.}.
    % and the inverse map $\abs^{-1}: Q \to 2^\mathit{X}\setminus \emptyset$.
    For the example in Fig.~\ref{fig:quantized}, the states in the dashed boxes around $q$ and $q'$ are mapped to $q$ and $q'$, respectively.
    % Remark that all states on the grid lines can be mapped to any discrete state within the distance $\eta/2$. For example, the teal square point cap be mapped to either $q$ or $q''$. 
    
    \item It must be so that,  
    % for all $x \in X$ and all $x' \in \xi(x,\tau)$, %and trajectory $\sigma$ from $x$ to $x'$, 
    % there exists the transition $(\abs(x),  o, \abs(x'))\in \delta$ in the symbolic model representing a system trajectory from $x$ to $x'$. Formally,
    \begin{equation}\label{eq: approximate simulation}
        \text{For all }x \in X \text{ and all }x' \in \xi(x,\tau), \text{ there exists } (\abs(x),  o, \abs(x'))\in \delta.
    \end{equation}
    Namely, there always exists a transition $(\abs(x),  o, \abs(x'))\in \delta$ representing a trajectory from $x$ to $x'$.
    This property is known as \emph{approximate simulation} and can be ensured by constructing symbolic models using the methods %proposed 
    in~\cite{Zamani2012, hashimoto20, pruekprasert2020symbolic, pruekprasert2021fast}. 
    % In Fig.~\ref{fig:quantized}, the transition $(q, o, q)$ represents several actual system transitions, including those from $x_1$ to $x'_1$ and from $x_2$ to $x'_2$.
    Note that a transition 
    $(q, o, q')\in \delta$ may represent %several
    infinitely many
    trajectories from $ \abs^{-1}(q)$ to $\abs^{-1}(q')$ under observations given by $o$.
    
    \item Functions $%\rho_A,
    \rho_Z, \rho_E: Q \times Q \times \AP \to \{+, -, ?\}$ 
    from which we define $P_Z$, and $P_E$ as follows:
    \begin{align*}
        P_Z(q, q', p) &=
        \begin{cases}
            \set{A, Z} & \text{if $\rho_Z(q, q', p) = +$} \\
            \set{E, N} & \text{if $\rho_Z(q, q', p) = -$} \\
            \OOO & \text{otherwise}
        \end{cases} \\
        P_E(q, q', p) &=
        \begin{cases}
            \set{A, E} & \text{if $\rho_E(q, q', p) = +$} \\
            \set{Z, N} & \text{if $\rho_E(q, q', p) = -$} \\
            \OOO & \text{otherwise.}
        \end{cases}
    \end{align*}
    They must be such that, for all trajectories $\sigma$ from $x \in \abs^{-1}(q)$ to $x' \in \abs^{-1}(q')$, the observation of $p$ along $\sigma$ must belong to $P_Z \cap P_E$.
    Formally, for all $(x, x') \in \abs^{-1}(q) \times \abs^{-1}(q')$,
    \begin{align}\label{eq: PzPe}  
    x' \in \xi(x, \tau)
    \implies
    \forall p \in \AP,
    \chopping{\varsigma}{\tau}(0)(p) \in P_Z(q, q', p) \cap P_E(q, q', p), 
    % \begin{split}
    % &(x \in X \text{ and } x' \in \xi(x, \tau))
    % \\
    % &\implies 
    % \forall p \in \AP,
    % \chopping{\varsigma}{\tau}(0)(p) \in P_Z(q, q', p) \cap P_E(q, q', p),
    % \end{split}
    \end{align}
    where $\chopping{\varsigma}{\tau}$ is the chopped AP-signal of $\varsigma  = P\circ \sigma$, and $\sigma \colon [0,\tau] \to X$ is the finite trajectory  from $x$ to $x'$.
    Then, for all $o \colon \AP \to \OOO$, we require that there exists a transition $(q, o, q') \in \delta$ if
    % \begin{equation}\label{eq:consistency_run}
    % o(p) \in P_Z(q, q', p) \cap P_E(q, q', p), \text{for all } p \in \AP.
    % \end{equation}
    \begin{equation}\label{eq: PzPe2} 
         o(p) \in P_Z(q, q', p) \cap P_E(q, q', p), \text{for all } p \in \AP.
    \end{equation} 
\end{enumerate}
    An intuitive explanation of the two functions is as follows.  
    The function $\rho_Z$ under-approximates the set of atomic propositions that hold and do not hold along the system trajectory at the beginning (at time \textbf{Z}ero).
    We have that $\rho_Z(q, q', p) = +$ if we know $p$ holds at the beginning of any trajectory from $q$ to $q'$ and $\rho_Z(p) = -$ if we know it never holds at the beginning.
    It returns $?$ if the approximation is too imprecise to give an answer or there exist a trajectory where $p$ holds at the beginning and another where it does not.
    The function $\rho_E$ is $\rho_Z$'s counterpart for the \textbf{E}nd of trajectories (at time $\tau$).
    We show example methods to construct these functions in Section~\ref{section:abstraction:discretization}.

% The following theorem ensures that the reduction to symbolic models is sound.
The following theorem states soundness of the reduction to symbolic models.

\begin{theorem}\label{thm:system-abstraction-soundness}
Given a dynamical system $\Sigma$, let $\SSS$ be a symbolic model constructed as above. For a trajectory $\sigma \colon \Rnonneg \to X$ of $\Sigma$, there exists a run $q_0 o_0 q_1 \ldots$ such that, for all $k \in \N$,
$q_k = \gamma(\sigma(k \tau))
\text{ and }
o_k = \chopping{\varsigma}{\tau}(k)$,
% \[
% q_k = \gamma(\sigma(k \tau))
% \text{ and }
% o_k = \chopping{\varsigma}{\tau}(k),
% \]
where $\chopping{\varsigma}{\tau}$ is the chopped AP-signal of $\varsigma = P\circ \sigma \colon \Rnonneg \to 2^{\AP}$. %defined in Section~\ref{section:time-abstraction}. 

% \todo{change}
%     Given a dynamical system $\Sigma$ and a symbolic model $\SSS$ as above, for all symbolic control strategies $\pi_s$ for $\SSS$, there exists a control strategy $\pi$ for $\Sigma$ such that for all trajectories $\sigma_0 \sigma_1 \ldots$ induced by $\pi$, there exists a run $q_0 u_0 o_0 q_1 \ldots$ such that the observation of $p$ on $\sigma_k$ is $o_k$.
\end{theorem}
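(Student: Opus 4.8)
The plan is to take the sequence defined in the statement, namely $q_k := \gamma(\sigma(k\tau))$ and $o_k := \chopping{\varsigma}{\tau}(k)$ for $k \in \N$, and to verify directly that it is a run, i.e.\ that $(q_k, o_k, q_{k+1}) \in \delta$ for every $k$. First I would observe that each $o_k$ is a well-defined element of $\OOO^{\AP}$: working under Assumption~\ref{asm: restriction BSZE}, Lemma~\ref{lem: BSZN} guarantees $\chopping{\varsigma}{\tau}(k)(p) \in \{A,Z,E,N\}$ for every $p \in \AP$, so the chopped signal makes sense slice by slice. Note that the definition of \emph{run} in Section~\ref{section:Symbolic Models and Control Strategies} only constrains the transition relation and not the first state, so no separate argument relating $q_0$ to $q_\ini$ is needed.

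The core of the argument is local, handling one interval $[k\tau,(k+1)\tau]$ at a time. Fix $k$ and set $x := \sigma(k\tau)$ and $x' := \sigma((k+1)\tau)$, so that $q_k = \gamma(x)$ and $q_{k+1} = \gamma(x')$. I would first establish that $x' \in \xi(x,\tau)$ and, more precisely, that the time-shifted map $\sigma_k \colon [0,\tau] \to X$, $\sigma_k(s) := \sigma(k\tau+s)$, is a finite trajectory from $x$. The second local step is to match observations: because the definition of $\chopping{\cdot}{\tau}$ inspects the signal only on $[k\tau,(k+1)\tau]$, the $k$-th slice of $\varsigma = P\circ\sigma$ coincides with the $0$-th slice of $\varsigma_k := P\circ\sigma_k$, that is $o_k(p) = \chopping{\varsigma_k}{\tau}(0)(p)$ for all $p$. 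This is a routine case split over the four patterns $A,Z,E,N$ under the reparametrisation $t \mapsto k\tau+t$.

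With these two facts in hand, I would invoke the guarantees attached to the symbolic model. Since $\sigma_k$ is a finite trajectory from $x \in \gamma^{-1}(q_k)$ to $x' \in \gamma^{-1}(q_{k+1})$ with $x' \in \xi(x,\tau)$, property~\eqref{eq: PzPe} yields $\chopping{\varsigma_k}{\tau}(0)(p) \in P_Z(q_k,q_{k+1},p) \cap P_E(q_k,q_{k+1},p)$ for every $p \in \AP$. Combined with the slice-matching of the previous step, this gives $o_k(p) \in P_Z(q_k,q_{k+1},p) \cap P_E(q_k,q_{k+1},p)$ for all $p$, which is exactly the hypothesis of~\eqref{eq: PzPe2}. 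Hence a transition $(q_k,o_k,q_{k+1}) \in \delta$ exists. (Note that the weaker approximate-simulation property~\eqref{eq: approximate simulation} would only produce \emph{some} transition out of $q_k$, which is why the refined functions $P_Z,P_E$ are needed to pin down the specific $o_k$.) As $k$ was arbitrary, $q_0 o_0 q_1 o_1 \cdots$ satisfies the transition relation at every index and is therefore a run with the required values of $q_k$ and $o_k$.

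The main obstacle I anticipate is the first step of the local argument, namely justifying $x' \in \xi(x,\tau)$. The trajectory definition only requires $\sigma(t) \in \xi(\sigma(0),t)$ relative to the initial state, and this alone does \emph{not} force $\sigma((k+1)\tau) \in \xi(\sigma(k\tau),\tau)$: the semigroup identity $\xi(x_\ini,(k+1)\tau) = \xi(\xi(x_\ini,k\tau),\tau)$ only places $\sigma((k+1)\tau)$ in the reachable set of \emph{some} state of $\xi(x_\ini,k\tau)$, not necessarily of $\sigma(k\tau)$ itself. The proof therefore relies on trajectories being causal, i.e.\ closed under time-shift, so that $\sigma_k$ is again a trajectory from $\sigma(k\tau)$; this is the intended reading for solutions of a continuous-time system and is consistent with the semigroup property of $\xi$. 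Everything else—well-definedness of the slices, the slice-matching, and the two invocations of the symbolic-model properties—is routine.
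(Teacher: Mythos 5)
Your proof is correct and is essentially a fleshed-out version of the paper's own one-line argument, which likewise constructs the run by induction, obtaining the states via the approximate-simulation property~\eqref{eq: approximate simulation} and the observations via~\eqref{eq: PzPe} and~\eqref{eq: PzPe2}. The subtlety you flag---that the definition of trajectory only constrains $\sigma(t)$ relative to $\sigma(0)$, so $\sigma((k+1)\tau) \in \xi(\sigma(k\tau),\tau)$ needs time-shift closure of trajectories rather than just the semigroup law---is genuine and is silently assumed by the paper as well; your resolution is the intended reading.
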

\begin{proof}
By induction, there exists a run $q_0 o_0 q_1 \ldots$ such that, for all $k \in \N$, we have $q_k = \gamma(\sigma(k \tau))$ by \eqref{eq: approximate simulation}, and $o_k = \chopping{\varsigma}{\tau}(k)$ by \eqref{eq: PzPe} and \eqref{eq: PzPe2}.
\end{proof}

\subsection{System Discretization}
\label{section:abstraction:discretization}

The most common way to construct $\SSS$ from $\Sigma$ is to quantize the system state space $X$ into a discrete finite state set $Q$ using fixed-length grid cells.
The quantization of space is illustrated in Fig.~\ref{fig:quantized}.
Formally, $Q = \setcomp{(k_1 \eta, \ldots, k_n \eta) \in X}{k_i \in \Z}$, and $\abs$ maps each state $X$ to the closest state in $Q$ (with an arbitrary choice for states at equal distance from several points in $Q$).
Using this abstraction process, the following $\rho_Z$ and $\rho_E$ satisfy the requirements in Section~\ref{section:Symbolic Models and Control Strategies}:
% Using the above state abstraction process, one candidate pair of functions $\rho_Z$ and $\rho_E$ that satisfies the requirements in Section~\ref{section:Symbolic Models and Control Strategies} is as follows:
\begin{align*}
    \rho_Z(q, q', p) &=
    \begin{cases}
        + & \text{if for all $x \in \BBB_{\eta/2}(q)$, $P(x)(p) = \top$} \\
        - & \text{if for all $x \in \BBB_{\eta/2}(q)$, $P(x)(p) = \bot$} \\
        ? & \text{otherwise}
    \end{cases}
    %\\
    % \rho_E(q, q', p) &=
    % \begin{cases}
    %     + & \text{if for all $x \in \BBB_{\eta/2}(q')$, $P(x)(p) = \top$} \\
    %     - & \text{if for all $x \in \BBB_{\eta/2}(q')$, $P(x)(p) = \bot$} \\
    %     ? & \text{otherwise,}
    % \end{cases}
\end{align*}
where $\BBB_{\eta/2}(q) = \{x \in \mathbb{R}^n \,\mid\, 
\infinitynorm{q-x} \leq \eta/2\}$, and $\rho_E(q, q', p)$ is defined similarly, replacing
% $q$ by $q'$.
$\BBB_{\eta/2}(q)$ by $\BBB_{\eta/2}(q')$.
The intuition is that $\rho_Z(q, q', p)$ (\emph{resp.} $\rho_E(q, q', p)$) should be $+$ if for all trajectories from $x \in \BBB_{\eta/2}(q)$ to $x' \in \BBB_{\eta/2}(q')$, $p$ holds at the beginning (\emph{resp.} the end)
of the trajectory, 
i.e., exactly when for all $x \in \BBB_{\eta/2}(q)$, $P(x)(p) = \top$. 
% Lemma~\ref{lem: BSZN} and~\eqref{eq: approximate simulation} 
% ensure that 
% $\rho_Z$ and $\rho_E$ satisfy~\eqref{eq: PzPe}, and can be used to construct $\delta$ satisfying~\eqref{eq: PzPe2}.
By Lemma~\ref{lem: BSZN} and~\eqref{eq: approximate simulation}, 
these functions $\rho_Z$ and $\rho_E$ can be used to construct $P_Z$ and $P_E$ satisfying~\eqref{eq: PzPe}, and $\delta$ satisfying~\eqref{eq: PzPe2}. % by over-approximating all possible trajectories.
%by assigning transitions for all possible trajectories.

\section{
\ABZESN{} Automata
}\label{section:ABZESN}
We introduce \emph{\ABZESN{} automata}, where the transitions are labeled by observations of atomic propositions. For a given LTL formula, we construct a generalized \ABZESN{} automaton that approximates all observations of the subformulas. This construction is inspired by Vardi and Wolper's translation of LTL formulas to generalized Büchi automata~\cite{vardi1986automata}, but is specifically adapted to our setting for continuous-time LTL, as we need to consider four observations in $\OOO = \{A, Z, E, N\}$, instead of the two values $\top$ and $\bot$.
%, which impact the construction of the generalized \ABZESN{} automaton.

% Our goal in this section is to adapt Gerth et al.'s algorithm~\cite{} to our setting for continuous-time LTL.
% We need to consider six values for LTL formulas (instead of the two values $\top$ and $\bot$), which impacts the construction of the generalised Büchi automaton.

\subsection{Signal Word and \ABZESN{} Automata}\label{section:signal word}

A \emph{signal word} is a function $w \colon \N \to \OOO^{\AP}$ such that for all $k \in \N$ and $p \in \AP$, $w_k(p) \in \set{A, E}$ iff $w_{k+1}(p) \in \set{A, Z}$ (thus $w_k(p) \in \set{Z, N}$ iff $w_{k+1}(p) \in \set{E, N}$), where $w_k$ is a shorthand for $w(k)$.
Notice that a chopped AP-signal $\chopping{\varsigma}{\tau}$, defined in Section~\ref{section:time-abstraction}, is a signal word.
The intuition is that a signal word is an abstraction of all possible signals mapped to it through signal chopping.

In this section, we assume some given LTL formula $\phi$, and we want to construct an automaton $\BBB_\phi$ that is sound for $\phi$, i.e., it only accepts words that represent signals that satisfy $\phi$.
Formally, we want to build $\BBB_\phi$ such that, if $w$ is in its recognized language, then for all signals $\varsigma$ and durations $\tau$, if $\chopping{\varsigma}{\tau} = w$, then $\varsigma, 0 \vDash \phi$.

Hence, we introduce \ABZESN{} automata, which we use to verify that dynamical systems satisfy continuous-time LTL properties.
They are very similar to classic Büchi automata used for verification of LTL, but one crucial difference is that they work on signal words on the alphabet $\OOO^{\AP}$, rather than words on the alphabet $2^{\AP}$.
Formally, a \emph{nondeterministic \ABZESN{} automaton} (or simply \emph{\ABZESN{} automaton})
is a tuple $\BBB = (B, \delta_b, b_\ini, F)$, where $B$ is a finite set of states, $\delta_b \subseteq B \times \OOO^{\AP} \times B$ is the transition relation, $b_\ini \in B$ is the initial state, and $F \subseteq B$ is the set of accepting states.
A \emph{run} of a signal word $w$ through $\BBB$ is an infinite sequence of states $b_0 b_1 \ldots$ such that $b_0 = b_\ini$ and for all $k \in \N$, $(b_k, w_k, b_{k+1}) \in \delta$.
A run is accepting if it visits $F$ infinitely many times.
The \emph{recognized language} of $\BBB$ is the set of signal words that induce at least one accepting run.
% and whose runs are all accepting.
% Note that, contrary to classic Büchi automata, the nondeterminism of an \ABZESN{} automaton is demonic, rather than angelic: all runs need to be accepting to recognize a signal word, while it is enough for one run to be accepting for a Büchi automaton to recognize a word.

% The automaton in Fig.~\ref{fig:automaton}\todo{need change. Now, Fig. 2 is a generalized automaton} is an example of Büchi automaton $(\setcomp{q_i}{0 \leq i \leq 3}, \delta_b, q_0, \set{q_2, q_3})$ where $\delta_b$ can be derived from the picture.
% For example, $(q_0, p^{N}, q_1) \in \delta_b$, where $p^{N}$ denotes the function that maps $p$ to $N$.

Like the original 
LTL-to-Büchi-automaton
construction~\cite{vardi1986automata}, we start by first building a generalized \ABZESN{} automaton $\AAA_\phi$, then turn it into a (nondeterministic) \ABZESN{} automaton $\BBB_\phi$.
The following construction is the counterpart of generalized Büchi automata.
A \emph{generalized \ABZESN{} automaton} is a tuple $\AAA = (A, \delta_a, a_\ini, \FFF)$, where $\FFF \subseteq \PPP(A)$ %\todo{$2^B$?}
is a set of accepting sets.
All definitions are similar to those of \ABZESN{} automata, except that a run is accepting if it visits all $F \in \FFF$ infinitely often.
Figure~\ref{fig:automaton} shows an example of a generalized Büchi automaton $(\set{q_0, q_1, q_2, q_3}, \delta_b, q_0, \set{\{q_2, q_3\}, \{q_1, q_2, q_3\}})$, where $\delta_b$ can be derived from the picture.
For example, $(q_0, g^{N}, q_1) \in \delta_b$, where $g^{N}$ denotes the observation function that maps $g$ to $N$.

\subsection{Translation to Generalized \ABZESN{} Automaton}

\begin{figure}[t]
    \centering 
    % \resizebox{0.32\textwidth}{0.1\textwidth}
    {
    \begin{tikzpicture}[>=stealth, node distance=2cm, every state/.style={draw, circle, minimum size=0.5cm}]
    \tikzset{every state/.append style={font=\small}, every node/.append style={font=\scriptsize}}
    
    % Nodes
    \node[state, initial, initial text={}] (q0) {$q_0$};
    \node[state, accepting, double distance=1.5pt, right=2.5cm of q0] (q2) {$q_2$};
    \node[state, %double distance=1pt,
    above right=0.5cm and 1.2cm of q0] (q1) {$q_1$};
    \node[state, accepting, double distance=1.5pt, right=of q2] (q3) {$q_3$};

    % Transitions
    \path[->]
        (q0) edge[left] node[above, pos=0.3] {$N$} (q1)
        (q0) edge  node[above] {$Z$} (q2)
        (q0) edge[bend right = 25] node[above] {$A,  E$} (q3)
        (q1) edge[loop left, looseness=6] node[left] {$N$} (q1)
        % (q1) edge[bend right]  node[above] {$S$} (q2)
        (q1) edge[bend left]  node[below] {$E$} (q3)
        (q2) edge node[below] {$N$} (q1)
        % (q2) edge[loop above, looseness=6] node[right] {$S$} (q2)
        (q2) edge[right]  node[above] {$E$} (q3)
        (q3) edge[loop above, looseness=6] node[above] {$A$} (q3)
        (q3) edge[bend right]  node[above] {$Z$} (q2)
        ;
    \end{tikzpicture}
    
    }
    % \todo[inline]{change the figure}
    \caption{
    The generalized \ABZESN{} automaton for verifying the LTL formula $\Globally\Finally \apgreen$ $ (\equiv\bot \Release (\top \Until \apgreen))$ in Section~\ref{section:example}, but minimized by merging equivalent states ($q_3$ is a merged state).
    % where $\FFF = \{F_{\Finally \apgreen}=\{q_2, q_3\}, F_{\Globally\Finally \apgreen} = \{q_1, q_2, q_3\}\}$. In this case, 
    The subformulas are $\apgreen$, $\Finally \apgreen$, and $\Globally\Finally \apgreen$, and the accepting state sets are $\FFF = \{F_{\Finally \apgreen}=\{q_2, q_3\}, F_{\Globally\Finally \apgreen} = \{q_1, q_2, q_3\}\}$.
    % The valuation at
    % $q_1$ for the three subformulas are, $N$, $A$, and $A$, in order.  
    % The valuation at
    % $q_2$ is $(Z, A, A)$, and those at $q_3$ are $(A, A, A)$ and $(E, A, A)$. 
    Apart from $q_0$, the only non-accepting state (for $\Finally \apgreen$) is $q_1$. This reflects the fact that the system violates the specification only if its trajectory never visits a region satisfying $\apgreen$ after some point, i.e., the observation along the transitions is always $N$ ($\apgreen$ is never satisfied) after that point.
    % ~
    % The $\ABZESN$ automaton for verifying the LTL formula $\Globally\Finally \apgreen$ $ (\equiv\bot \Release (\top \Until \apgreen))$ in Section~\ref{section:example}, but minimized by merging equivalent states ($q_3$ is a merged state). 
    % In this case, the subformulas are $\apgreen$, $\Finally \apgreen$, and $\Globally\Finally \apgreen$.
    % The valuation at
    % $q_1$ for the three subformulas are, $N$, $A$, and $A$, in order.  
    % The valuation at
    % $q_2$ is $(Z, A, A)$, and that at $q_3$ are $(A, A, A)$ and $(E, A, A)$. 
    % Notice that, apart from $q_0$, the only non-accepting state is $q_1$. This reflects the fact that the system violates the specification only if its trajectory never visits a region satisfying $\apgreen$ after some point, i.e., the observation along the transitions is always $N$ ($\apgreen$ is never satisfied.) after that point.
    }
    \label{fig:automaton}
\end{figure}

Given an LTL formula $\phi$,
the construction of the corresponding generalized \ABZESN{} automaton $\AAA_\phi$ relies on the set $\cs(\phi)$ of \emph{consistent subformula valuations} $\nu \colon \sub(\phi) \to \OOO$ of $\phi$, where $\sub(\phi)$ is the set of subformulas of $\phi$. We first present the construction of the automaton and define $\cs(\phi)$ later. 
$\AAA_\phi = (A_\phi = \cs(\phi) \cup \set{q_0}, \delta_\phi, q_0, \FFF_\phi)$ is constructed as follows.

\begin{itemize}
% \item $G_\phi = \cs(\phi) \cup \set{q_0}$ is the set of states.
%, where $\cs(\phi)$ is the set of consistent subformula valuations of $\phi$ (explained below).
\item For all states $\nu \in \cs(\phi)$, $(\nu, o, \nu') \in \delta_\phi$ iff  
(A1) for all $p \in \AP$, $\nu'(p) = o(p)$, and
(A2) for all subformulas $\psi \in \sub(\phi)$, $\nu(\psi) \in \set{A, E}$ iff $\nu'(\psi) \in \set{A, Z}$.  
% \begin{enumerate}
%     \item[(A1)] for all $p \in \AP$, $\nu'(p) = o(p)$, and
%     \item[(A2)] for all sub-formulas $\psi \in \sub(\phi)$, $\nu(\psi) \in \set{A, E}$ iff $\nu'(\psi) \in \set{A, Z}$.
% \end{enumerate} 
\item 
There exists  $(q_0, o, \nu') \in \delta_\phi$ from the initial state $q_0$ iff 
(B1) for all $p \in \AP$, $\nu'(p) = o(p)$, and 
(B2) $\nu'(\phi) \in \set{A, Z}$.
% \begin{enumerate}
%     \item[(B1)] for all $p \in \AP$, $\nu'(p) = o(p)$, and 
%     \item[(B2)] $\nu'(\phi) \in \set{A, Z}$.
% \end{enumerate} 
% \item 
% There exists  $(q_0, o, \bot) \in \delta_\phi$ from the initial state $q_0$ to $\bot$ iff $\nu'(\phi) \in \set{E, S, N}$
% \item 
% There exists $(\bot, o, \bot) \in \delta_\phi$ for all $o$.
\item 
% If we denote $F_{\psi_1 \Until \psi_2} = \setcomp{\nu \in \cs(\phi)}{\nu(\psi_2) \neq N \text{ or } \nu(\psi_1 \Until \psi_2) \neq A}$ and
% $F_{\psi_1 \Release \psi_2} = \setcomp{\nu \in \cs(\phi)}{\nu(\psi_2) \neq A \text{ or } \nu(\psi_1 \Release \psi_2) \neq N}$, then the set of accepting state-sets is
% \begin{equation*}\label{eq:accepting_cond}
%     \FFF_\phi = \setcomp{F_{\psi_1 \Until \psi_2}}{\psi_1 \Until \psi_2 \in \sub(\phi)} \cup \setcomp{F_{\psi_1 \Release \psi_2}}{\psi_1 \Release \psi_2 \in \sub(\phi)},
% \end{equation*}
%The set of sets of accepting states is
% \begin{equation}\label{eq:accepting_cond}
$\FFF_\phi = \setcomp{F_{\psi_1 \Until \psi_2}}{\psi_1 \Until \psi_2 \in \sub(\phi)} \cup \setcomp{F_{\psi_1 \Release \psi_2}}{\psi_1 \Release \psi_2 \in \sub(\phi)}$ is the set of accepting states,
% \end{equation} 
where
$F_{\psi_1 \Until \psi_2} = \setcomp{\nu \in \cs(\phi)}{\nu(\psi_2) \neq N \text{ or } \nu(\psi_1 \Until \psi_2) \neq A}$ and
$F_{\psi_1 \Release \psi_2} = \setcomp{\nu \in \cs(\phi)}{\nu(\psi_2) \neq A \text{ or } \nu(\psi_1 \Release \psi_2) \neq N}$.
\end{itemize}

\begin{figure}[t]
\small
    \centering
    \[
    \begin{array}{|c|c|c|c|c|c|}
      \hline
      \psi_1 & \psi_2 & \psi_1 \land \psi_2 & \psi_1 \lor \psi_2 & \psi_1 \Until
        \psi_2 & \psi_1 \Release \psi_2 \\
      \hline 
      \multirow{4}{*}{\color{blue}\underline{\textbf{A}}}
      & A & A & A & A & A \\
      \cline{2-6}
      & \underline{Z} & \underline{Z} & A & \underline{AZ} & Z \\
      \cline{2-6}
      & E & E & A & A & E \\
      \cline{2-6}
      & \color{blue}\textbf{N} & N & A & \color{blue}\textbf{AN} & N \\
      \hline
      \multirow{4}{*}{$Z$}
      & A & Z & A & A & AZ \\
      \cline{2-6}
      & Z & Z & Z & Z & Z \\
      \cline{2-6}
      & E & N & A & A & EN \\
      \cline{2-6}
      & N & N & Z & N & N \\
      \hline
    \end{array}
    \hspace{1em}
    \begin{array}{|c|c|c|c|c|c|}
      \hline
      \psi_1 & \psi_2 & \psi_1 \land \psi_2 & \psi_1 \lor \psi_2 & \psi_1 \Until
        \psi_2 & \psi_1 \Release \psi_2 \\
      \hline 
      \multirow{4}{*}{$E$}
      & A & E & A & A & A \\
      \cline{2-6}
      & Z & N & A & AZ & N \\
      \cline{2-6}
      & E & E & E & E & E \\
      \cline{2-6}
      & N & N & E & EN & N \\
      \hline
      \multirow{4}{*}{\color{blue}\textbf{N}}
      &\color{blue}\textbf{A} & N & A & A & \color{blue}\textbf{AN} \\
      \cline{2-6}
      & Z & N & Z & Z & N \\
      \cline{2-6}
      & E & N & E & E & EN \\
      \cline{2-6}
      & N & N & N & N & N \\
      \hline
    \end{array}
    % \begin{array}{|c|c|c|c|c|c|}
    %   \hline
    %   \psi_1 & \psi_2 & \psi_1 \land \psi_2 & \psi_1 \lor \psi_2 & \psi_1 \Until
    %     \psi_2 & \psi_1 \Release \psi_2 \\
    %   \hline 
    %   \multirow{4}{*}{\color{blue}\underline{\textbf{A}}}
    %   & A & A & A & A & A \\
    %   \cline{2-6}
    %   & \color{blue}\textbf{Z} & \color{blue}\textbf{Z} & A & \color{blue}\textbf{AZ} & Z \\
    %   \cline{2-6}
    %   & E & E & A & A & E \\
    %   \cline{2-6}
    %   & \underline{N} & N & A & \underline{AN} & N \\
    %   \hline
    %   \multirow{3}{*}{$Z$}
    %   & A & Z & A & A & AZ \\
    %   \cline{2-6}
    %   & Z & Z & Z & Z & Z \\
    %   \cline{2-6}
    %   & N & N & Z & N & N \\
    %   \hline
    %   \multirow{3}{*}{$E$}
    %   & A & E & A & A & A \\
    %   \cline{2-6}
    %   & E & E & E & E & E \\
    %   \cline{2-6}
    %   & N & N & E & EN & N \\
    %   \hline
    %   \multirow{4}{*}{$N$}
    %   &A & N & A & A & AN \\
    %   \cline{2-6}
    %   & Z & N & Z & Z & N \\
    %   \cline{2-6}
    %   & E & N & E & E & EN \\
    %   \cline{2-6}
    %   & N & N & N & N & N \\
    %   \hline
    % \end{array}
  \]
    \caption{The consistency rules for generalized \ABZESN{} automata. 
    }
    \label{fig:consistency}
\end{figure}

Next, we define $\cs(\phi)$.  
A \emph{subformula valuation} is a function $\nu \colon \sub(\phi) \to \OOO$.
%, where $\sub(\phi)$ is the set of all subformulas of $\phi$.
For example,
in Fig.~\ref{fig:automaton}, the valuations of all states map $\Finally g (\equiv \top \Until g)$ and $\Globally \Finally g (\equiv \bot \Release \Finally g)$ to $A$, state $q_1$ maps $g$ to $N$, $q_2$ maps it to $Z$, and $q_3$ (which is a merge of two states) maps it either to $A$ or $E$.
% \ssc{Then, we leverage the concept of \emph{consistent} properties for $AP$-observations in} Eq.~\eqref{eq:consistency_observation} to apply to the valuation on subformulas of the given LTL formula.
The valuation $\nu$ is \emph{consistent} if $\nu$ follows the rules given by Fig.~\ref{fig:consistency}.
% \begin{itemize}
%     \item \ssc{for all $p \in \AP$, $\nu(p)$ and $\nu(\neg p)$} satisfy \eqref{eq:consistency_observation}, and
%     \item $\nu$ follows the rules given by Fig.~\ref{fig:consistency}.
% \end{itemize}
Then, $\cs(\phi)$ is the set of consistent subformula valuations of $\phi$. The size of the automaton is exponential in the number of subformulas.

The way to read the table in Fig.~\ref{fig:consistency} is as follows: given observations for subformulas $\psi_1$ and $\psi_2$, the consistent observations of $\psi_1 \land \psi_2$, $\psi_1 \lor \psi_2$, $\psi_1 \Until \psi_2$, and $\psi_1 \Release \psi_2$ are given in the table.
The intuition is that a valuation $\nu$ represents the current state of all subformulas.
For example, let us consider the second row of the table (the one with underlined text). 
It states in particular that, if $\nu(\psi_1) = A$ and $\nu(\psi_2) = Z$, then $\nu(\psi_1 \land \psi_2) = Z$ and $\nu(\psi_1 \Until \psi_2) \in \set{A, Z}$.
Indeed, if $\psi_1$ holds on a whole time interval and $\psi_2$ holds at its beginning but not at the end, then $\psi_1 \land \psi_2$ also holds at the beginning and not at the end, and $\psi_1 \Until \psi_2$ either holds only at the beginning (if $\psi_2$ never holds again, or $\psi_1$ stops holding before $\psi_2$ holds again) or on the whole interval (if $\psi_1$ keeps holding until $\psi_2$ holds again).

A noteworthy case is in the fourth row of the table (in bold blue text):  $\nu(\psi_1) = A$, $\nu(\psi_2) = N$, and $\nu(\psi_1 \Until \psi_2) \in \set{A, N}$.
If $\psi_1$ holds at all times and $\psi_2$ holds at none of the interval, then $\psi_1 \Until \psi_2$ either holds for the whole interval (if $\psi_1$ keeps holding until $\psi_2$ holds) or does not hold on the interval (if $\psi_1$ stops holding before $\psi_2$ holds).
Notice that a run $\nu_1 \nu_2 \ldots$ that assigns $\nu_k(\psi_1 \Until \psi_2) = A$ and $\nu_k(\psi_2) = N$ at all time steps $k$ is not an accepting run thanks to $F_{\psi_1 \Until \psi_2}$.
The reasoning is the same for $\Release$ when $\nu(\psi_1) = N$ and $\nu(\psi_2) = A$.

% \ce{Note in particular that the table has no rows $\psi_1 = Z$, $\psi_2 = E$ or $\psi_1 = E$, $\psi_2 = Z$.
% That is because it can never be the case that $\psi_1$ holds at the beginning of an interval and $\psi_2$ at the end by Assumption~\ref{asm: restriction BSZE}.}
We remark that the two rows $(\nu(\psi_1) = Z, \nu(\psi_2) = E)$ and $(\nu(\psi_1) = E, \nu(\psi_2) = Z)$ use the fact that only one atomic proposition may change during a time interval of length $\tau$ by Assumption~\ref{asm: restriction BSZE}, so if the satisfaction of two subformulas change during that interval, they must change exactly at the same point in time.

Formally, the connector $\land$ comes equipped with a function $c_\land \colon \OOO \times \OOO \to 2^{\OOO}$ described by the third column of Fig.~\ref{fig:consistency} (and similarly for connectors $\lor$, $\Until$, and $\Release$).
A subformula valuation $\nu \colon \sub(\phi) \to \OOO$ is \emph{consistent} if for all $\psi_1, \psi_2 \in \sub(\phi)$ such that $\psi_1 \odot \psi_2 \in \sub(\phi)$, $\nu(\psi_1 \odot \psi_2) \in c_\odot(\nu(\psi_1), \nu(\psi_2))$ for all connectors $\odot \in \set{\land, \lor, \Until, \Release}$.

The following lemma states that the table in Fig.~\ref{fig:consistency} is sound and complete. Note that the AP-signals $\varsigma  $ referred to in this lemma are general, and not necessarily those generated by $\Sigma$.% maybe remove later
\begin{lemma}
    \label{lem:consistency}
    For all connectors $\odot \in \set{\land, \lor, \Until, \Release}$, formulas $\psi = \psi_1 \odot \psi_2$, and signals $\varsigma \colon \Rnonneg \to 2^{\AP}$ and $\tau \in \Rpos$ that satisfy Assumption~\ref{asm: restriction BSZE},
    $\actualchopping{\varsigma}{\tau}(k)(\psi) \in c_{\odot}(\actualchopping{\varsigma}{\tau}(k)(\psi_1), \actualchopping{\varsigma}{\tau}(k)(\psi_2))$ for all $k \in \N$.
    
    Moreover, for all $\psi = \psi_1 \odot \psi_2$, $o_1, o_2 \in \OOO$ and $o \in c_{\odot}(o_1, o_2)$, there exists a signal $\varsigma   \colon \Rnonneg \to 2^{\AP}$ and $\tau \in \Rpos$  that satisfy Assumption~\ref{asm: restriction BSZE} and such that, for all $k \in \N$, $\actualchopping{\varsigma}{\tau}(k)(\psi_i) = o_i$ for $i \in \set{1, 2}$ and $\actualchopping{\varsigma}{\tau}(k)(\psi) = o$.
\end{lemma}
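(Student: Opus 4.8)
The plan is to prove both directions by a finite case analysis over the sixteen pairs $(o_1,o_2) \in \OOO^2$ for each connective $\odot$, checked against Fig.~\ref{fig:consistency}; it suffices to treat $\land$ and $\Until$ and transport the other two by the observation-level negation $A \leftrightarrow N$, $Z \leftrightarrow E$ that mirrors the logical de Morgan identities. The step common to both directions is to translate an observation into the within-slice truth pattern of the boolean signal $t \mapsto (\varsigma, t \vDash \psi_i)$ on $[k\tau,(k+1)\tau]$: $A$ is constant true, $N$ constant false, and $Z$ (\resp $E$) a single switch from true to false (\resp false to true) at one point $t'$. A preliminary fact I would establish up front is that, under Assumption~\ref{asm: restriction BSZE}, within any slice either all APs are constant or exactly one switches, at a single time $t^*$; consequently the truth signal of every subformula can switch at most once in the slice and only at $t^*$. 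This simultaneously gives the compound analogue of the first clause of Lemma~\ref{lem: BSZN} (well-definedness of all compound observations) and the switch-point alignment that makes the $(Z,E)$ and $(E,Z)$ rows consistent.

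For soundness I would fix such a signal and argue per connective. For $\land$ (dually $\lor$), the truth signal of $\psi$ is the pointwise boolean combination of those of $\psi_1,\psi_2$, so its observation is forced; the sixteen entries follow by inspection, the only interesting ones being $(Z,E)$ and $(E,Z)$, where the aligned switch at $t^*$ makes $\psi_1 \land \psi_2$ constantly false (giving $N$). For $\Until$ (dually $\Release$) the truth of $\psi$ at $t$ may depend on the signal beyond slice $k$, which is the source of the set-valued entries. The key lemma here is that on the only undetermined region of the slice — where $\psi_2$ is false and $\psi_1$ is true — the condition $\exists s \ge t,\ (\varsigma,s \vDash \psi_2) \wedge \forall u \in [t,s),\ (\varsigma,u \vDash \psi_1)$ reduces, since $\psi_1$ already holds on the rest of the slice, to $\exists s \ge (k+1)\tau$ with the same data on $[(k+1)\tau, s)$, which is independent of $t$. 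Hence $\psi$'s truth is constant on that region, excluding the unlisted observations and yielding exactly the tabulated set ($\{A,N\}$ for $(A,N)$, $\{A,Z\}$ for $(A,Z)$, $\{E,N\}$ for $(E,N)$).

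For completeness I would exhibit a witness for each admissible triple $(o_1,o_2,o)$ with $o \in c_\odot(o_1,o_2)$. Taking $\psi_1,\psi_2$ realized by two fresh atomic propositions, I can prescribe their within-slice patterns directly (constant for $A/N$, a single periodic switch for $Z/E$, with switches aligned in the $(Z,E)/(E,Z)$ rows so Assumption~\ref{asm: restriction BSZE} holds). For $\land,\lor$ this immediately makes $\psi$ carry the tabulated observation in every slice. For $\Until,\Release$ I select the desired element of the set by choosing the future tail: e.g.\ for $(A,N)$ I obtain $o=N$ by keeping $\psi_2$ false, and the other value by letting $\psi_1$ persist until $\psi_2$ switches on later.

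The hard part is the interaction between the set-valued temporal entries and the ``for all $k$'' clause. As the surrounding discussion already notes for the $(A,N)$ case of $\Until$, a valuation keeping $\nu_k(\psi_1 \Until \psi_2)=A$ and $\nu_k(\psi_2)=N$ in \emph{every} slice is globally impossible — if $\psi_2$ has observation $N$ in all slices then $\psi_2$ never holds, forcing the value $N$ everywhere. Thus the witness for the future-realized value of such an entry cannot preserve the same observations in all slices, so the completeness clause must be read slice-locally: realize $(o_1,o_2,o)$ in one slice with a compatible eventually-periodic tail. The genuine global obstruction is precisely what the acceptance sets $F_{\psi_1 \Until \psi_2}$ and $F_{\psi_1 \Release \psi_2}$ remove at the automaton level; pinning down which table cells are uniformly realizable and which are only slice-locally realizable is the crux, while the remaining cells are routine bookkeeping settled by the constancy lemma and duality.
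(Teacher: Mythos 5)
Your soundness argument is essentially the paper's: a case split on the connective and then on the pair of observations, with $\lor$ and $\Release$ transported from $\land$ and $\Until$ by the observation-level negation, and with the key step for $\Until$ being that the truth of $\psi_1 \Until \psi_2$ is constant on the part of the slice where $\psi_2$ is false and $\psi_1$ is true (the paper's treatment of the $(A,N)$ case makes exactly this point). Your preliminary single-switch-point fact is what the paper only establishes later, as property~\eqref{eq:ap-word-extension-accepting-run:ZE-subformula} inside the proof of Lemma~\ref{lem:ap-word-extension-accepting-run}; front-loading it is a sensible reorganization rather than a new idea.

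Where you genuinely add something is the completeness half, which the paper dismisses as ``simple'' without exhibiting any witnesses. Your objection to the literal statement is correct, and in fact stronger than you make it: not only is the $\Until$-cell $(A,N,A)$ unrealizable with the same observations at \emph{every} slice (if $\actualchopping{\varsigma}{\tau}(k)(\psi_2) = N$ for all $k$ then $\psi_2$ never holds, forcing $\actualchopping{\varsigma}{\tau}(k)(\psi_1 \Until \psi_2) = N$ everywhere), but any cell with $o_1 \in \set{Z,E}$ or $o_2 \in \set{Z,E}$ is already unrealizable at the level of the subformulas, since an observation $Z$ at slice $k$ forces the value to be false at time $(k+1)\tau$ and hence the observation at slice $k+1$ to lie in $\set{E,N}$. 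So the ``for all $k \in \N$'' quantifier in the second clause of Lemma~\ref{lem:consistency} cannot be read literally; it must be weakened to per-slice realizability (there exist $\varsigma$, $\tau$, and $k$ with the prescribed observations at slice $k$), which is all that is needed to justify that no entry of Fig.~\ref{fig:consistency} can be removed, and which is consistent with the role of the accepting sets $F_{\psi_1 \Until \psi_2}$ and $F_{\psi_1 \Release \psi_2}$ --- these exist precisely to exclude the globally persistent version of such configurations. Your slice-local reading is the right repair; just state it explicitly as a corrected form of the lemma rather than as an informal gloss, since the version printed in the paper is false as written and its one-line completeness ``proof'' does not address this.
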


The proof can be found in~\switchversion{\cite[Appendix~A.2]{pruekprasert2025}}{Appendix~\ref{proof:lem:consistency}}.
%~\cite[Appendix~\switchversion{A.2}{\ref{proof:lem:consistency}}]{pruekprasert2025}.
The following theorem proves that the generalized \ABZESN{} automaton construction of $\AAA_\phi$ is sound.

\begin{theorem}
    \label{thm:abzesn-soundness}
    For all continuous-time LTL formulas $\phi$, AP-signals $\varsigma \colon \Rnonneg \to 2^{\AP}$, and durations $\tau$, if $\chopping{\varsigma}{\tau}$ is in the recognized language of $\AAA_\phi$, then $\varsigma, 0 \vDash \phi$.
\end{theorem}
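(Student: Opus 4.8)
\textbf{Proof plan for Theorem~\ref{thm:abzesn-soundness}.}

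The plan is to extract an accepting run of $\AAA_\phi$ on $\chopping{\varsigma}{\tau}$, show that the states visited along this run are precisely the ``true'' subformula valuations induced by $\varsigma$, and then read off satisfaction of $\phi$ from the initial transition condition (B2). First I would fix an accepting run $q_0\, \nu_1\, \nu_2\, \ldots$ of $\AAA_\phi$ on $\chopping{\varsigma}{\tau}$, guaranteed to exist by hypothesis. For each $k \in \N$ and each subformula $\psi \in \sub(\phi)$, define the \emph{true valuation} $\mu_k(\psi)$ to be the observation of $\psi$ read off from the signal $\varsigma$ on the slice $[k\tau,(k+1)\tau]$, exactly as in the definition of signal chopping (using Lemma~\ref{lem: BSZN} so that $\mu_k(\psi) \in \OOO$ is well-defined). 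The heart of the argument is the claim that $\nu_k = \mu_k$ for all $k \geq 1$, i.e., the run the automaton happens to take must coincide with the genuine truth values of all subformulas along $\varsigma$.

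I would prove $\nu_k = \mu_k$ by structural induction on $\psi$, for each fixed $k$. The base case $\psi = p \in \AP$ is immediate: condition (A1) (and (B1) for the first step) forces $\nu_k(p) = (\chopping{\varsigma}{\tau}(k))(p) = \mu_k(p)$, and $\psi = \top$ is trivial. For the Boolean and temporal connectives $\odot \in \set{\land, \lor, \Until, \Release}$, here is where the main obstacle lies. Consistency of $\nu_k$ only tells us $\nu_k(\psi_1 \odot \psi_2) \in c_\odot(\nu_k(\psi_1), \nu_k(\psi_2))$, which by induction equals $c_\odot(\mu_k(\psi_1), \mu_k(\psi_2))$; the first (soundness) half of Lemma~\ref{lem:consistency} tells us $\mu_k(\psi_1 \odot \psi_2)$ lies in this same set, but the set may contain two values (the $\set{A,N}$, $\set{A,Z}$, $\set{E,N}$ entries in Fig.~\ref{fig:consistency}), so consistency alone does not pin down $\nu_k(\psi_1 \odot \psi_2) = \mu_k(\psi_1 \odot \psi_2)$. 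This ambiguity is exactly what the transition condition (A2) and the acceptance sets must resolve.

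To close this gap I would argue that (A2) together with Corollary~\ref{cor:signal-chopping-is-ap-word} forces the run's valuation of each subformula to track whether that subformula actually holds at the boundary points. Concretely, for each $\psi$ the sequence $(\nu_k(\psi))_k$ satisfies the same ``end matches next beginning'' constraint (A2) that the true sequence $(\mu_k(\psi))_k$ satisfies, so the two sequences agree on the boundary bits ($\set{A,E}$ versus $\set{Z,N}$ at the right end, $\set{A,Z}$ versus $\set{E,N}$ at the left end) globally whenever they agree at the atomic level; this synchronizes the ``begin'' and ``end'' components of every valuation. The only remaining freedom is whether a subformula holds in the \emph{interior} of a slice, and for the temporal operators this is pinned down by the acceptance condition: if the run assigned $\nu_k(\psi_1 \Until \psi_2) = A$ with $\nu_k(\psi_2) = N$ at some step against the true interior behavior, then following the transition constraint forward would force this discrepancy to persist, producing a run that stays out of $F_{\psi_1 \Until \psi_2}$ from that point on, contradicting acceptance (and symmetrically for $\Release$ via $F_{\psi_1 \Release \psi_2}$). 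This is the argument sketched in the paper's discussion of the bold-blue rows, formalized as an eventuality/liveness argument. Once $\nu_k = \mu_k$ is established for all $k$, I would apply it at $k = 1$: condition (B2) on the initial transition gives $\nu_1(\phi) \in \set{A,Z}$, hence $\mu_1(\phi) \in \set{A,Z}$, which by the definition of signal chopping means $\phi$ holds at time $0$ of the first slice, i.e., $\varsigma, 0 \vDash \phi$, completing the proof.
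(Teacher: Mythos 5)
Your proposal is correct and follows essentially the same route as the paper: the paper packages your ``any accepting run must coincide with the true subformula valuations'' claim as the uniqueness part of Lemma~\ref{lem:ap-word-extension-accepting-run} together with Corollary~\ref{cor:accepting-word-signal}, and the ingredients you identify --- structural induction on subformulas, the soundness half of Lemma~\ref{lem:consistency}, the boundary-bit synchronization via (A2), and the acceptance sets $F_{\psi_1 \Until \psi_2}$, $F_{\psi_1 \Release \psi_2}$ to resolve the two-valued entries of Fig.~\ref{fig:consistency} --- are exactly those used there, with the conclusion read off from (B2). The only caveats are minor: the induction must be carried out for all $k$ simultaneously (not ``for each fixed $k$'', since the Until/Release cases quantify over future indices), and not every ambiguous case is killed by the liveness argument alone --- when $\psi_2$ does eventually become true, the contradiction is instead obtained by a finite forward propagation along (A2), as in the paper's case analysis.
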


We relegate the full proof to
\switchversion{the appendix~\cite[Appendix~A.3]{pruekprasert2025}}{Appendix~\ref{proof:lem:ap-word-extension-accepting-run}}
and only state a few crucial lemmas. %to facilitate understanding of the construction.
Lemma~\ref{lem:ap-word-extension-accepting-run} gives a fundamental property of $\AAA_\phi$: given a word, there exists exactly one non-initial state and one accepting run along that word from that state.
Its proof
%in~\switchversion{\cite[Appendix~A.3]{pruekprasert2025}}{Appendix~\ref{proof:lem:ap-word-extension-accepting-run}}
heavily relies on Lemma~\ref{lem:consistency} in order to show by induction on subformulas $\psi$ that there is a unique possible value for $\nu_k(\psi)$.

\begin{lemma}
    \label{lem:ap-word-extension-accepting-run}
    For all words $w \colon \N \to \OOO^{\AP}$ such that
    \begin{align}
        \forall k \in \N.\ \forall p \in \AP.\ w_k(p) \in \set{A, E} \iff w_{k+1}(p) \in \set{A, Z},
        \label{eq:ap-word-extension-accepting-run:ap-transition} \\
        \forall k \in \N.\  \forall p, p' \in \AP.\ w_k(p), w_k(p') \in \set{Z, E} \Rightarrow p = p',
        \label{eq:ap-word-extension-accepting-run:no-ZE}
    \end{align}
    there exists a unique accepting run $\nu_0 \nu_1 \ldots$ such that for all $k \in \N$ and $p \in \AP$, $\nu_k(p) = w_k(p)$.
\end{lemma}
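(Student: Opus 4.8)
\textbf{Proof plan for Lemma~\ref{lem:ap-word-extension-accepting-run}.}
The plan is to prove existence and uniqueness separately, both by establishing that each $\nu_k$ is \emph{forced} once $w$ is fixed. The key observation is that conditions~\eqref{eq:ap-word-extension-accepting-run:ap-transition} and~\eqref{eq:ap-word-extension-accepting-run:no-ZE} are exactly the hypotheses that make $w$ a signal word satisfying the two conclusions of Lemma~\ref{lem: BSZN}, so the consistency machinery of Lemma~\ref{lem:consistency} applies at every step. First I would show that, given the values $\nu_k(p) = w_k(p)$ on atomic propositions, the value $\nu_k(\psi)$ is uniquely determined for \emph{every} subformula $\psi$ by structural induction on $\psi$. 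The base case ($\psi = p$ or $\psi = \top$) is immediate; for $\psi = \neg\psi'$ the value is the pointwise negation of the (inductively unique) value of $\psi'$; and for $\psi = \psi_1 \odot \psi_2$ with $\odot \in \set{\land,\lor,\Until,\Release}$, consistency requires $\nu_k(\psi) \in c_\odot(\nu_k(\psi_1),\nu_k(\psi_2))$. The crucial point is that $c_\odot$ is single-valued \emph{except} in the cells of Fig.~\ref{fig:consistency} marked $\set{A,Z}$, $\set{A,N}$, and $\set{E,N}$; in those multi-valued cases the value is instead pinned down by the transition constraint (A2) linking step $k$ to step $k+1$, together with the values already forced at step $k+1$.

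The heart of the argument — and the step I expect to be the main obstacle — is resolving these multi-valued cells. Consider the representative case $\nu_k(\psi_1\Until\psi_2) \in \set{A,Z}$ (the row $\nu(\psi_1)=Z,\nu(\psi_2)=E$ and similar). Constraint (A2) says $\nu_k(\psi) \in \set{A,E}$ iff $\nu_{k+1}(\psi) \in \set{A,Z}$; since the candidate values are $A$ and $Z$, this becomes: $\nu_k(\psi)=A$ iff $\nu_{k+1}(\psi)\in\set{A,Z}$, and $\nu_k(\psi)=Z$ otherwise. Thus the value at step $k$ is determined by the value at step $k+1$. The delicate point is that this recurrence reaches \emph{forward} in time and may never terminate (e.g.\ if $\psi_1$ holds forever and $\psi_2$ never holds again). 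This is precisely where the acceptance condition enters: the sets $F_{\psi_1\Until\psi_2}$ forbid runs that stay in the "$\nu(\psi_2)=N$ and $\nu(\psi_1\Until\psi_2)=A$" situation forever, ruling out the spurious all-$A$ solution and forcing the correct value. I would make this precise by showing that among all consistent runs over $w$ agreeing with the atomic values, exactly one satisfies every Büchi condition in $\FFF_\phi$: the multi-valued ambiguity at each $\Until$ (resp.\ $\Release$) subformula corresponds to choosing "$A$" versus "$Z$" (resp.\ "$N$"), and acceptance forces the choice that is consistent with the eventuality genuinely being discharged.

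For uniqueness I would argue that any two accepting runs $\nu, \nu'$ agreeing with $w$ on atomic propositions must coincide on all subformulas: by the induction above, the only freedom is in the multi-valued cells, and in each such cell the pair (transition constraint, acceptance condition) pins a unique value, as just described. For existence I would exhibit a run by defining $\nu_k(\psi)$ along the induction and verifying that (i) it is consistent (by Fig.~\ref{fig:consistency} and the soundness direction of Lemma~\ref{lem:consistency}), (ii) it respects the transition relation $\delta_\phi$ via (A1)--(A2), and (iii) it meets every accepting set infinitely often. Point (iii) is where I would again invoke that the forward recurrence for each $\Until$/$\Release$ resolves: whenever $\psi_1\Until\psi_2$ is forced to $A$ while $\psi_2$ is $N$, the eventuality property of $\Until$ (that $\psi_2$ must eventually hold) guarantees a later step where $\nu(\psi_2)\neq N$ or $\nu(\psi_1\Until\psi_2)\neq A$, i.e.\ a visit to $F_{\psi_1\Until\psi_2}$, and dually for $\Release$. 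Finally, I note the run starts at the unique non-initial state $\nu_0$ matching $w_0$ on $\AP$ — the "exactly one non-initial state" phrasing in the theorem statement — since the initial transition condition (B1) fixes the atomic values of $\nu_0$ and the induction fixes the rest.
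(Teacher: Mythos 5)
Your proposal is correct and follows essentially the same route as the paper: build the unique run by structural induction on subformulas, observing that the consistency cells of Fig.~\ref{fig:consistency} are single-valued except for the $\Until$/$\Release$ ambiguities, resolve those by a forward-looking (lookahead) determination tied to the transition constraint, and invoke the Büchi sets $F_{\psi_1 \Until \psi_2}$, $F_{\psi_1 \Release \psi_2}$ to pin down the value exactly when the forward chain never terminates — which is also how the paper proves both existence (via an explicit lookahead definition of $\nu_k(\psi_1 \Until \psi_2)$) and uniqueness. The only slip is cosmetic: the row $(\nu(\psi_1), \nu(\psi_2)) = (Z, E)$ you cite is actually single-valued ($A$); the genuinely multi-valued $\Until$ cells are $(A,Z)$, $(E,Z)$, $(A,N)$, and $(E,N)$.
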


The following corollary demonstrates that accepting runs are, in fact, exactly valuations of signal choppings.
Its proof is relegated to~\switchversion{\cite[Appendix~A.4]{pruekprasert2025}}{Appendix~\ref{proof:cor:accepting-word-signal}},
% ~\cite[Appendix~\switchversion{A.4}{\ref{proof:cor:accepting-word-signal}}]{pruekprasert2025}, 
but it crucially uses Assumption~\ref{asm: restriction BSZE} to show that signal choppings have the same shape as the accepting runs exhibited in Lemma~\ref{lem:ap-word-extension-accepting-run}.
% but it crucially uses Assumption~\ref{asm: restriction BSZE} to show that signal choppings have the shape described in the proof of Lemma~\ref{lem:ap-word-extension-accepting-run}.

\begin{corollary}    
    \label{cor:accepting-word-signal}
    Given $\varsigma \colon \Rnonneg \to 2^{\AP}$ and $\tau \in \Rpos$, a run $\nu_0 \nu_1 \ldots$ such that $\nu_k(p) = \actualchopping{\varsigma}{\tau}(k)(p)$, for all $k \in \N$ and $p \in \AP$, is an accepting run iff $\actualchopping{\varsigma}{\tau}(k)(\psi) = \nu_k(\psi)$, for all $k \in \N$ and $\psi \in \sub(\phi)$.
\end{corollary}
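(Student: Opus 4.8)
The plan is to prove Corollary~\ref{cor:accepting-word-signal} by combining the uniqueness result of Lemma~\ref{lem:ap-word-extension-accepting-run} with the soundness and completeness of the consistency table from Lemma~\ref{lem:consistency}. Fix $\varsigma$ and $\tau$ satisfying Assumption~\ref{asm: restriction BSZE}, and let $w = \actualchopping{\varsigma}{\tau}$ be the chopped AP-signal. As noted in Section~\ref{section:signal word}, $w$ is a signal word, so it satisfies the hypothesis~\eqref{eq:ap-word-extension-accepting-run:ap-transition} of Lemma~\ref{lem:ap-word-extension-accepting-run}; and by Lemma~\ref{lem: BSZN}(2), it also satisfies~\eqref{eq:ap-word-extension-accepting-run:no-ZE}. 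The run $\nu_0 \nu_1 \ldots$ in question is the one whose atomic-proposition valuations agree with $w$, i.e.\ $\nu_k(p) = w_k(p)$ for all $k$ and $p \in \AP$.

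For the harder direction of the \emph{iff}, suppose $\nu_0 \nu_1 \ldots$ is an accepting run with $\nu_k(p) = w_k(p)$ on atomic propositions. I want to show $\actualchopping{\varsigma}{\tau}(k)(\psi) = \nu_k(\psi)$ for \emph{all} subformulas $\psi$, by structural induction on $\psi$. The base case ($\psi$ atomic) is exactly the hypothesis. For the inductive step, write $\psi = \psi_1 \odot \psi_2$: by the induction hypothesis, $\actualchopping{\varsigma}{\tau}(k)(\psi_i) = \nu_k(\psi_i)$ for $i \in \set{1,2}$; then Lemma~\ref{lem:consistency} (soundness part) gives $\actualchopping{\varsigma}{\tau}(k)(\psi) \in c_\odot(\nu_k(\psi_1), \nu_k(\psi_2))$, and consistency of $\nu_k$ gives $\nu_k(\psi) \in c_\odot(\nu_k(\psi_1), \nu_k(\psi_2))$ as well. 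This alone does not force equality, since $c_\odot$ may return two observations (the $\{A,Z\}$, $\{A,N\}$, $\{E,N\}$ entries of Fig.~\ref{fig:consistency}). The key observation is that the map sending $w$ to the unique accepting run of Lemma~\ref{lem:ap-word-extension-accepting-run} is uniquely determined, and the signal-chopping valuation $k \mapsto \actualchopping{\varsigma}{\tau}(k)$ is itself a run of $\AAA_\phi$ — its transitions satisfy (A1) by construction and (A2) by Corollary~\ref{cor:signal-chopping-is-ap-word} applied at each subformula — that agrees with $w$ on atomic propositions. So it suffices to argue that this signal-chopping run is accepting; then uniqueness in Lemma~\ref{lem:ap-word-extension-accepting-run} identifies it with $\nu_0\nu_1\ldots$, yielding the desired equality.

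The main obstacle, and the step that genuinely uses Assumption~\ref{asm: restriction BSZE} rather than just the combinatorics of the table, is showing that the signal-chopping run visits each accepting set $F_{\psi_1 \Until \psi_2}$ and $F_{\psi_1 \Release \psi_2}$ infinitely often. For $\Until$, I must rule out a tail on which $\actualchopping{\varsigma}{\tau}(k)(\psi_1 \Until \psi_2) = A$ while $\actualchopping{\varsigma}{\tau}(k)(\psi_2) = N$ at all large $k$ (the bold-blue case). Semantically this would mean $\psi_1$ holds continuously, $\psi_2$ never holds on a whole tail of intervals, yet $\psi_1 \Until \psi_2$ holds throughout — which contradicts the signal semantics of $\Until$, since $\psi_1 \Until \psi_2$ holding requires $\psi_2$ to eventually hold. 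Here Assumption~\ref{asm: restriction BSZE} ensures the chopped observations faithfully track the continuous truth values (no hidden $B$ or $S$ patterns and at most one boundary crossing per slice), so the continuous-time violation of $\Until$ translates into the chopping actually leaving $F_{\psi_1 \Until \psi_2}$ only finitely often; the $\Release$ case is dual. This establishes that the signal-chopping run is accepting.

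For the converse direction, assume $\actualchopping{\varsigma}{\tau}(k)(\psi) = \nu_k(\psi)$ for all $k$ and all $\psi \in \sub(\phi)$; I must show $\nu_0\nu_1\ldots$ is accepting. Since the run now literally coincides with the signal-chopping valuation, it is accepting by exactly the argument of the previous paragraph: the semantic behavior of $\Until$ and $\Release$ along $\varsigma$ forces the chopping to enter each accepting set infinitely often. Assembling both directions completes the proof; the only nontrivial content is the translation of the continuous $\Until$/$\Release$ semantics into the Büchi acceptance condition, everything else being a direct consequence of Lemma~\ref{lem:consistency}, Lemma~\ref{lem:ap-word-extension-accepting-run}, and Corollary~\ref{cor:signal-chopping-is-ap-word}.
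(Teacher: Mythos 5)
Your proof is correct, and it reaches the conclusion by a genuinely different decomposition than the paper's. The paper's proof shows, by a fairly heavy induction on $\psi$ with case distinctions translating continuous-time quantifiers into discrete ones (using Assumption~\ref{asm: restriction BSZE} to align boundary crossings with interval endpoints), that $\actualchopping{\varsigma}{\tau}$ satisfies the \emph{explicit recursive formulas} used to construct the unique accepting run in the proof of Lemma~\ref{lem:ap-word-extension-accepting-run}; the identification with that run is then immediate. You instead verify abstractly that the map $k \mapsto \actualchopping{\varsigma}{\tau}(k)$ is itself a consistent run (consistency from the soundness half of Lemma~\ref{lem:consistency}, condition (A2) from the shared closed endpoint of consecutive intervals) and that it is \emph{accepting}, via a short semantic argument: a tail with $\actualchopping{\varsigma}{\tau}(k)(\psi_2) = N$ and $\actualchopping{\varsigma}{\tau}(k)(\psi_1 \Until \psi_2) = A$ would force $\psi_2$ to hold at some $t' \geq k_0\tau$ while never holding after $k_0\tau$, a contradiction (and dually for $\Release$); uniqueness in Lemma~\ref{lem:ap-word-extension-accepting-run} then closes both directions. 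This buys a substantially shorter argument, replacing the paper's quantifier-translation case analysis with facts that follow directly from the definitions of the observations $A$ and $N$; what it gives up is the explicit discrete characterization of $\actualchopping{\varsigma}{\tau}(k)(\psi_1 \Until \psi_2)$ that the paper's route produces as a byproduct. Two small points to tighten: Corollary~\ref{cor:signal-chopping-is-ap-word} is stated only for atomic propositions, so for (A2) you should note that its endpoint argument applies verbatim to any subformula whose chopping is well-defined (which Lemma~\ref{lem:consistency} guarantees); and the acceptance step needs only the well-definedness of the chopping (the meaning of $A$ and $N$ on closed intervals), not the full strength of Assumption~\ref{asm: restriction BSZE} as your prose suggests.
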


Theorem~\ref{thm:abzesn-soundness} follows directly from Corollary~\ref{cor:accepting-word-signal}, using the shape of transitions from the initial state, as we only have transitions from $q_0$ to $\nu$ with $\nu(\phi) \in \{A, Z\}$.

As a side result, Corollary~\ref{thm:abzesn-soundness} and the proof of Lemma~\ref{lem:ap-word-extension-accepting-run} can be used to show that, given a word of observations $w_k$ for atomic propositions, there exists a unique word of observations $\nu_k$ for all formulas compatible with $w_k$ (regardless of the \ABZESN{} automaton considered).
This can in turn be used to define whether a symbolic model satisfies a formula and prove that the construction is sound and complete \emph{for symbolic models}.
However, due to the loss of information during discretization, the construction is not complete for dynamical systems.

Theorem~\ref{thm:abzesn-soundness} ensures that the construction of $\AAA_\phi$ is sound.
This construction is inspired by  the 
LTL-to-Büchi-automaton construction by Vardi and Wolper~\cite{vardi1986automata}, which we briefly discussed in Section~\ref{section:System Verification for LTL specifications}.
However, there is a fundamental difference in that there are no explicit constraints on transitions.
Indeed, in the original construction, where states are consistent valuations $\nu \colon \sub(\phi) \to 2$, there can be a transition from $\nu$ to $\nu'$ only if they ``agree'' on the value of all formulas $\psi_1 \Until \psi_2$ and $\psi_1 \Release \psi_2$.
This uses the fact that, in discrete time, $\psi_1 \Until \psi_2 \iff \psi_2 \lor (\psi_1 \land \Next (\psi_1 \Until \psi_2))$, so for example if $\psi_1 \Until \psi_2$ holds in $\nu$, then either $\psi_2$ should hold in $\nu$, or $\psi_1$ should hold in $\nu$ and $\psi_1 \Until \psi_2$ in $\nu'$.
Similarly, for Release, using the fact that $\psi_1 \Release \psi_2 \iff \psi_2 \land (\psi_1 \lor \Next (\psi_1 \Release \psi_2))$.

In our translation, this constraint comes from the fact that the consistency rules in Fig.~\ref{fig:consistency} also contain constraints on Until and Release, while the original translation only has constraints on conjunction and disjunction.
This, coupled with a generalization of~\eqref{eq:ap-word-extension-accepting-run:ap-transition} to all subformulas, gives constraints between valuations in $\nu$ and $\nu'$.
However, we need to add another constraint to only retain good behaviors.
Indeed, while the original construction only has accepting sets for Until subformulas, here we also need to add accepting sets for the Release subformulas to make up for the constraints in the original construction.

Following the translation from generalized Büchi automata to nondeterministic Büchi automata~\cite{gastin2001fast}, we can translate 
the generalized \ABZESN{} automata $\AAA_\phi$ to their \emph{nondeterministic \ABZESN{} automata} counterpart $\BBB_\phi$, from which it is easier to build a game-based verification algorithm. 
% generalized \ABZESN{} automata to their \emph{nondeterministic \ABZESN{} automata} counterparts. 
In our implementation for the example in Section~\ref{section:example}, before applying this translation, we prune the generalized \ABZESN{} automaton $\AAA_\phi$ by removing states that cannot lead to any accepting run (i.e., that fail to reach at least one state in each $F \in \FFF(\AAA_\phi)$), and merge states that are equivalent with respect to acceptance conditions and outgoing transitions. The generalized \ABZESN{} automaton shown in Fig.~\ref{fig:automaton} reflects the outcome of this pruning and minimization.

\section{System Verification}
\label{section:verification}

We can build a symbolic model $\SSS = (Q, \delta_s, q_\ini)$ that over-approximates the behaviors of the dynamical system $\Sigma$ in the sense of Theorem~\ref{thm:system-abstraction-soundness} and a nondeterministic \ABZESN{} automaton $\BBB_\phi = (B, \delta_b, b_\ini, F)$ whose language is that of $\phi$ in the sense of Theorem~\ref{thm:abzesn-soundness}.
Both $\SSS$ and $\BBB_\phi$ are nondeterministic.
The nondeterminism in $\SSS$ is demonic and comes from that of $\Sigma$: if $(q, o_1, q_1), (q, o_2, q_2) \in \delta_s$, then we cannot choose whether the system goes to $q_1$ by reading $o_1$ and to $q_2$ by reading $o_2$.
The nondeterminism in $\BBB_\phi$ is angelic: a word is recognized if there exists an accepting run.
We mix these two forms of nondeterminism using a Büchi game~\cite{chatterjee2014efficient}, %which is 
a particular type of parity game~\cite{zielonka1998infinite} (with parities 1 and 2).
% In order to mix these two forms of nondeterminism, we build a Büchi game~\cite{chatterjee2014efficient}, %which is 
% a particular type of parity game~\cite{zielonka1998infinite} (with parities 1 and 2).

A \emph{Büchi game} is a tuple $\GGG = (G, G_0, \delta_g, \GameF, g_\ini)$, where $G$ is a set of \emph{vertices}, $G_0 \subseteq G$ is the set of \emph{Player} vertices, $G_1 = G \setminus G_0$ is that of \emph{Opponent} vertices, $\delta_g \subseteq G \times G$ is a set of \emph{edges}, and $\GameF \subseteq G$ is the set of \emph{Büchi vertices}.
A \emph{play} is a sequence $g_0 g_1 \ldots$ of states such that $(g_i, g_{i+1}) \in \delta_g$ for all $i \in \N$.
A play is \emph{accepting} if it reaches $\GameF$ infinitely often.
%or ends in $g_i \in G_1$ such that there are no $(g_i,g') \in \delta_g$.
%A \emph{Player strategy} is a function $\strat \colon G_0 \to G$ such that for all $g \in G_0$, $(g, \strat(g)) \in \delta_g$, and similarly for \emph{Opponent strategy}.
A \emph{Player strategy} is a function $\strat_0 \colon G_0 \to G$ such that for all $g \in G_0$, $(g, \strat_0(g)) \in \delta_g$. An \emph{Opponent strategy} $\strat_1 \colon G_1 \to G$ is defined similarly.
The play \emph{induced} by a Player strategy $\strat_0$ and an Opponent strategy $\strat_1$ from a state $g_0$ is the sequence $g_0 g_1 \ldots$ such that for all $i \in \N$, if $g_i \in G_k$, then $\strat_k(g_i) = g_{i+1}$.
A Player strategy $\strat_0$ is \emph{winning} from $g$ if for all Opponent strategies $\strat_1$, the play induced by $\strat_0$ and $\strat_1$ from $g$ is accepting.
A state $g$ is winning if there exists a winning Player strategy from $g$, and $\GGG$ is winning if $g_\ini$ is winning.

Given a nondeterministic \ABZESN{} automaton $\BBB$ and a symbolic model $\SSS$, we build 
$\GGG_{\SSS \times \BBB} = (G, G_0, \delta_g, \GameF = \setcomp{(q, b)}{b \in F}, g_\ini = (q_\ini, b_\ini))$ as follows:
\begin{itemize}
    \item
    $G = \setcomp{(q, b)}{q \in Q, b \in B} \cup \setcomp{(q, o, b)}{q \in Q, o \in \OOO, b \in B}$,
    \item
    $G_0 = \setcomp{(q, o, b)}{q \in Q, o \in \OOO, b \in B}$,
    \item
    $((q, b), (q', o, b)) \in \delta_g$ if and only if $(q, o, q') \in \delta_s$, and $((q, o, b), (q, b')) \in \delta_g$ if and only if $(b, o, b') \in \delta_b$.
    % \item
    % $\GameF = \setcomp{(q, b)}{b \in F}$.
    % \item
    % $g_\ini = (q_\ini, b_\ini)$.
\end{itemize}

\begin{theorem}
    \label{thm:game-soundness}
    Given a symbolic model $\SSS$ an \ABZESN{} automaton $\BBB$, $\GGG_{\SSS \times \BBB}$ is winning iff all plays $q_\ini \xto{o_0} q_1 \xto{o_1} \ldots$ of $\SSS$ are such that $o_0 o_1 \ldots$ is in the recognized language of $\BBB$.
\end{theorem}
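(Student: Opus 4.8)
The plan is to prove both directions of the equivalence by establishing a tight correspondence between plays of the game $\GGG_{\SSS \times \BBB}$ and pairs consisting of a run of the symbolic model together with an accompanying run of the automaton on the same observation sequence. The key structural observation is that the game alternates strictly between Opponent vertices of the form $(q,b)$ and Player vertices of the form $(q,o,b)$: from $(q,b)$ the Opponent picks an outgoing symbolic transition $(q,o,q') \in \delta_s$ (equivalently, picks both the observation $o$ and the successor $q'$), moving to $(q',o,b)$; then from $(q',o,b)$ the Player picks an automaton transition $(b,o,b') \in \delta_b$ reading that \emph{same} $o$, moving to $(q',b')$. Thus a play $(q_\ini,b_\ini)(q_1,o_0,b_\ini)(q_1,b_1)(q_2,o_1,b_1)\ldots$ is in one-to-one correspondence with a symbolic run $q_\ini \xto{o_0} q_1 \xto{o_1} \ldots$ chosen by the Opponent, together with a run $b_\ini b_1 b_2 \ldots$ of $\BBB$ on the word $o_0 o_1 \ldots$ chosen by the Player. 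Crucially, since $\GameF = \setcomp{(q,b)}{b \in F}$ and the accepting vertices are all Opponent vertices, the play visits $\GameF$ infinitely often iff the automaton run $b_\ini b_1 \ldots$ visits $F$ infinitely often, i.e.\ iff it is an accepting run of $\BBB$.

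I would then argue each direction. For the forward direction, suppose $\GGG_{\SSS \times \BBB}$ is winning, so there is a winning Player strategy $\strat_0$. Given any play $q_\ini \xto{o_0} q_1 \xto{o_1} \ldots$ of $\SSS$, the Opponent can realize exactly the choices $(q_i, o_i, q_{i+1})$ as an Opponent strategy; the play induced by $\strat_0$ and this Opponent strategy then produces, via the correspondence above, an automaton run on $o_0 o_1 \ldots$ that is accepting (since $\strat_0$ is winning, every induced play reaches $\GameF$ infinitely often). Hence $o_0 o_1 \ldots$ lies in the recognized language of $\BBB$. For the converse, suppose every play of $\SSS$ yields a word in $\LLL(\BBB)$. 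I would build a winning Player strategy by choosing, at each Player vertex, the next automaton state along some fixed accepting run for the observation word being generated. The delicate point is that the Player, at vertex $(q,o,b)$, has seen only a finite prefix of observations but must commit to a transition that can be extended to an accepting run no matter how the Opponent continues.

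The main obstacle, therefore, is constructing a memoryless (or at least well-defined) Player strategy that is winning uniformly against all Opponent strategies, since the Player must make its automaton choice without knowing the Opponent's future observations. I would resolve this by invoking the standard theory of Büchi games, namely the existence of memoryless winning strategies and positional determinacy~\cite{zielonka1998infinite,chatterjee2014efficient}: it suffices to show that the Player \emph{wins} $\GGG_{\SSS \times \BBB}$ in the sense that $g_\ini$ is in the Player's winning region, and for this I argue by contraposition. If $g_\ini$ were not winning, then by determinacy the Opponent has a winning strategy $\strat_1$; this strategy fixes a particular symbolic run $q_\ini \xto{o_0} q_1 \ldots$ (its observation/successor choices), against which \emph{no} Player strategy produces an accepting play, meaning \emph{no} run of $\BBB$ on $o_0 o_1 \ldots$ is accepting, i.e.\ $o_0 o_1 \ldots \notin \LLL(\BBB)$ — contradicting the hypothesis that every play of $\SSS$ lies in $\LLL(\BBB)$.

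Putting these together, $\GGG_{\SSS \times \BBB}$ is winning iff every observation sequence produced by $\SSS$ is recognized by $\BBB$, which is the claim. The only genuinely nontrivial ingredient beyond the bookkeeping of the play/run correspondence is the appeal to determinacy of Büchi games, which is what lets us turn the Player's inability to anticipate future observations into a statement about a single Opponent-fixed symbolic run.
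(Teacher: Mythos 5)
Your forward direction is essentially the paper's proof: given a winning Player strategy $\strat_0$, you replay an arbitrary symbolic run $q_\ini \xto{o_0} q_1 \xto{o_1} \ldots$ as a (history-dependent) Opponent strategy, and the induced play yields an accepting run of $\BBB$ on $o_0 o_1 \ldots$. That part is fine, and it is in fact the only direction the paper itself proves and the only one used later (Corollary~\ref{cor:verification}).

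The converse direction has a genuine gap. You claim that an Opponent winning strategy $\strat_1$ ``fixes a particular symbolic run'' against which no Player strategy produces an accepting play, and conclude that no run of $\BBB$ on that single word is accepting. But $\strat_1$ is a function of the current vertex $(q, b)$ (or of the history), and its choice of $(q', o)$ may depend on the automaton component $b$, which is determined by the Player's earlier moves. Different Player strategies therefore induce different symbolic runs and different observation words under the same $\strat_1$, so ``the Opponent wins against every Player strategy'' does not reduce to ``some fixed word $o_0 o_1 \ldots$ has no accepting run in $\BBB$.'' Worse, the implication you are trying to prove is false in general: this is exactly the completeness failure of the letter game for nondeterministic Büchi automata that are not good-for-games. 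The Player must resolve $\BBB$'s nondeterminism online, seeing only a finite prefix of observations, whereas membership in $\LLL(\BBB)$ only requires the existence of an accepting run chosen with full knowledge of the infinite word. For instance, if $\SSS$ emits a common first observation and then branches into two futures, while $\BBB$ accepts both resulting words but only via distinct initial guesses made on that first observation, every word of $\SSS$ is recognized yet the Player loses the game. So the equivalence cannot be salvaged by determinacy; only the soundness direction ($\GGG_{\SSS \times \BBB}$ winning implies all observation words are recognized) actually holds, and that is the direction your first argument, and the paper's proof, establish.
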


\begin{proof}
    It is well-known that positional strategies are optimal~\cite{zielonka1998infinite}.
    In particular, if a positional Player strategy $\strat_0$ wins against all positional Opponent strategies, then it wins against all (general) Opponent strategies $\strat_1 \colon G^* G_1 \to G$ that map each play to a next state.
    Assuming that $\strat_0$ is a winning strategy, given a run $q_\ini \xto{o_0} q_1 \xto{o_1} \ldots$ of $\SSS$, we define
    \[
        \strat_1((q_\ini, b_\ini), (q_1, o_0, b_\ini), \ldots, (q_n, b_n)) = (q_{n+1}, o_n, b_n).
    \]
    Because $\strat_0$ wins against $\strat_1$, the induced play visits $\GameF = \setcomp{(q, b)}{b \in F}$ infinitely often, so $b_\ini \xto{o_0} b_1 \xto{o_1} \ldots$ is an accepting run of $\BBB$, and therefore $o_0 o_1 \ldots$ is in the recognized language of $\BBB$.
\end{proof}

Putting Theorems~\ref{thm:system-abstraction-soundness},~\ref{thm:abzesn-soundness}, and~\ref{thm:game-soundness} together, we get the following corollary.

\begin{corollary}
    \label{cor:verification}
    Given a dynamical system $\Sigma$, a symbolic model $\SSS$ that soundly represents $\Sigma$ (constructed as in Section~\ref{section:Symbolic Models and Control Strategies}), and a continuous-time LTL formula $\phi$, if $\GGG_{\SSS \times \BBB_\phi}$ is winning, then for all trajectories $\sigma$ of $\Sigma$ from $x_\ini$, $\sigma, 0 \vDash \phi$.
\end{corollary}

This gives a sound algorithm for Problem~\ref{problem:continuousVerification}.
If the game is winning, then all trajectories $\sigma$ of $\Sigma$ from $x_\ini$ are such that $\sigma, 0 \vDash \phi$, but otherwise we cannot conclude that there exists a trajectory such that $\sigma, 0 \nvDash \phi$.
The completeness of this result cannot be guaranteed, as the exact values of atomic propositions along trajectories are lost during the discretization of the system’s state space.
To mitigate this theoretical limitation, we demonstrate the practical feasibility of our approach by verifying several specifications in Section~\ref{section:example}.

\section{Illustrative Example}
\label{section:example}

To illustrate a potential application of the proposed structure, we present an illustrative application example of verifying a surveillance drone system.
As shown in Fig.~\ref{fig_ex_system}, the drone flies in an area of $33\times 33 \text{m}^2$  and can move in eight directions: the four cardinal directions and the four diagonal directions, following linear dynamics at a speed of $4 \text{m/s}$.
% However, the drone's speed ranges between $3.9$ and $4.1 \text{m/s}$, and its direction can deviate by up to $0.08$ radians, introducing nondeterminism into the system.
The system's nondeterminism comes from environmental disturbances impacting the drone's speed and direction. Its speed may vary by up to $0.1 \text{m/s}$ and its angle by up to $0.08$ radians.

\begin{figure}[t]
\centerline{
\includegraphics[width=0.65\textwidth]{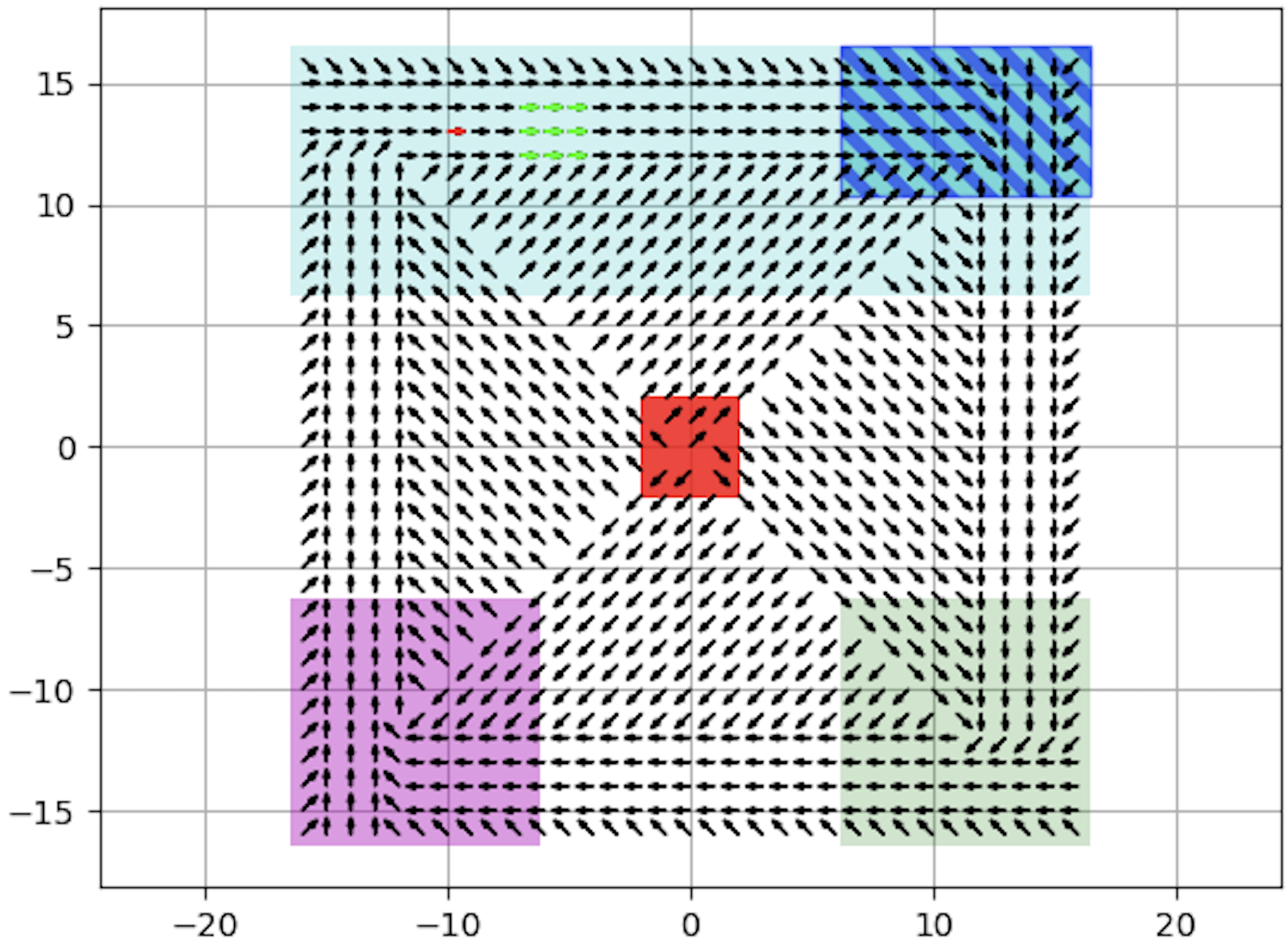}}
\caption{A surveillance drone system flying from the position $(-10, 13)$ (base of the red arrow). 
The atomic propositions are assigned to $xy$-coordinates as follows: 
$\apcyan$ (cyan) for $y \geq 6.21$, 
$\apblue$ (blue) for $(x,y) \geq (6.21, 10.32)$,
%$x \geq 6.21$ and $y \geq 10.32$, 
$\appink$ (pink) for $(x,y) \leq (6.21, 6.21)$,
%$x \leq 6.21$ and $y \leq 6.21$, 
$\apgreen$ (green) for $x \geq 6.21$ and $y \leq 6.21$,
and
$\apred$ (not red) for $|x| > 2.1$ and $|y| > 2.1$. 
The states reached in a single transition from the initial state are at the bases of the green arrows.}
\label{fig_ex_system}
\end{figure}

We construct a symbolic model $\SSS$ using $\eta = 1$m and $\tau=1$s (which satisfies Lemma~\ref{lem:bounded speed}), and fix the moving direction for each symbolic state as depicted by the arrows in Fig.~\ref{fig_ex_system}. 
% \ssc{Notice that using $\tau = 1\text{s}$ satisfies the condition in} Lemma~\ref{lem:bounded speed} \ssc{with $\Delta(t)=4.1t$, as $4.1\,\text{m/s}$ is the maximum possible speed.}
We implement Zielonka's algorithm~\cite{zielonka1998infinite} to solve the game described in Section~\ref{section:verification}.
% Hence, we can consider the controlled symbolic model $\SSS_{\pi_s}$ as an automaton (see Section~\ref{section:Symbolic Models and Control Strategies}).
We implement the algorithm in Python~3.11.2 and run it
on a MacBook Pro (Apple M2 Max, 64GB). 
We verify the formulas in Fig.~\ref{fig:results}.
% We verify several LTL formulas as shown in Fig.~\ref{fig:results}.
% Our algorithm effectively verifies several LTL formulas as shown in Fig.~\ref{fig:results}. 
% Remark that, for some specifications, subformulas could be verified separately to speed up the computation, but we verify the entire formula to demonstrate the feasibility of our approach. 
% Note that all conjunctions and disjunctions of the formulas (e.g., $\Finally \appink \wedge \Globally \neg \mathit{red} \wedge \Globally\Finally \apgreen$) are also verified.

% For complex specifications, the bottleneck\todo{now, games take longer} of the program lies in the construction of the $\ABZESN{}$ automaton.
% Our implementation considers all possible valuations of all subformulas, whose number is exponential in the number of subformulas in the LTL formula. 
% For example, the formula %$\Globally \apred \land  \Finally \appink \land \Finally \apcyan$ 
% $\Globally \apred \land \Finally(\apgreen\land \Finally\appink)$
% has ten subformulas (including $\top$ and $\bot$). Therefore, we consider all $4^{10}$ valuations and check if they are consistent before pruning deadlocked states and minimizing the automaton. 
% This construction serves as a simple baseline,
% % This is a naive construction, 
% and we leave it as future work to optimize the implementation, for example by only considering the states that can be reached from the initial state.

For complex specifications, the main bottleneck lies in the construction of the \ABZESN{} automaton, whose size is exponential in the number of subformulas.
% For complex specifications, the main bottleneck lies in the construction of the \ABZESN{} automaton.
% This step is  
% computationally expensive due to the enumeration of all possible valuations of subformulas, which grows exponentially with the size of the formula. 
For example, the formula  
$\Globally \apred \land \Finally(\apgreen\land \Finally\appink)$
has ten subformulas. %(including $\top$ and $\bot$).
% Therefore, we consider all $4^{10}$ valuations and check if they are consistent before pruning deadlocked states and minimizing the automaton. 
Here, we check the consistency of all $4^{10}$ valuations before pruning deadlocked states and minimizing the automaton. 
This implementation serves as a simple baseline to demonstrate the feasibility of the method, and we leave it as future work to optimize it, for example, by considering only valuations at reachable states. 
%using more recent game-solving algorithms such as that in~\cite{chatterjee2014efficient}, 
% and applying heuristic methods such as that in~\cite{pruekprasert2021fast}.
% This implementation serves as a simple baseline to demonstrate the feasibility of the method, and we leave it as future work to optimize it, for example by only considering the valuations at reachable states, and applying heuristic methods, such as one proposed in~\cite{pruekprasert2021fast}.

% Despite the automata containing up to 4,144 states, the verification process is completed in less than one minute. 

\begin{figure}[t!]
\centering

\begin{tabular}{|>{\centering\arraybackslash}m{2.5cm}|c|c|c|c|c|c|c|}
\hline 
\textbf{Specification}
& \multicolumn{2}{c|}{{{$\BBB_\phi$}}} 
& \multicolumn{2}{c|}{{Game construction}}
& \multicolumn{1}{c|}{{Game solving}
} 
& \multicolumn{1}{c|}{\multirow{2}{*}{\shortstack{\textbf{Total}\\\textbf{Time}}}} \\
\cline{2-6}  
& \textbf{Size} & \textbf{Time(s)}
& \textbf{Size} & \textbf{Time(s)}
& \textbf{Time(s)} 
& \\
\hline
$\Globally \apred$ & 2 &  0.01  &  651 + 642 & 0.23 & 0.35  &  1.34  \\
\hline
$\Finally \appink$ & 5 &  0.01 &  757 + 680  & 0.52  &  0.46 &  1.75 \\ 
\hline
$\mathit{\apcyan \Until \apblue}$ & 7  & 0.04  & 830 + 691 &  0.69 & 0.55  & 2.04 \\ 
\hline
$\mathit{\apblue \Release \apcyan}$ & 7&  0.04   & 814 + 685 & 0.69 & 0.53  & 2.03 \\ 
\hline
$\Finally\Globally \apred$ & 6 & 0.02 &  1,933+1,924 &  0.60& 2.05   & 3.44 \\ 
\hline
$\Globally\Finally \apgreen$ &   7&0.02 &  4,640+2,631  &1.93 & 2.48& 5.20\\ 
\hline
% $ \Globally \apred  \land \Finally \apgreen$ & 9 & 0.34 & 4,640+2,631& 3.78 & 2.71&  7.59 \\ 
% \hline
$\Finally(\apgreen \land \Finally\appink)$ &  33 & 0.40 & 4,856+2,687 &  8.98 &  3.05 & 13.19 \\ 
\hline 
 $\Globally \apred \land (\Finally \appink \land \Finally \apcyan)$  
 &  46  &  51.39 & 7,723+4,144   &  29.66 &  7.54 & 89.36  \\ 
 \hline  
$\Globally \apred \land \Finally(\apgreen\land \Finally\appink)$
& 49  & 53.86  & 7,279+4,030  & 25.49  &  7.06  &  87.18 \\ 
\hline
\end{tabular}

\caption{Sizes (number of states) and construction times (averaged over 10 runs) for the deadlock-free reachable parts of
% the nondeterministic \ABZESN{} automaton
$\BBB_\phi$ and the corresponding parity games. For the games, sizes are reported as the total number of states controlled by Player (angelic nondeterminism in $\BBB_\phi$) and Opponent (demonic nondeterminism in $\SSS$). The table also shows average game construction times for the deadlock-free reachable parts, game-solving times, and total times.
%per specification.
Total times include the construction of the %deadlock-free reachable part of the 
symbolic model $\SSS$, which has 1,089 states and takes an average of 0.77 seconds to build. \label{fig:results}}
% \caption{Sizes (number of states) and construction times (averaged over 10 runs) for the deadlock-free reachable parts of the nondeterministic $\ABZESN{}$ automata $\BBB_\phi$ and the corresponding parity games. 
% %Each cell for $\mathcal{B}$ reports the sizes of both the generalized and nondeterministic automata. 
% For the parity games, sizes are reported as the total number of states controlled by the player (for angelic nondeterminism in $\BBB_\phi$) and the opponent (for demonic nondeterminism in $\SSS$). The table also includes average construction times for the deadlock-free reachable parts of the games, game-solving times, and total times per specification. Total time includes the construction of the deadlock-free reachable part of the symbolic model $\SSS$, which consists of 1,089 states and takes an average of 0.77 seconds to construct. 
% Note that for some specifications, it is possible to verify each subformula in the conjunction separately. However, we verify the entire formula as a whole to demonstrate the feasibility of our approach.
% \label{fig:results}}
\end{figure}

% \subsection{Monitoring}

% If we have space.

% \subsection{Verification}

% - A simple verification application.

% - show the automaton (preferably about 10- 20 states)

\section{Conclusions}
\label{section:conclusions}

We have introduced a novel translation of LTL formulas to \ABZESN{} automata, which is specifically designed for verification of continuous-time systems by abstracting truth values on an interval to four possible patterns.
We have presented a verification algorithm that uses this translation for the abstraction-based verification of nonlinear, nondeterministic, continuous-time, continuous-state systems without global stability assumptions.

% controller synthesis
In the future, we plan to adapt this framework to tackle the symbolic controller synthesis problem.
% , by solving a game built from the specification and the model.
% bottleneck <- just before conclusion
% B, S
 We also want to weaken the constraints imposed on the system, specifically that when an atomic proposition holds, it should continue to hold for a certain amount of time.
 This can be done by introducing other observation patterns, which makes the construction more complex.

% \blind{
\subsubsection{Acknowledgments}
The authors would like to thank Jérémy Dubut for letting them reuse part of his implementation.
% }

\switchversion
{
\bibliographystyle{IEEEtran}
\bibliography{refs}

\begin{thebibliography}{10}
\providecommand{\url}[1]{#1}
\csname url@samestyle\endcsname
\providecommand{\newblock}{\relax}
\providecommand{\bibinfo}[2]{#2}
\providecommand{\BIBentrySTDinterwordspacing}{\spaceskip=0pt\relax}
\providecommand{\BIBentryALTinterwordstretchfactor}{4}
\providecommand{\BIBentryALTinterwordspacing}{\spaceskip=\fontdimen2\font plus
\BIBentryALTinterwordstretchfactor\fontdimen3\font minus \fontdimen4\font\relax}
\providecommand{\BIBforeignlanguage}[2]{{%
\expandafter\ifx\csname l@#1\endcsname\relax
\typeout{** WARNING: IEEEtran.bst: No hyphenation pattern has been}%
\typeout{** loaded for the language `#1'. Using the pattern for}%
\typeout{** the default language instead.}%
\else
\language=\csname l@#1\endcsname
\fi
#2}}
\providecommand{\BIBdecl}{\relax}
\BIBdecl

\bibitem{pnueli1977temporal}
A.~Pnueli, ``The temporal logic of programs,'' in \emph{18th annual symposium on foundations of computer science (SFCS 1977)}.\hskip 1em plus 0.5em minus 0.4em\relax IEEE, 1977, pp. 46--57.

\bibitem{kern1999formal}
C.~Kern and M.~R. Greenstreet, ``Formal verification in hardware design: a survey,'' \emph{ACM Transactions on Design Automation of Electronic Systems (TODAES)}, vol.~4, no.~2, pp. 123--193, 1999.

\bibitem{pnueli2005applications}
A.~Pnueli, ``Applications of temporal logic to the specification and verification of reactive systems: a survey of current trends,'' \emph{Current Trends in Concurrency: Overviews and Tutorials}, pp. 510--584, 2005.

\bibitem{rozier2011linear}
K.~Y. Rozier, ``Linear temporal logic symbolic model checking,'' \emph{Computer Science Review}, vol.~5, no.~2, pp. 163--203, 2011.

\bibitem{belta2007symbolic}
C.~Belta, A.~Bicchi, M.~Egerstedt, E.~Frazzoli, E.~Klavins, and G.~J. Pappas, ``Symbolic planning and control of robot motion [grand challenges of robotics],'' \emph{IEEE Robotics \& Automation Magazine}, vol.~14, no.~1, pp. 61--70, 2007.

\bibitem{belta2019formal}
C.~Belta and S.~Sadraddini, ``Formal methods for control synthesis: An optimization perspective,'' \emph{Annual Review of Control, Robotics, and Autonomous Systems}, vol.~2, no.~1, pp. 115--140, 2019.

\bibitem{banerjee2025challenges}
A.~Banerjee and V.~Choppella, ``Challenges and opportunities in the industrial usage controller synthesis tools: A review of ltl-based opensource tools for automated control design,'' \emph{Results in Control and Optimization}, p. 100511, 2025.

\bibitem{vardi1986automata}
M.~Y. Vardi and P.~Wolper, ``An automata-theoretic approach to automatic program verification,'' in \emph{1st Symposium in Logic in Computer Science (LICS)}.\hskip 1em plus 0.5em minus 0.4em\relax IEEE Computer Society, 1986.

\bibitem{gerth1995simple}
R.~Gerth, D.~Peled, M.~Y. Vardi, and P.~Wolper, ``Simple on-the-fly automatic verification of linear temporal logic,'' in \emph{International Conference on Protocol Specification, Testing and Verification}.\hskip 1em plus 0.5em minus 0.4em\relax Springer, 1995, pp. 3--18.

\bibitem{gastin2001fast}
P.~Gastin and D.~Oddoux, ``Fast {LTL} to {B}{\"u}chi automata translation,'' in \emph{Computer Aided Verification: 13th International Conference, CAV 2001 Paris, France, July 18--22, 2001 Proceedings 13}.\hskip 1em plus 0.5em minus 0.4em\relax Springer, 2001, pp. 53--65.

\bibitem{ogata2009modern}
K.~Ogata \emph{et~al.}, \emph{Modern control engineering}.\hskip 1em plus 0.5em minus 0.4em\relax Prentice Hall India, 2009.

\bibitem{khalil2002nonlinear}
H.~K. Khalil and J.~W. Grizzle, \emph{Nonlinear systems}.\hskip 1em plus 0.5em minus 0.4em\relax Prentice hall Upper Saddle River, NJ, 2002, vol.~3.

\bibitem{gu2003survey}
K.~Gu and S.-I. Niculescu, ``Survey on recent results in the stability and control of time-delay systems,'' \emph{J. Dyn. Sys., Meas., Control}, vol. 125, no.~2, pp. 158--165, 2003.

\bibitem{sastry2011adaptive}
S.~Sastry and M.~Bodson, \emph{Adaptive control: stability, convergence and robustness}.\hskip 1em plus 0.5em minus 0.4em\relax Courier Corporation, 2011.

\bibitem{ames2016control}
A.~D. Ames, X.~Xu, J.~W. Grizzle, and P.~Tabuada, ``Control barrier function based quadratic programs for safety critical systems,'' \emph{IEEE Transactions on Automatic Control}, vol.~62, no.~8, pp. 3861--3876, 2016.

\bibitem{doherty2009temporal}
P.~Doherty, J.~Kvarnstr{\"o}m, and F.~Heintz, ``A temporal logic-based planning and execution monitoring framework for unmanned aircraft systems,'' \emph{Autonomous Agents and Multi-Agent Systems}, vol.~19, pp. 332--377, 2009.

\bibitem{jha2018safe}
S.~Jha, V.~Raman, D.~Sadigh, and S.~A. Seshia, ``Safe autonomy under perception uncertainty using chance-constrained temporal logic,'' \emph{Journal of Automated Reasoning}, vol.~60, pp. 43--62, 2018.

\bibitem{arechiga2019specifying}
N.~Arechiga, ``Specifying safety of autonomous vehicles in signal temporal logic,'' in \emph{2019 IEEE Intelligent Vehicles Symposium (IV)}.\hskip 1em plus 0.5em minus 0.4em\relax IEEE, 2019, pp. 58--63.

\bibitem{tabuada2007approximate}
P.~Tabuada, ``Approximate simulation relations and finite abstractions of quantized control systems,'' in \emph{International Workshop on Hybrid Systems: Computation and Control}.\hskip 1em plus 0.5em minus 0.4em\relax Springer, 2007, pp. 529--542.

\bibitem{pola2008approximately}
G.~Pola, A.~Girard, and P.~Tabuada, ``Approximately bisimilar symbolic models for nonlinear control systems,'' \emph{Automatica}, vol.~44, no.~10, pp. 2508--2516, 2008.

\bibitem{tabuada2008approximate}
P.~Tabuada, ``An approximate simulation approach to symbolic control,'' \emph{IEEE Transactions on Automatic Control}, vol.~53, no.~6, pp. 1406--1418, 2008.

\bibitem{meyer2019hierarchical}
P.-J. Meyer and D.~V. Dimarogonas, ``Hierarchical decomposition of {LTL} synthesis problem for nonlinear control systems,'' \emph{IEEE transactions on automatic control}, vol.~64, no.~11, pp. 4676--4683, 2019.

\bibitem{pruekprasert2021fast}
S.~Pruekprasert, C.~Eberhart, and J.~Dubut, ``Fast synthesis for symbolic self-triggered control under right-recursive {LTL} specifications,'' in \emph{2021 60th IEEE Conference on Decision and Control (CDC)}.\hskip 1em plus 0.5em minus 0.4em\relax IEEE, 2021, pp. 1321--1328.

\bibitem{macoveiciuc2022fly}
E.~Macoveiciuc and G.~Reissig, ``On-the-fly symbolic synthesis with memory reduction guarantees,'' \emph{IEEE Transactions on Automatic Control}, vol.~68, no.~4, pp. 2576--2583, 2022.

\bibitem{ren2024zonotope}
W.~Ren, R.~M. Jungers, and D.~V. Dimarogonas, ``Zonotope-based symbolic controller synthesis for linear temporal logic specifications,'' \emph{IEEE Transactions on Automatic Control}, 2024.

\bibitem{pruekprasert2020symbolic}
S.~Pruekprasert, C.~Eberhart, and J.~Dubut, ``Symbolic self-triggered control of continuous-time non-deterministic systems without stability assumptions for 2-{LTL} specifications,'' in \emph{2020 16th International Conference on Control, Automation, Robotics and Vision (ICARCV)}.\hskip 1em plus 0.5em minus 0.4em\relax IEEE, 2020, pp. 548--554.

\bibitem{Zamani2012}
M.~{Zamani}, G.~{Pola}, M.~{Mazo}, and P.~{Tabuada}, ``Symbolic {M}odels for {N}onlinear {C}ontrol {S}ystems {W}ithout {S}tability {A}ssumptions,'' \emph{IEEE Trans. Autom. Control}, vol.~57, no.~7, pp. 1804--1809, 2012.

\bibitem{hashimoto20}
K.~Hashimoto and D.~V. Dimarogonas, ``Synthesizing {C}ommunication {P}lans for {R}eachability and {S}afety {S}pecifications,'' \emph{IEEE Trans. Autom. Control}, vol.~65, no.~2, pp. 561--576, 2019.

\bibitem{alur1991hybrid}
R.~Alur, C.~Courcoubetis, T.~A. Henzinger, and P.-H. Ho, ``Hybrid automata: An algorithmic approach to the specification and verification of hybrid systems,'' in \emph{International hybrid systems workshop}.\hskip 1em plus 0.5em minus 0.4em\relax Springer, 1991, pp. 209--229.

\bibitem{bak2019hybrid}
S.~Bak, O.~A. Beg, S.~Bogomolov, T.~T. Johnson, L.~V. Nguyen, and C.~Schilling, ``Hybrid automata: from verification to implementation,'' \emph{International Journal on Software Tools for Technology Transfer}, vol.~21, no.~1, pp. 87--104, 2019.

\bibitem{lygeros2003dynamical}
J.~Lygeros, K.~H. Johansson, S.~N. Simic, J.~Zhang, and S.~S. Sastry, ``Dynamical properties of hybrid automata,'' \emph{IEEE Transactions on automatic control}, vol.~48, no.~1, pp. 2--17, 2003.

\bibitem{panagou2015distributed}
D.~Panagou, D.~M. Stipanovi{\'c}, and P.~G. Voulgaris, ``Distributed coordination control for multi-robot networks using lyapunov-like barrier functions,'' \emph{IEEE Transactions on Automatic Control}, vol.~61, no.~3, pp. 617--632, 2015.

\bibitem{baier2008principles}
C.~Baier and J.-P. Katoen, \emph{Principles of model checking}.\hskip 1em plus 0.5em minus 0.4em\relax MIT press, 2008.

\bibitem{chatterjee2014efficient}
K.~Chatterjee and M.~Henzinger, ``Efficient and dynamic algorithms for alternating b{\"u}chi games and maximal end-component decomposition,'' \emph{Journal of the ACM (JACM)}, vol.~61, no.~3, pp. 1--40, 2014.

\bibitem{zielonka1998infinite}
W.~Zielonka, ``Infinite games on finitely coloured graphs with applications to automata on infinite trees,'' \emph{Theoretical Computer Science}, vol. 200, no. 1-2, pp. 135--183, 1998.

\end{thebibliography}
}
{ 

}

% The preferred spelling of the word ``acknowledgment'' in America is without 
% an ``e'' after the ``g''. Avoid the stilted expression ``one of us (R. B. 
% G.) thanks $\ldots$''. Instead, try ``R. B. G. thanks$\ldots$''. Put sponsor 
% acknowledgments in the unnumbered footnote on the first page.

% \begin{thebibliography}{00}
\switchversion{}{
\appendix

\section{Omitted Proofs}

\subsection{Proof of Lemma~\ref{lem:bounded speed}}
\label{proof:lem:bounded speed}

Given a speed-bounded, $\AP$-separated $\Sigma$, we want to show that Equation~\eqref{eq: Z E} holds for $\tau \leq \inf_{p \in \AP} \inf_{p' \in \AP, p' \neq p} \inf \Delta^{-1}(d_{p,p'})$.
It is enough to show that if $\tau \leq \inf \Delta^{-1}(d_{p,p'})$, then Equation~\eqref{eq: Z E} holds for $p$ and $p'$.

We denote by $\partial p$ the topological boundary of $\setcomp{x \in X}{P(x)(p) = \top}$:
\begin{align*}
    \partial p = \left\{x \in X\,\middle|\, \forall \epsilon > 0. \right. & \left. (\exists y^+ \in X.\, \infinitynorm{x-y^+} < \epsilon \land P(y^+)(p) = \top) \land {} \right. \\
    & \left. (\exists y^- \in X.\, \infinitynorm{x-y^-} < \epsilon \land P(y^-)(p) = \bot)\right\}
\end{align*}
and by $d_p \colon X \to \Rnonneg$ the distance to $\partial p$:
$
    d_p(x) = \inf_{y \in \partial p} \infinitynorm{x-y}.
$
First, by speed-boundedness, it is easy to show that $d_p$ is continuous.
It is similarly easy to show that all trajectories $\sigma$ are continuous, again by speed-boundedness.

Given $t$ such that $P(\sigma(t))(p) \neq P(\sigma(t+\tau))(p)$, we have $d_p(\sigma(t)) d_p(\sigma(t+\tau)) \leq 0$, so by continuity there exists $t_p \in [t, t+\tau]$ such that $d_p(t_p) = 0$, i.e., $\sigma(t_p) \in \partial p$ by definition of $\partial p$.
Similarly, if $P(\sigma(t))(p') \neq P(\sigma(t+\tau))(p')$, there exists $t_{p'} \in [t, t+\tau]$ such that $\sigma(t_{p'}) \in \partial p'$.
Without loss of generality, we assume $t_p \leq t_{p'}$.
Then we have $\sigma(t_{p'}) \in \xi(\sigma(t_p), t_{p'} - t_p)$, so by speed-boundedness
\begin{align*}
    \infinitynorm{\sigma(t_{p'}) - \sigma(t_p)} &< \Delta(t_{p'} - t_p) \leq \Delta(\tau) \leq \Delta(\inf \Delta^{-1}(d_{p,p'})) = d_{p,p'}.
\end{align*}
But this contradicts $\AP$-separatedness if $p \neq p'$, so $p = p'$ as desired.

\subsection{Proof of Lemma~\ref{lem:consistency}}
\label{proof:lem:consistency}

For soundness, we proceed by case distinction on $\odot$ in $\psi = \psi_1 \odot \psi_2$, then on case distinction on the values $\actualchopping{\varsigma}{\tau}(k)(\psi_1)$ and $\actualchopping{\varsigma}{\tau}(k)(\psi_2)$.
\begin{itemize}
    \item If $\psi = \psi_1 \land \psi_2$, we only show that if $\actualchopping{\varsigma}{\tau}(k)(\psi_1) = \actualchopping{\varsigma}{\tau}(k)(\psi_2) = A$, then $\actualchopping{\varsigma}{\tau}(k)(\psi) = A$, the other cases are similar.
    We have that for all $t \in [k \tau, (k+1) \tau]$, $\varsigma, t \vDash \psi_1$ and $\varsigma, t \vDash \psi_2$, so for all $t \in [k \tau, (k+1) \tau]$, $\varsigma, t \vDash \psi$.
    \item If $\psi = \psi_1 \Until \psi_2$, we only show that if $\actualchopping{\varsigma}{\tau}(k)(\psi_1) = A$ and $\actualchopping{\varsigma}{\tau}(k)(\psi_2) = N$, then $\actualchopping{\varsigma}{\tau}(k)(\psi) \in \set{A, N}$, the other cases are simpler.
    We have that for all $t \in [k \tau, (k+1) \tau]$, $\varsigma, t \vDash \psi_1$ and $\varsigma, t \nvDash \psi_2$.
    Therefore, either there exists $t > (k+1) \tau$ such that $\varsigma, t \vDash \psi_2$ and for all $t' \in ((k+1) \tau, t')$, $\varsigma, t' \vDash \psi_1$, in which case $\actualchopping{\varsigma}{\tau}(k)(\psi) = A$,
    or for all $t' > (k+1) \tau$ there exists $t' \in ((k+1) \tau, t')$ such that $\varsigma, t' \nvDash \psi_1$, so $\actualchopping{\varsigma}{\tau}(k)(\psi) = N$.
    \item The other cases are similar.
\end{itemize}

For completeness, it is just a matter of exhibiting signals that have the desired property, which is simple and we do not make explicit.

\subsection{Proof of Lemma~\ref{lem:ap-word-extension-accepting-run}}
\label{proof:lem:ap-word-extension-accepting-run}

First, we exhibit an accepting run. We build $\nu_k(\psi)$ by induction on subformulas $\psi$ and simultaneously prove the following properties:
\begin{align}
    & \text{$\psi = p \in \AP$, then for all $k \in \N$, $\nu_k(\psi) = w_k(p)$}
    \label{eq:ap-word-extension-accepting-run:consistency-aps}
    \\
    & \text{$\psi = \psi_1 \odot \psi_2$, then for all $k \in \N$, $\nu_k(\psi) \in c_\odot(\nu_k(\psi_1), \nu_k(\psi_2))$}
    \label{eq:ap-word-extension-accepting-run:consistency-induction}
    \\
    & \text{for all $k \in \N$, if $\nu_k(\psi) \in \set{E, Z}$,}
    \nonumber \\
    & \qquad \text{then there exists $ p \in \AP \cap \sub(\psi)$ such that $\nu_k(p) = \nu_k(\psi)$}
    \label{eq:ap-word-extension-accepting-run:ZE-subformula} \\
    & \text{for all $k \in \N$, $\nu_k(\psi) \in \set{A, E}$ iff $\nu_{k+1}(\psi) \in \set{A, Z}$}
    \label{eq:ap-word-extension-accepting-run:transition}
\end{align}
\begin{itemize}
    \item
        If $\psi = p$, we define $\nu_k(p) = w_k(p)$.
        Equations~\eqref{eq:ap-word-extension-accepting-run:consistency-aps} and~\eqref{eq:ap-word-extension-accepting-run:ZE-subformula} obviously hold, Equation~\eqref{eq:ap-word-extension-accepting-run:consistency-induction} is void, and Equation~\eqref{eq:ap-word-extension-accepting-run:transition} holds by Equation~\eqref{eq:ap-word-extension-accepting-run:ap-transition}.
    \item
        If $\psi = \psi_1 \land \psi_2$, we define $\nu_k(\psi_1 \land \psi_2) = c_\land(\nu_k(\psi_1), \nu_k(\psi_2))$, by which we mean the unique element of $c_\land(\nu_k(\psi_1), \nu_k(\psi_2))$.
        Equations~\eqref{eq:ap-word-extension-accepting-run:consistency-aps} is void and Equation~\eqref{eq:ap-word-extension-accepting-run:consistency-induction} holds by construction.
        
        We now want to show Equation~\eqref{eq:ap-word-extension-accepting-run:ZE-subformula}, so we assume that $\nu_k(\psi) \in \set{Z, E}$.
        By scrutinizing Fig.~\ref{fig:consistency}, we see that if $\nu_k(\psi) \in \set{Z, E}$, then $\nu_k(\psi_1) = \nu_k(\psi)$ or $\nu_k(\psi_2) = \nu_k(\psi)$ by Lemma~\ref{lem:consistency}.
        Without loss of generality, we assume that $\nu_k(\psi_1) = \nu_k(\psi)$.
        By induction hypothesis, we know that there exists $p \in \sub(\psi_1) \subseteq \sub(\psi)$ such that $\nu_k(p) = \nu_k(\psi_1) = \nu_k(\psi)$ as desired.

        We now want to show Equation~\eqref{eq:ap-word-extension-accepting-run:transition}.
        By scrutinizing Fig.~\ref{lem:consistency}, we see that $\nu_k(\psi)$ is in $\set{A, E}$ iff both $\nu_k(\psi_1)$ and $\nu_k(\psi_2)$ are in $\set{A, E}$.
        By induction hypothesis, this is equivalent to both $\nu_{k+1}(\psi_1)$ and $\nu_{k+1}(\psi_2)$ being in $\set{A, Z}$, which is equivalent to $\nu_{k+1}(\psi)$ being in $\set{A, Z}$ again by scrutinizing Fig.~\ref{lem:consistency}.
    \item
        If $\psi = \psi_1 \lor \psi_2$, we define $\nu_k(\psi_1 \lor \psi_2) = \neg (c_\land(\neg \nu_k(\psi_1), \neg \nu_k(\psi_2)))$, where $\neg \colon \OOO \to \OOO$ is the involution such that $\neg A = N$ and $\neg Z = E$, from which all properties follow directly by the same arguments as above.
    \item
        If $\psi = \psi_1 \Until \psi_2$, we define
        \[
            \nu_k(\psi_1 \Until \psi_2) =
            \begin{cases}
                A & \text{if $\nu_k(\psi_2) = A$ or} \\
                & \text{\phantom{if} $\exists k' > k.\ \nu_{k'}(\psi_2) \neq N$ and $\forall k \leq k'' < k'.\ \nu_{k''}(\psi_1) = A$} \\
                Z & \text{if $\nu_k(\psi_2) = Z$ and} \\
                & \text{\phantom{if} $\forall k' > k.\ (\nu_{k'}(\psi_2) \neq N \Rightarrow \exists k \leq k'' < k'.\ \nu_{k''}(\psi_1) \neq A)$} \\
                E & \text{if $(\nu_k(\psi_2) = E \text{ and } \nu_k(\psi_1) \in \set{E, N})$ or}\\
                & \text{\phantom{if}($\nu_k(\psi_2) = N$, $\nu_k(\psi_1) = E$, and $\exists k' > k.\ \nu_{k'}(\psi_2) \neq N$} \\
                & \text{\phantom{if (}and $\forall k < k'' < k'.\ \nu_{k''}(\psi_1) = A$)} \\
                N & \text{if $\forall k' \geq k.\ \nu_{k'}(\psi_2) \neq N \Rightarrow \exists k \leq k'' < k'.\ \nu_{k''}(\psi_1) \neq A$.}
            \end{cases}
        \]
        Equation~\eqref{eq:ap-word-extension-accepting-run:consistency-aps} is void.
        It is not directly obvious that $\nu_k(\psi)$ is well-defined, so we first show that it is, as well as Equation~\eqref{eq:ap-word-extension-accepting-run:consistency-induction}, by case distinction on $\nu_k(\psi_1)$ and $\nu_k(\psi_2)$.
        \begin{itemize}
            \item
                If $\nu_k(\psi_2) = A$, then the case for $\nu_k(\psi) = A$ holds, the cases for $\nu_k(\psi) \in \set{Z, E}$ obviously do not hold, and the case for $\nu_k(\psi) = N$ does not hold since for $k' = k$, $\nu_{k'}(\psi_2) = A \neq N$, but there is not $k < k'' \leq k' = k$.
                Therefore, $\nu_k(\psi_2) = A$ is well-defined, and Equation~\eqref{eq:ap-word-extension-accepting-run:consistency-induction} holds by scrutinizing Fig.~\ref{fig:consistency}.
            \item
                If $\nu_k(\psi_1) = A$, $\nu_k(\psi_2) = Z$, then the case for $\nu_k(\psi) = A$ holds iff $\exists k' > k.\ \nu_{k'}(\psi_2) \neq N$ and $\forall k \leq k'' < k'.\ \nu_{k''}(\psi_1) = A$, the case for $\nu_k(\psi) = Z$ holds iff $\forall k' > k.\ (\nu_{k'}(\psi_2) \neq N \Rightarrow \exists k \leq k'' < k'.\ \nu_{k''}(\psi_1) \neq A)$, which is the negation of the above, so exactly one of them holds.
                Moreover, the case for $\nu_k(\psi) = E$ obviously does not hold, and the case for $\nu_k(\psi) = N$ does not hold (take $k' = k$).
                Therefore, $\nu_k(\psi) \in \set{A, Z}$ is well-defined and Equation~\eqref{eq:ap-word-extension-accepting-run:consistency-induction} holds.
            \item
                If $\nu_k(\psi_1) = E$, $\nu_k(\psi_2) = Z$, then by induction hypothesis, Equation~\eqref{eq:ap-word-extension-accepting-run:ZE-subformula} holds for $\psi_1$ and $\psi_2$, so there is $p \in \AP \cap \sub(\psi_1)$ such that $w_k(p) = \nu_k(p) = \nu_k(\psi_1) = E$ and $p' \in \AP \cap \sub(\psi_2)$ such that $w_k(p') = \nu_k(p') = \nu_k(\psi_2) = Z$, which contradicts Equation~\eqref{eq:ap-word-extension-accepting-run:no-ZE}, so this case never happens.
            \item
                The other cases are similar to the three cases above.
        \end{itemize}
        To prove Equation~\eqref{eq:ap-word-extension-accepting-run:ZE-subformula}, we know by Lemma~\ref{lem:consistency} that $\nu_k(\psi) \in \set{Z, E}$ implies that $\nu_k(\psi) = \nu_k(\psi_1)$ or $\nu_k(\psi) = \nu_k(\psi_2)$.
        Without loss of generality, if we assume $\nu_k(\psi) = \nu_k(\psi_1)$, then by induction hypothesis there exists $p \in \AP \cap \sub(\psi_1) \subseteq \AP \cap \sub(\psi)$ such that $\nu_k(p) = \nu_k(\psi_1) = \nu_k(\psi)$.
        
        Now, we want to show that Equation~\eqref{eq:ap-word-extension-accepting-run:transition} holds.
        We can show that
        \begin{align*}
            \nu_k(\psi) \in \set{A, E}
            & \iff \nu_k(\psi_2) \in \set{A, E} \lor {} \\
            & \phantom{\iff\,}(\nu_k(\psi_1) \in \set{A, E} \land {}\\
            & \phantom{\iff\, \nu_k} \exists k' > k. (\nu_{k'}(\psi_2) \neq N \land \forall k \leq k'' < k'.\ \nu_{k''}(\psi_1) = A)), \\
            \nu_k(\psi) \in \set{A, Z}
            & \iff \nu_k(\psi_2) \in \set{A, Z} \lor {}\\
            & \phantom{\iff\,} \exists k' > k. (\nu_{k'}(\psi_2) \neq N \land \forall k \leq k'' < k.\ \nu_{k''}(\psi_1) = A).
        \end{align*}
        Therefore,
        \begin{align*}
            \nu_{k+1}(\psi) \in \set{A, Z}
            & \iff \nu_{k+1}(\psi_2) \in \set{A, Z} \lor {} \\
            & \phantom{\iff\,} \exists k' > k+1. (\nu_{k'}(\psi_2) \neq N \land {} \\
            & \phantom{\iff \exists k' > k+1. (}\forall k+1 \leq k'' < k.\ \nu_{k''}(\psi_1) = A) \\
            & \iff \nu_{k}(\psi_2) \in \set{A, E} \lor {} \\
            & \phantom{\iff\,} \exists k' > k+1. (\nu_{k'}(\psi_2) \neq N \land {} \\
            &\phantom{\iff \exists k' > k+1. (} \forall k+1 \leq k'' < k.\ \nu_{k''}(\psi_1) = A)
        \end{align*}
        It is thus obvious that $\nu_k(\psi) \in \set{A, E}$ implies that $\nu_{k+1}(\psi) \in \set{A, Z}$.
        Moreover, if there exists $k' > k+1$ such that $\nu_{k'}(\psi_2) \neq N$ and for all $k+1 \leq k'' < k$, $\nu_{k''}(\psi_1) = A$, then $\nu_k(\psi_1) \in \set{A, E}$ by~\eqref{eq:ap-word-extension-accepting-run:transition} on $\psi_1$, so $\nu_{k+1}(\psi) \in \set{A, Z}$ implies $\nu_k(\psi) \in \set{A, E}$.
    \item
        If $\psi = \psi_1 \Release \psi_2$, we define $\nu_k(\psi_1 \lor \psi_2) = \neg (c_{\Until}(\neg \nu_k(\psi_1), \neg \nu_k(\psi_2)))$, from which all properties follow directly by the same arguments as above.
\end{itemize}
By Equations~\eqref{eq:ap-word-extension-accepting-run:consistency-induction} and~\eqref{eq:ap-word-extension-accepting-run:transition}, $\nu_0 \nu_1 \ldots$ is a run of $\AAA_\phi$, and $\nu_k(p) = w_k(p)$ for all $k \in \N$ and $p \in \AP$ by Equation~\eqref{eq:ap-word-extension-accepting-run:consistency-aps}.

We now prove that $\nu_0 \nu_1 \ldots$ is accepting, i.e., that for each $\psi = \psi_1 \Until \psi_2 \in \sub(\phi)$, it visits $F_\psi$ infinitely often (and similarly for $\psi_1 \Release \psi_2$).
We only prove the case $\psi_1 \Until \psi_2$, the other case is symmetric.
By contradiction, assume that $\nu_0 \nu_1 \ldots$ stops visiting $F_\psi$ after index $k_0$, then for all $k > k_0$, $\nu_k(\psi) = A$ and $\nu_k(\psi_2) = N$.
But $\nu_k(\psi) = A$ and $\nu_k(\psi_2) = N$ iff $\nu_k(\psi_2) = N$ and $\exists k' > k. (\nu_{k'}(\psi_2) \neq N \land \forall k \leq k'' <k'.\ \nu_k(\psi_1) = A)$, so in particular $\nu_k(\psi_2) = N$ for all $k > k_0$ and $\exists k' > k_0 + 1. \nu_{k'}(\psi_2) \neq N$, hence a contradiction, as desired.

Finally, we prove that, if $\nu'_0 \nu'_1 \ldots$ is an accepting run such that for all $k \in \N$ and $p \in \AP$, $\nu'_k(p) = w_k(p)$, then $\nu'_k = \nu_k$.
We prove by induction on $\psi \in \sub(\phi)$ that for all $k \in \N$, $\nu'_k(\psi) = \nu_k(\psi)$.
\begin{itemize}
    \item If $\psi = p$, then the result is obvious.
    \item If $\psi = \psi_1 \land \psi_2$ or $\psi = \psi_1 \lor \psi_2$, then the result holds by induction hypothesis on $\psi_1$ and $\psi_2$, using the fact that $c_\land(o_1, o_2)$ and $c_\lor(o_1, o_2)$ contain a unique element.
    \item If $\psi = \psi_1 \Until \psi_2$, then by induction hypothesis $\nu'_k(\psi_1) = \nu_k(\psi_1)$ and $\nu'_k(\psi_2) = \nu_k(\psi_2)$, then we proceed by case distinction on $\nu_k(\psi_1)$ and $\nu_k(\psi_2)$.
    In most cases, $c_{\Until}(\nu_k(\psi_1), \nu_k(\psi_2))$ contains only one element, so the result holds directly, so we only detail the other cases.
    \begin{itemize}
        \item If $\nu_k(\psi_1) = A$, $\nu_k(\psi_2) = Z$, and $\nu_k(\psi) = A$, then because $\nu_k(\psi) = A$, we know that either $\nu_k(\psi_2) = A$ (which is not true) or there exists $k' > k$ such that $\nu_{k'}(\psi_2) \neq N$ and for all $k \leq k'' < k'$, $\nu_{k''}(\psi_1) = A$.
        Let $k'$ be the smallest such index, then for all $k < k'' < k'$, $\nu_{k''}(\psi_2) = N$, and $\nu_{k'}(\psi_2) \in \set{E, N}$ by~\eqref{eq:ap-word-extension-accepting-run:transition}, so $\nu_{k'}(\psi_2) = E$.
        Therefore, we have for all $k < k'' < k'$, $\nu'_{k''}(\psi_1) = A$ and $\nu'_{k''}(\psi_2) = N$, so $\nu'_{k''}(\psi) \in \set{A, N}$.
        We also have $\nu'_{k'}(\psi_1) \in \set{A, Z}$ by~\eqref{eq:ap-word-extension-accepting-run:transition} and $\nu'_{k'}(\psi_2) = E$, so $\nu'_{k'}(\psi_1) = A$ by~\eqref{eq:ap-word-extension-accepting-run:no-ZE}.
        Therefore, $\nu'_{k'}(\psi) = A$, whence $\nu'_{k''}(\psi) = A$ for all $k < k'' < k'$ (by induction on $k' - k''$).
        Finally, because $\nu'_k(\psi) \in \set{A, Z}$, $\nu'_k(\psi) = A$ by~\eqref{eq:ap-word-extension-accepting-run:transition}.
        \item If $\nu_k(\psi_1) = A$, $\nu_k(\psi_2) = Z$, and $\nu_k(\psi) = Z$, then because $\nu_k(\psi) = Z$, we know that $\nu_k(\psi_2) = Z$ and for all $k' > k$, if $\nu_{k'}(\psi_2) \neq N$, then there exists $k \leq k'' < k'$ such that $\nu_{k''}(\psi_1) \neq A$.
        \begin{itemize}
            \item If for all $k' > k$, $\nu_{k'}(\psi_2) = N$, and we assume that $\nu'_k(\psi) \neq Z$, then $\nu'_k(\psi) = A$, then because $\nu'_{k'}(\psi_1) = N$, we have $\nu'_{k'}(\psi) \in \set{A, E, N}$ according to Lemma~\ref{lem:consistency}.
            By induction and~\eqref{eq:ap-word-extension-accepting-run:transition}, we have that $\nu'_{k'}(\psi) = A$.
            Therefore, we have for all $k' > k$ that $\nu'_{k'}(\psi) = A$ and $\nu'_{k'}(\psi_2) = N$, which contradicts the fact that $\nu'_0 \nu'_1 \ldots$ is accepting.
            \item Otherwise, let $k' > k$ be the minimal index such that $\nu_{k'}(\psi_2) \neq N$, then we have that for all $k < k'' < k'$, $\nu_{k''}(\psi_2) = N$, and $\nu_{k'}(\psi_2) \in \set{E, N}$ by~\eqref{eq:ap-word-extension-accepting-run:transition}, hence $\nu_{k'}(\psi_2) = E$.
            Let $k \leq k'' < k'$ be the minimal index such that $\nu_{k''}(\psi_1) \neq A$, then for all $k \leq k''' < k''$, $\nu_{k'''}(\psi_1) = A$, and $\nu_{k''}(\psi_1) \in \set{A, Z}$ by~\eqref{eq:ap-word-extension-accepting-run:transition}, hence $\nu_{k''}(\psi_1) = Z$.
            Now, for all $k < k''' < k''$, $\nu'_{k'''}(\psi_1) = A$ and $\nu'_{k'''}(\psi_2) = N$, hence $\nu'_{k'''}(\psi) \in \set{A, N}$ by Lemma~\ref{lem:consistency}.
            If we assume that $\nu'_k(\psi) = A$, then $\nu'_{k'''}(\psi) = A$ by induction using~\eqref{eq:ap-word-extension-accepting-run:transition}.
            Moreover, $\nu'_{k''}(\psi_1) = Z$ and $\nu'_{k''}(\psi_2) = N$, hence $\nu'_{k''}(\psi) = N$, which contradicts $\nu'_{k''-1} = A$.
        \end{itemize}
        \item If $\nu_k(\psi_1) = A$, $\nu_k(\psi_2) = N$, and $\nu_k(\psi) = A$, then because $\nu_k(\psi) = A$, we know that either $\nu_k(\psi_2) = A$ (which is not true) or there exists $k' > k$ such that $\nu_{k'}(\psi_2) \neq N$ and for all $k \leq k'' < k'$, $\nu_{k''}(\psi_1) = A$.
        Let $k'$ be the minimal such index, then for all $k < k'' < k'$, $\nu_{k''}(\psi_2) = N$, and $\nu_{k'}(\psi_2) \in \set{E, N}$ by~\eqref{eq:ap-word-extension-accepting-run:transition}, hence $\nu_{k'}(\psi_2) = E$.
        Moreover, $\nu_{k'}(\psi_1) \in \set{A, Z}$ by Lemma~\ref{lem:consistency}, hence $\nu_{k'}(\psi_1) = A$ by~\eqref{eq:ap-word-extension-accepting-run:no-ZE}.
        We thus have that, $\nu'_{k'}(\psi_1) = A$, $\nu'_{k'}(\psi_2) = E$, so $\nu'_{k'}(\psi) = A$ by Lemma~\ref{lem:consistency}, and for all $k < k'' < k'$, $\nu'_{k''}(\psi_1) = A$, $\nu'_{k''}(\psi_2) = N$, so $\nu'_{k''}(\psi) \in \set{A, N}$, thus $\nu'_{k''}(\psi) = A$ by induction on $k' - k''$ using~\eqref{eq:ap-word-extension-accepting-run:transition}.
        By Lemma~\ref{lem:consistency}, $\nu'_k(\psi) \in \set{A, N}$, hence $\nu'_k(\psi) = A$ by~\eqref{eq:ap-word-extension-accepting-run:transition}.
        \item If $\nu_k(\psi_1) = A$, $\nu_k(\psi_2) = N$, and $\nu_k(\psi) = N$, then because $\nu_k(\psi) = N$, we know that for all $k' > k$, if $\nu_{k'}(\psi_2) \neq N$, then there exists $k \leq k'' < k'$ such that $\nu_{k''}(\psi_1) \neq A$.
        \begin{itemize}
            \item If for all $k' > k$, $\nu_{k'}(\psi_2) = N$, then for all $k' > k$, $\nu'_{k'}(\psi) \neq Z$ by Lemma~\ref{lem:consistency}.
            If $\nu_k(\psi) = A$, then for all $k' > k$, $\nu'_{k'}(\psi) = A$ by induction on $k'$ using~\eqref{eq:ap-word-extension-accepting-run:transition}, therefore $\nu'_0 \nu'_1 \ldots$ is not accepting.
            \item Otherwise, let $k' > k$ be the minimal index such that $\nu_{k'}(\psi_2) \neq N$, then for all $k \leq k'' < k'$, $\nu_{k''}(\psi_2) = N$, and $\nu_{k'}(\psi_2) \in \set{E, N}$ by~\eqref{eq:ap-word-extension-accepting-run:transition}, hence $\nu_{k'}(\psi_2) = E$.
            Moreover, let $k \leq k'' < k'$ by the minimal index such that $\nu_{k''}(\psi_1) \neq A$, then for all $k \leq k''' < k''$, $\nu_{k'''}(\psi_1) = A$, and $\nu_{k''}(\psi_1) \in \set{A, Z}$ by~\eqref{eq:ap-word-extension-accepting-run:transition}, hence $\nu_{k''}(\psi_1) = Z$.
            Therefore, $\nu'_{k''}(\psi_1) = Z$ and $\nu'_{k''}(\psi_2) = N$, so $\nu'_{k''}(\psi) = N$ by Lemma~\ref{lem:consistency}.
            For all $k \leq k''' < k''$, $\nu'_{k'''}(\psi_1) = A$ and $\nu'_{k'''}(\psi_2) = N$, so $\nu'_{k'''}(\psi) \in \set{A, N}$, and therefore $\nu'_{k'''}(\psi) = N$ by induction on $k'' - k'''$, hence $\nu'_k(\psi) = N$.
            \qedhere
        \end{itemize}
    \end{itemize}
\end{itemize}

\subsection{Proof of Corollary~\ref{cor:accepting-word-signal}}
\label{proof:cor:accepting-word-signal}

It is enough to show that $\actualchopping{\varsigma}{\tau}$ satisfies the definitions in the proof of Lemma~\ref{lem:ap-word-extension-accepting-run}, which we show for all $k \in \N$ and $\psi \in \sub(\phi)$ by induction on $\psi$.
\begin{itemize}
    \item
        If $\psi = p$, then the result holds directly.
    \item
        If $\psi = \psi_1 \land \psi_2$, then
        \begin{align*}
            \actualchopping{\varsigma}{\tau}(k)(\psi) = A
            & \iff \forall t \in [k \tau, (k+1) \tau].\, \varsigma, t \vDash \psi \\
            & \iff \forall t \in [k \tau, (k+1) \tau].\, \varsigma, t \vDash \psi_1 \land \psi_2 \\
            & \iff \forall t \in [k \tau, (k+1) \tau].\, \varsigma, t \vDash \psi_1 \text{ and } t \vDash \psi_2 \\
            & \iff \forall t \in [k \tau, (k+1) \tau].\, \varsigma, t \vDash \psi_1 \text{ and } \\
            & \phantom{\iff\ \ } \forall t \in [k \tau, (k+1) \tau].\, \varsigma, t \vDash \psi_2 \\
            & \iff \actualchopping{\varsigma}{\tau}(k)(\psi_1) = A \text{ and } \actualchopping{\varsigma}{\tau}(k)(\psi_2) = A
        \end{align*}
        as desired.
        The other cases are similar.
    \item
        If $\psi = \psi_1 \Until \psi_2$, then
        \begin{align}
            \actualchopping{\varsigma}{\tau}(k)(\psi) = A
            & \iff \forall t \in [k \tau, (k+1) \tau].\, \varsigma, t \vDash \psi
            \nonumber \\
            & \iff \forall t \in [k \tau, (k+1) \tau].\, \varsigma, t \vDash \psi_1 \Until \psi_2
            \nonumber \\
            & \iff \forall t \in [k \tau, (k+1) \tau].\,
            \nonumber \\
            & \phantom{\iff\,\quad} \exists t' \geq t.\, (\varsigma, t' \vDash \psi_2 \land \forall t'' \in [t, t').\, \varsigma, t'' \vDash \psi_1).
            \label{eq:accepting-word-signal:until:A:lhs}
        \end{align}
        On the other hand
        \begin{align}
            & \actualchopping{\varsigma}{\tau}(k)(\psi_2) = A \text{ or } \exists k' > k. (\actualchopping{\varsigma}{\tau}(k')(\psi_2) \neq N \text{ and } \forall k \leq k'' < k'.\, \actualchopping{\varsigma}{\tau}(k'') = A) \nonumber \\
            &\iff \forall t \in [k \tau, (k+1) \tau].\, \varsigma, t \vDash \psi_2 \text{ or }
            \nonumber \\
            &\phantom{\iff\ \ } \exists k' > k. (\exists t' \in [k' \tau, (k' + 1) \tau].\, \varsigma, t' \vDash \psi_2 \text{ and }
            \nonumber \\
            &\phantom{\iff\ \ \exists k' > k. (} \forall k \leq k'' < k'.\, \forall  t'' \in [k'' \tau, (k''+1) \tau].\, \varsigma, t'' \vDash \psi_1).
            \label{eq:accepting-word-signal:until:A:rhs}
        \end{align}
        We want to show that~\eqref{eq:accepting-word-signal:until:A:lhs} and~\eqref{eq:accepting-word-signal:until:A:rhs} are equivalent.
        We distinguish four cases:
        \begin{itemize}
            \item If $\varsigma, k \tau \vDash \psi_2$ and $\varsigma, (k+1) \tau \vDash \psi_2$, then by Assumption~\ref{asm: restriction BSZE}.\eqref{eq: no B,S}, for all $t \in [k \tau, (k+1) \tau]$, $\varsigma, t \vDash \psi_2$, so both~\eqref{eq:accepting-word-signal:until:A:lhs} and~\eqref{eq:accepting-word-signal:until:A:rhs} hold.
            \item If $\varsigma, k \tau \vDash \psi_2$ and $\varsigma, (k+1) \tau \nvDash \psi_2$, then by Assumption~\ref{asm: restriction BSZE}.\eqref{eq: Z E}, there exists $t \in (k \tau, (k+1) \tau)$ such that $\varsigma, t' \vDash \psi_2$ for all $t' \in [k \tau, t)$ and $\varsigma, t' \nvDash \psi_2$ for all $t' \in (t, (k+1) \tau]$.
            If~\eqref{eq:accepting-word-signal:until:A:lhs} holds, then there exists $t' > (k+1) \tau$ such that $\varsigma, t' \vDash \psi_2$ and $\varsigma, t'' \vDash \psi_1$ for all $t'' \in [t, t')$.
            Therefore so does~\eqref{eq:accepting-word-signal:until:A:rhs} by taking $k' = \lceil t / \tau \rceil$ and $t' = k' \tau$.

            Conversely, if~\eqref{eq:accepting-word-signal:until:A:rhs} holds, then there exists $k' > k$ and $t' \in [k' \tau, (k'+1) \tau]$, $\varsigma, t' \vDash \psi_2$, and for all $k \leq k'' < k'$ and $t'' \in [k'' \tau, (k''+1) \tau]$, $\varsigma, t'' \vDash \psi_1$.
            In particular, $\varsigma, ((k'-1)+1) \tau \vDash \psi_1$, so $\actualchopping{\varsigma}{\tau}(k')(\psi_1) \in \set{A, Z}$.
            If $\actualchopping{\varsigma}{\tau}(k')(\psi_1) = A$, then~\eqref{eq:accepting-word-signal:until:A:lhs} also directly holds.
            If $\actualchopping{\varsigma}{\tau}(k')(\psi_1) = Z$, then by Assumption~\ref{asm: restriction BSZE}.\eqref{eq: Z E}, because $\varsigma, t' \vDash \psi_2$, $\varsigma, t'' \vDash \psi_2$ for all $t'' \in [k' \tau, t']$ (because $\actualchopping{\varsigma}{\tau}(k')(\psi_2) \neq E$ and formulas can only change values at one time during a time interval of length $\tau$).
            Therefore,~\eqref{eq:accepting-word-signal:until:A:lhs} holds.
            \item The other cases are similar to one of the two cases above.
        \end{itemize}
        The other cases are similar.
    \item
        If $\psi = \psi_1 \lor \psi_2$ or $\psi = \psi_1 \Release \psi_2$, the result follows using the same arguments as above.
        \qedhere
\end{itemize}
}

\end{document}